\documentclass[a4paper,UKenglish, thm-restate,DIV=10]{article}
\usepackage[a4paper, left=25mm, right=25mm, top=30mm,bottom=30mm]{geometry}
\usepackage{graphicx}
\usepackage{multirow}
\usepackage{booktabs}
\usepackage{color}
\usepackage{capt-of}
\usepackage{subcaption}
\usepackage{float}
\usepackage{stmaryrd}
\usepackage{amsmath}
\usepackage{mathtools}
\usepackage{mathrsfs}
\usepackage{tikz} 
\usepackage{sidecap}
\usepackage[]{algorithm2e}
\SetAlFnt{\small\sffamily}
\setlength{\parindent}{0em}
\usepackage{amssymb}
\usepackage{amsthm}
\usepackage[strings]{underscore}
\usepackage{tikz}
\usepackage{wrapfig}
\usepackage[toc,page]{appendix}
\usetikzlibrary{arrows,automata}
\newtheorem{theorem}{Theorem}[section]
\newtheorem{corollary}{Corollary}[theorem]
\newtheorem{lemma}[theorem]{Lemma}

\newtheorem{definition}{Definition}[section]
\newcommand{\algoname}[1]{\textnormal{\textsc{#1}}}
\newcommand{\algonamebf}[1]{\textnormal{\textsc{\textbf{#1}}}}

\newcommand{\state}[1]{{\small \textsf{#1}}}
\newcommand{\remove}[1]{}
\newcommand{\etal}{et al.}

\newcommand{\ep}{\varepsilon}
\newcommand{\cM}{{\mathcal{M}}}
\newcommand{\cS}{{\mathcal{S}}}
\newcommand{\polylog}{{\text{\,polylog }}}

\providecommand{\keywords}[1]{\textbf{\textit{Index terms---}} #1}

%\title{Power Constrained Parallel Queuing with Contention}
\title{Contention resolution on a restrained channel}

\begin{document}

\author{Elijah~Hradovich, Marek~Klonowski and Dariusz R.~Kowalski}
\date{}
% E. Hradovich and M.Klonowski  are from Department of Fundamental Problems of Technology,  
%Wroc{\l}aw Univeristy of Science and Technology, Poland. 
% \protect\\
% note need leading \protect in front of \\ to get a newline within \thanks as
% \\ is fragile and will error, could use \hfil\break instead.
%E-mail:
%\{Marek.Klonowski,Ilya.Hradovich\}@pwr.edu.pl
%\IEEEcompsocthanksitem 
%D.R.Kowalski is from Department of Computer Science, University of Liverpool, Liverpool L69 3BX, UK.
%\protect\\
%E-mail: D.Kowalski@liverpool.ac.uk}% <-this % stops an unwanted space

%\thanks{Supported by Polish National Science Center (NCN) - grant \mbox{UMO-2015/17/B/ST6/01897} (first and second author) and
%\mbox{UMO-2017/25/B/ST6/02553} (third author)}}
%\markboth{Journal of \LaTeX\ Class Files,~Vol.~14, No.~8, August~2015}%
%{Shell \MakeLowercase{\textit{et al.}}: Bare Demo of IEEEtran.cls for Computer Society Journals}

\maketitle

\begin{abstract}%
We examine deterministic contention resolution on a multiple-access channel
% for a scenario 
when packets are injected continuously by an adversary %an average 
to the buffers of the stations
at rate $\rho$ packages per round.
The aim is to successfully transmit packets and maintain system stability, that is, bounded queues even in infinite perspective of execution.
The largest injection rate for which a given contention resolution algorithm guaranties stability is called (algorithm's) throughput.
In contrast to the previous work, we assume that there is a strict limit $k$ on the total number of stations allowed to transmit or listen to the channel at a given time, that can never be exceeded; we call such channel a {\em $k$-restrained channel}. 
%We study how this constraint influences the quality of services with particular focus on stability. 
%%i.e., keeping the sizes of queues of packets in buffers bounded even in an infinite execution. 
%
%We show that in the regime of deterministic algorithms, the significance of power restriction depends strongly on 
%communication capabilities of broadcasting protocols.
%%model of communication. 
%For the adaptive and full-sensing protocols, wherein stations may substantially adapt their behavior to the injection pattern, one can 
%construct efficient algorithms using very small amounts of power without sacrificing throughput or stability of the system. 
%
%In particular, 
We construct adaptive and full sensing protocols with throughput $1$ and $1-1/n$,
% that are stable for $\rho=1$ and any $\rho<1$, 
respectively, in a constant-restrained channel. 
In other words for both classes of algorithms we do not have to sacrifice the throughput, even if the channel is very restrained. 
%even for worst case (adversarial) injection patterns. 
%
%Surprisingly, 
For the case of acknowledgment based algorithms that cannot adapt to the situation on the channel (i.e., their transmission schedules are fixed in advance), we show that a restrained channel causes throughput reduction by factor at least $\min \{ \frac{k}{n}, \frac{1}{3\log n }\}$ and
at most $\Theta(\frac{k}{n \log^2 (n)})$. Our acknowledgment-based algorithm is based on a newly introduced and thoroughly studied, for its own independent interest, 
$k$-light selector. 
%That is, for this class of protocols in order to preserve stability the injection rate $\rho$ must be significantly lower.
%significantly lower abilities of the adversary by decreasing parameter $\rho$. 
%
We support our theoretical analysis by simulation results of algorithms constructed in the paper 
%We depict how they work for 
in systems of moderate, realistic sizes and scenarios. We also 
%provide a comprehensive simulation to 
compare our algorithms with backoff algorithms, which are common in real-world channel implementations, in terms of throughput, queue sizes and channel restrain needed for achieving 
%stable 
their 
throughputs.

\end{abstract}

\keywords{shared channel, multiple-access channel, broadcasting, parallel queuing, adversarial queueing, 
channel restrain, 
%energy, power, 
throughput, stability}
% Note that keywords are not normally used for peerreview papers.

% \IEEEPARstart{T}{his} demo file is intended to serve as a ``starter file''
% for IEEE Computer Society journal papers produced under \LaTeX\ using
% IEEEtran.cls version 1.8b and later.
% I wish you the best of success.
% 
% \hfill mds
%  
% \hfill August 26, 2015

\section{Introduction}  
The fundamental problem of access to a single medium by multiple devices 
%shared between 
is faced by many
different kinds of distributed systems, including but not limited to: processor transactions on memory, radio networks communication, services access to a shared resource on machines or data-centers.
The medium constraints a system by collisions or denial of service when more than one device attempts to use it simultaneously.
However, the number of simultaneous attempts also matters, both in practice and, as we will show in this work, in theory.
In practice, channel access is constrained by physical factors, such as power, energy or availability. 
%in a given time.
%The one can observe, that 
First, the "energy" spent by 
%other 
devices during such unsuccessful-for-most attempts 
%this period in attempts to use the medium 
is usually wasted. 
%or even decrease performance of the system by causing collisions. 
Second, for a case of multi-hop radio communication, too many attempts to transmit by neighbors may not only cause a collision in the considered node,
but also in nodes of further distance.
%mkdistance $2$.
%neighboring channels. there is an additional consideration of generating noise which degrades performance of neighboring single-hop networks. 
%On the other hand 
%Another motivating 
Third example,
hardware systems are designed with a spike (maximal) power use in mind to prevent meltdown or blackout. 
%mkl \dk{Finally, too many encounters may also increase chance of security/privacy breach.}

%mkl Since hardware is limited by space, such a design 
%mkl inevitably takes some of the device component %mkl capacity. 
The above examples
%observations 
have led us to an investigation of restrained-channels, as natural extension of the classical shared-channel communication model with $n$  devices (stations) attached to a single communication medium.
%called a multiple-access channel, and 
Due to constrains of the channel at most one station can successfully transmit a single packet during one round, and an additional
restrain limits the number of simultaneous activities (transmissions or listenings) on the channel.
We focus on dynamic scenario when an adversary injects (in an arbitrary way) at most $0< \rho \leq 1$  packets per round, on average, to stations' buffers. 
The primary goal is to design an algorithm that guarantees \textit{stability}, that is, a property that the sizes of queues in buffers stay bounded for the highest possible injection rate $\rho$, which we will be calling {\em throughput}. Another important aim is to minimize channel restrain to achieve the maximum possible (i.e., in non-restrained classical channel) throughput.
In other words: when and how the "restrain orders"\footnote{%
There is a loose analogy of $k$-restrained channel and limitations on the number of people gathering in publicly available spaces
established by restrained orders during the COVID-19 pandemic.}
reduce dynamic utilization of shared resources.
%Another important aim is fairness (equal access to the channel for all stations) and the related problem of minimization of packets' latency.}

\paragraph*{{\bf Our contribution}}

%In this paper we consider another factor, namely a \eh{channel restrain}. 
%%In practice, energy is spent not only by transmitting devices, but by listening devices checking the state of the channel as well. 
%That is a reason why \eh{we chose channel restrain as an abstraction of those physical factors. We measure it}
%%power in our paper is measured 
%as the maximal number of devices that transmit or listen to the channel simultaneously in a single round during (possibly infinite) execution of the algorithm. 
%%After providing a suitable and realistic model, 
In this work we investigate how
%mkl \eh{restricting accessibility}
limiting the power 
of a distributed scheduler by at most $k$ 
active stations 
%medium accesses 
per round, which we call $k$-restrained channel, influences the efficiency of the broadcasting system. In other words, what throughput could be achieved on $k$-restrained channels. We focus on deterministic solutions.
%what injection rate $\rho$  can be held by the system within bounded queues. 

%For settings with strictly \eh{restricted access}
%%limited power 
We construct optimal or nearly-optimal
%(w.r.t. 
%throughput/
%maximal\_stable\_injection\_rate $\rho$) 
solutions for different classes of protocols studied in the literature: 
%$\rho=1$ 
achieving throughput $1$ for adaptive protocols, 
%any $\rho<1$ 
throughput $1-1/n$ 
for full-sensing protocols, and sub-optimal 
throughput 
%$\rho=
$\Theta(\frac{k}{n \log^2 (n)})$ 
for acknowledgement based protocols (the latter result is complemented by the upper bound 
$\min \{ \frac{k}{n}, \frac{1}{3\log n }\}$
for this class of protocols). 
%on stable injection rates for this class). 
%
The main conclusion from our results is that for some classes, i.e., adaptive and full sensing 
protocols, we are able to construct 
%%a power efficient 
%an 
algorithms without decreasing throughput of the system
(i.e., comparing to the corresponding protocols without %mkl\eh{channel} 
system restraint, for which the upper bound is
$1$ in case of adaptive protocols, while the throughput $1$ is not achievable by full-sensing ones~\cite{kow}). 
%That is, stability of adaptive protocols essentially does not depend on \eh{channel} restrictions.
In some other classes, e.g., acknowledgement based protocols, 
%limiting the power restricts 
restraining the channel limits the throughput of efficient solutions. 
Note also that for adaptive algorithms randomization
could not help, as the optimal throughput is achieved for  channel restrain $2$ that is necessary for any communication.
Another consequence is that our adaptive and full sensing algorithms achieve a constant amortized numbers of transmissions/listenings
per packet; the acknowledgment-based solution guarantees $O(n\log^2 n)$ amortized number.

Let us stress that our acknowledgement-based algorithm uses a newly introduced structure, called $k$-light selector, which we thoroughly study
for its own independent interest.

Apart from rigid formal analysis, 
%significant 
part of this paper is  devoted to experimental results proving the efficiency of the constructed protocols for realistic systems' sizes.
%of systems. 
We also show that our algorithms outperform back-off-type protocols both in terms of 
%average power consumption
throughput-efficiency  
and system stability (i.e., queue sizes) in the model with
restraint. 

\subsection{Previous and related work} 

In our paper we study distributed broadcasting on multiple access channel in the framework of adversarial queuing and %mkl \eh{restraining channel accessibility}
limited power available 
to the system. The adversary definition, taxonomy of different models of communication channels (w.r.t. stations capabilities) are the same as~in~\cite{kow} by Chlebus~\etal, wherein the authors considered throughput for different settings without 
limiting the number of active stations.%mkl
%\eh{channel restrictions}. 
In another study by Chlebus~\etal~ \cite{Chlebus:2019:EEA:3323165.3323190}, the authors introduced the energy cap, which is equivalent definition to channel restrain concept used in our paper. 
Randomized queue-free constant throughput-based model for packet broadcast with only bounds on transmission energy being known was studied by De~Marco~\etal~\cite{DBLP:MarcoS17}.
Bender~\etal~studied possibility of achieving constant throughput on channels without collision detection in~\cite{bender2020contention}  and presented an algorithm for contention resolution 
%(where each station has to transmit a message to the shared channel) 
achieving constant throughput using only constant {\em average energy} for transmitting in~\cite{BenderKPY16}.
%transmissions energy cost
Awerbuch~\etal~accounted for the energy cost of transmissions and developed energy-efficient and jamming-resistant protocol in \cite{Awerbuch:2008:JMP:1400751.1400759}.
%Most of results refer to radio networks. 
Problems of energy-efficient leader election, size approximation and census were studied by Chang~\etal~in~\cite{Kopelowitz17} in single-hop networks obtaining energy-optimal algorithms for different models. 
 Earlier research on energy complexity of leader election and related problems includes Jurdzinski~\etal~\cite{JurdzinskiKZ02, enerJurdziskiKZ03}. Energy-efficient size approximation of a single hop radio networks were analyzed by Jurdzinski~\etal~in~\cite{jkzApprox} and by Kardas~\etal~in~\cite{KardasKP13}. In~\cite{DBLP:BordimCIN02,NakanoO00a} the authors considered energy-efficient algorithms for assigning unique identifiers to all stations. Energy efficient broadcast protocols in the related
 model of multi-hop radio networks was studied e.g., in~\cite{BroadcastEE1}. 

%We should also mention most important papers related to multiple access channels with adversary. All but a few of previous results are focused on maximal possible throughput, latency of packets without any reference to energy. 

To the best of our knowledge, adversarial packet injections 
%settings 
on multiple-access channel
were considered for the first time by Bender~\etal~in~\cite{BENDER05f} and  Chlebus~\etal~\cite{kow2}. The authors of the former
paper considered maximal possible throughput of randomized backoff  protocols in queue-free model, while in the latter work
deterministic distributed broadcast algorithms 
%for multiple-access channels, 
in the model of stations with queues were studied.
%Async MAC
De Marco~\etal~\cite{DEMARCO20171} analyzed asynchronous channel.
%, were first considered by Chlebus et al. \cite{kow2}. 
Randomized counterparts of this problem can be found in the earlier paper~\cite{ChlebusKluwer}.
%Backoff improvements
Backoff exponential algorithm improvements in scalability and throughput were discussed by Bender~\etal~in \cite{Bender:2016:SEB:2884435.2884482}.
Further results in this line considering the maximum rate for which stability of queues is achievable include Chlebus~\etal~\cite{kow} and Anantharamu~\etal~in~\cite{Lak8},
\cite{10.1007/978-3-642-22212-2_9},
wherein the authors considered a wide spectrum of models with respect to adversary's limitations and capabilities of stations and the channel (e.g., distinguishing collisions from the silence on the channel). 
In \cite{unlimit}, Bieńkowski~\etal~introduced the model with unlimited adversary that can inject packets into arbitrary stations with no constraints on their number nor rates of injection,
and pursued competitive analysis w.r.t. the optimal solution.
In all aforementioned papers the number of stations was known in advance. In~\cite{Lak5},~Anantharamu  and Chlebus investigated a %system of multiple access 
channel with unbounded number of stations attached to it.
%, that can be activated by the adversary by injecting packets. 

%All above-mentioned papers on dynamic packet arrivals did not focus on \eh{channel restriction} aspects.
%Recently Bender~\etal~\cite{BenderKPY16} obtained results closer to our work. They presented an algorithm for contention resolution (where each station has to transmit a message to the shared channel) achieving constant throughput using only constant {\em average energy} for transmitting and $\mathcal{O}(\log(\log^*n))$  for listening per station, where $n$  is the %(known in advance) size of the devices' ID space. 

%hardware layer without specific energy information
Hardware-related challenges were studied by Ogierman~\etal~\cite{Ogierman2018}, with a focus on adversarial jamming limited by the energy budget in MAC protocols for the SINR model.
% in \cite{Ogierman2018}.
Physical layer effects on a model of a single hop fading channel were also studied by Fineman~\etal~\cite{Fineman:2016:CRF:2933057.2933091} with particular attention to the spectrum reuse enabled by fading.

\subsection{Organization of this paper} 
%mkl The rest of this paper is organized as follows. 
In Section~\ref{Sec:Model} we present a formal model of the channel, stations and the adversary. 
%mkl We also define metrics considered in our paper. 
Section~\ref{Sec:Adaptive} is devoted to the strongest, i.e., adaptive, algorithms, wherein stations can adopt their behavior to the communication channel and add some information to the transmitted packages. We construct an algorithm that needs only a constant number of stations being switched on in each round, which guarantees stability for an adversary even for $\rho=1$. In Section~\ref{Sec:Full2} 
%{Sec:Full} 
we discuss a weaker class of protocols, namely the full-sensing protocols. We construct an algorithm with a collision-detection mechanism that is stable for an adversary with any $\rho<1$ . The weakest type of algorithms (acknowledgment based), wherein all actions are set before the execution of the algorithm, are discussed in Section~\ref{Sec:ACK}. In Section~\ref{Sec:Sim2} 
%{Sec:Sim} 
we present various experimental results for the constructed algorithms as well as a comprehensive comparison with commonly used (also in real-life systems) back-off protocols. 
%We conclude in Section~\ref{Sec:Conc}.
Omitted details could be found in the Appendix, as well as conclusions and discussions in Section~\ref{Sec:Conc}.

%Marek
\section{Model}\label{Sec:Model}  

%\paragraph*{\bf General} 

%
We follow the classical model of a shared channel, c.f., \cite{Chlebus:2019:EEA:3323165.3323190,kow,hastad}, while also enriching it by restriction on the number of simultaneous channel activities (we call it a restrain and denote by $k$).
%consideration of energy consumption. 
There are $n$  stations with unique names (also called IDs) from set $N=\{1,\ldots ,n\}$, attached to a shared transmission medium that make it a synchronous {\em  multiple-access channel}, or {\em shared channel}. The main properties of the channel are:

$\bullet$  Time is divided into slots of equal size, called rounds.\footnote{%
We restrict our attention to the synchronous ``slotted'' model, in which the stations use local clocks ticking at the same rate and indicate the same round numbers. We motivate it by 
%developed deterministic protocol 
portability of developed protocol -- 
asynchronized 
%asynchronous 
time windows can be transformed into
synchronous slots of equal size, under clock synchronization restriction.
}
%Note that k-restrained channel assumes an existence of global clock.
Global round numbering is available to the stations.
Each round consist of phases: transmission, listening and data processing. The stations, according to their programs, attempt either to transmit in the first phase or to listen to the channel in the second phase.

$\bullet$  A packet is successfully received if its transmission does not overlap with any other transmission.
A packet successfully transmitted by a station is heard on the channel by all the stations in the listening stations
%and reaches all the stations in the listening phase of the same round, 
and is acknowledged by the transmitter.
%\eh{We assume that each station has a unique name, which we also call ID.}

%\paragraph*{\bf Energy and Power} 
\paragraph*{\bf Channel restrain} 

Following the model from  \cite{Chlebus:2019:EEA:3323165.3323190}, we introduce station operation modes: each station can be at one of two states -- \textit{switched on} (on-mode)  or \textit{switched off} (off-mode).
%, and we call actions leading to the state change as {\em on} and {\em off} respectively.
%, uses a unit of energy in the first case and no energy in the second one. 
Only a switched-on station in a given round can transmit a packet or listen to the channel. 
In a round in which a station is switched on, the station can set its \textit{timer} to any positive integer~$c$, which results in the station spending the next $c$ rounds in the off-mode and returning to the on-mode immediately afterwards.
We assume that the adversary can inject packets into the station message queue independently from the station mode.

We say that the channel is {\it k-restrained} if at most $k$ stations can be in the on-mode in any given round. 

%mkl It follows that a correct algorithm on k-restrained channel schedules not more than $k$ stations to be on in any single round.

%This leads us to the introduction of the $k$-restrained channel, defined as a channel with the maximum number $k$ of stations in on-mode in any single round.

\remove{
One of the natural goals of the new ways of system design is to reduce energy consumption while maintaining the performance. Introduction of power as of new efficiency measurement for distributed systems aims to tighten that approach.
In the case of MAC the one can observe that the lower bound on energy consumption equals to the number of packets injected into the system, as ideally each packet should be transmitted only once. Similarly for the power, the lower bound is one energy unit per round, as ideally only a single station would transmit in any round. And by model construction it can be said with certainty, that allowance by an algorithm of more than a single transmission to occur in a single round leads to collision and implies energy waste. Thus from the theoretical point of view the introduction of power constraint into the model takes us step closer to the optimal systems design.

From the practical point of view, two contrasting settings can be named: (1) stations share single energy source and (2) stations have independent energy sources.

For the first case, more common for a communication within some entity like device or a data-center, the power constraint applies naturally, as any energy source has a limited power output and has a time-constrained ability to scale up and down the power output. 

As of the second case, there are no direct limit on a power consumption itself. However it can be said, that for a class of problems related to the radio networks, communication saturation degrades performance of neighboring environments, and there are series of solutions in different contexts \cite{1208720,mas,meg}, which address the problem created by algorithms of unconstrained power use. For instance, for multi-hop networks designed with cluster overlap \cite{916296,JKRS-PODC14}, saturation degrades performance of neighboring cells and therefore the control of an upper bound on the number of active stations is implemented.

This leads us to the introduction of the k-Power constraint, defining the maximum available power $k$ for the whole system in any single round. 
}

\paragraph*{\bf Packets arrival} 

We assume that injected packets are kept in individual queues by each station, till they are successfully transmitted. We model the packet arrival by an adversary, who injects packets into the system queues. Different protocols can be compared under the same adversarial strategy. 
In our paper we consider the $(\rho,b)$-\textit{leaky-bucket} adversary \cite{kow}. It constrains the adversary according to two parameters: the injection rate $\rho$, being the maximum average number of packets injected per round, and the burstiness $b$  - the number of packets that can be injected simultaneously in the same round.\footnote{%
	There is also a second 
	%standard 
	model of the adversary, the so called \textit{window adversary}, which is seemingly weaker, c.f., \cite{ros} for details.}

\paragraph*{\bf Protocol families} 

Following the current classification in the literature, c.f., \cite{10.1007/978-3-642-22212-2_9,kow,kow2}, we consider the following classes of 
%deterministic 
distributed protocols with respect to stations' capabilities:
%basic features of communication of stations:

\textit{Adaptive protocols} --- each station may access the history of transmissions and  each packet contains the unique  ID of the sender. Moreover the sender can add a constant number of bits to each packet, which other stations can read and make future decisions based on this information.
% in the same round.}
%each station has an ability to add a small number of bits to each packet, and other stations can read those bits and make decisions based on this information in the same round.

\textit{Full-sensing protocols} --- each station may access the history of transmissions and  each packet contains the unique  ID of the sender. However no extra bits to a packet can be added. 
%, however it cannot add any extra information to the transmitted packages or transmit a dummy packet. 
%(e.g., control message).	
%in their buffers.}
% each station may browse history of transmissions. We assume that listening stations may recognize which station has sent a successfully  transmitted  packet.

\textit{Acknowledgment based protocols} --- each station runs a function of station ID and round, determining if transmission attempt should be made. %In the original (local-clock) setting the round counter is reset to zero upon successful packet delivery. In current paper we discuss global-clock setting, where round counters tick continuously and are synchronized between all of the stations in no energy cost.
%The stations can change their transmission sequences only when they receive the acknowledgment of a successfully transmitted message. We assume that a station restarts its sequence in the first round after acknowledgment was received.

Protocol classes described above reflect practical limitations of station's calculation and memory resources and its ability to analyze  channel state. \textit{Backoff} protocol can be described as randomized acknowledgment-based algorithm with fixed initial sequence of probabilities.

\paragraph*{\bf Protocol quality measures} 
We focus on the two following protocol quality and performance measures of an algorithm:

{\em Stability} -- if queues of all stations stay bounded by some function on model parameters ($n$, $\rho$, $b$), even in an infinite %perspective of 
execution;

{\em Channel restrain} $k$ -- upper bound on the number of online stations in one round (also called $k$-restrained channel); 
%if the bound is $k$); 

{\em Throughput} -- the maximum injection rate $\rho$ for which all executions of the algorithm are~stable.

%We call an injection rate $\rho$  \textit{critical} if for injection rates smaller then $\rho$ the protocol is always stable while for rates higher then $\rho$  
%the queue growth drastically increases (the number of packages in queues tends to infinity for some injection pattern generated by a $(\rho,b)$-adversary.

\section{Adaptive protocol}\label{Sec:Adaptive}

\subsection{12-O'clock adaptive protocol}

The \algoname{12-O'clock}$(n)$ algorithm, 
where $n$ is the number of stations in the system,
% belongs to the MBTF \cite{kow} family of algorithms and 
schedules exactly two stations to be switched on in a single round --- one in the {\it transmitting} role and another in the {\it listening} role. Since only one of those stations has the right to transmit, collision never occurs and 
 the channel restrain is~$2$.
 %power consumption is limited by $2$. 
%
The algorithm allows for any adversary burstiness value.

\textbf{High level description.} 
We call a group of $n$ consecutive rounds a {\it cycle} if the last round $r$ of the group satisfies $r = 0 \mod n$.
End-of-cycle (or 12-O'clock) rounds play an important role in coordination and decision making during the execution; they also motivate 
%algorithm~name.
the name of the algorithm.

Every  station  keeps  an  ordered  list  of  all the  stations. These lists are the same in every station at the beginning of a cycle; at such a moment they represent one list, which we call the list. Initially, the list consists of all the stations ordered by their names.

Stations take the transmitting role in their order on the list. The  process  of  assigning  transmitting stations  to  rounds  can  be visualized as passing a virtual token from station to station, such that a station holding the token is in the transmitting role. Station spends one round in the listening role before taking the token, in order to learn the status of the channel.
When a cycle ends then the token is typically passed on to the next station on the list. The order determined by the list is understood in a cyclic sense, in that the first station assumes the transmitting role after the last one in the list has concluded its assignment.  An  exception  for  this  process  occurs  when  transmitting station  is  moved to become  the  head  of  the  list while keeping the token.

The exception is handled as follows:
a transmitting station $B$ holding  the token has the right to keep it when it has at least $3n$ packets in its queue. In such a case the station considers itself \state{Big} and informs other stations about its status, by suitably setting a toggle bit in messages. All of the stations while in the listening role, learn from this bit that they have no right to take the token.

Station $B$ has the right to keep the token until the first end-of-cycle (12-O) round with queue size not greater than $3n$ --- once this condition is fulfilled, the station considers itself to be \state{Last-Big}, has the right to hold the token for one more full cycle and informs other stations by suitably setting another toggle bit in its messages. 
By the end of this last cycle, all of the stations move $B$ to the head of the list. Starting with the next cycle stations follow their routine, with station $B$ being the head of the least and $B$ holding the token to transmit in the first round of the cycle.
This mechanism allows transmitting stations to stretch cycles, possibly indefinitely, should the adversary inject packets in a certain way, e.g., into one station only.

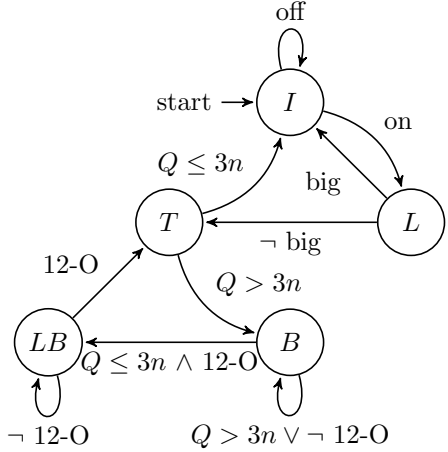
\begin{SCfigure}[1]
\begin{tikzpicture}[->,>=stealth',shorten >=1pt,auto,node distance=2.25cm,
                    semithick]

  \node[initial,state,minimum size=25pt] (A)                    {$I$};
  \node[state,minimum size=25pt]         (B) [below right of=A] {$L$};
  \node[state,minimum size=25pt]         (D) [below left of=A] 
  {$T$};
  \node[state,minimum size=25pt]         (C) [below right of=D] {$B$};
  \node[state,minimum size=25pt]         (E) [below left of=D]       {$LB$};

  \path (A) edge  [bend left] node {on} (B)
        (A) edge [loop above] node {off} (A)
        (B) edge              node {big} (A)
        (B) edge             node 
        {$\neg$ big} (D)
        (D) edge [bend right] node {$Q\leq3n$} (A)
        (D) edge [bend right] node {$Q>3n$} (C)
        (C) edge [loop below] node {$Q>3n \lor \neg$ 12-O} (C)
        (C) edge  node {$Q\leq3n$ $\land$ 12-O} (E)
        (E) edge   node {12-O} (D)
        (E) edge [loop below] node {$\neg$ 12-O} (E);
\end{tikzpicture}
\caption{Finite state machine for a station in 12-O'clock adaptive algorithm. {\it I} stays for Idle state; $L$ for Listening state; $T$ for Transmitting state; {\it B} for Big state; {\it LB} for Last-Big state. Note that from the system perspective there are two stations with starting state being Transmitting and Listening respectively, otherwise the starting state is Idle as shown. $Q$ stays for station queue size; $n$ for system size; {\it 12-O} for the end-of-cycle round; channel state heard by a station is described by: {\it big} for packets with big or last-big bit being toggled. Arrows {\it on} and {\it off} represent a global clock-based decision for choosing on-mode and off-mode respectively. All of the arrows happen with a tick of clock.} 
\label{12Oa:fsm}
\end{SCfigure}

\textbf{Technical description.} 
Operational station can be at one of the five states: \state{Idle}, \state{Listening}, \state{Transmitting}, \state{Big} or \state{Last-Big}. 
The last three states are given the right to transmit; they could be encoded by two bits when attached to the message by the transmitting station.
The \state{Listening} state is dedicated to listening, while in the \state{Idle} state the station is switched off.
We describe these states one-by-one later in this section. Finite state machine for the relationship between those states can be seen on {\it Figure \ref{12Oa:fsm}}.

\remove{
\textbf{High Level Description.} 
The main drawback of the MBTF algorithm is the linear number of listening stations per round;
in order to overcome it, we ``emulate'' each round by a cycle consisting of $n$ rounds.
We have to do it in a way that: in each round there are only two stations active,
collisions do not occur, and the number of idle rounds
(i.e., without transmission) is controlled in a similar way as in the original MBTF algorithm.

More specifically, 
the algorithm proceeds through $n$-round cycles. In each round there are two distinguished stations: a station with a ``token to transmit'' (i.e., in state \state{Transmitting}, \state{Big} or \state{Last-Big}), and another station which is listening (i.e., in state \state{Listening}); the rest of stations are (in state) idle.
The reason of having three types of states for a potential transmitter is the following:
the station could have (re-)discovered that it is \state{Big} (i.e., it has more than $3n$ packets in its queue)
during the current cycle, or it wants to inform the others that it lost the status of being \state{Big}
in the beginning of the current cycle (state \state{Last-Big}), or the station has relatively small queue
and is just passing the token (state \state{Transmitting}) to the listening station.

Stations are changing in the \algoname{Round-Robin} way: by waking up and adopting state \state{Listening} at the round when its predecessor should transmit, and going back to the idle state 
in this round (in case it received a message from a \state{Big} or \state{Last-Big} station) or in the next round 
right after it transmits (if it received a message from normal \state{Transmitting} station or no message at all). 
If the station has a number of packets in its queue that is bigger than the threshold level $3n$, it becomes \state{Big}, keeps the token throughout the remainder of the cycle and adds a mark to its packets so that the other (listening) stations may know that they are not allowed to transmit (instead, listening stations immediately go idle). Station can stop being \state{Big} only when the number of packets in the queue is lower than the threshold level and when it is the last round of a  cycle; the reason for checking this condition only at the end of cycles is to be able to
inform all other stations about its \state{Big} station status throughout the whole cycle (note that such informing
process requires at least $n-1$ rounds in case of energy $2$). After revoking its state \state{Big} at the end of some cycle, the station keeps transmitting throughout the next full cycle its packets with a \state{Last-Big} mark (as mentioned before, two additional bits are needed to encode whether a transmitting station is in state \state{Transmitting}, \state{Big} or \state{Last-Big}), so that the other stations could adjust their list to the new order in which the transmitting \state{Last-Big} station is the first on the list (as the most recent \state{Big} station).
Note that the threshold $3n$ for being a \state{Big} station consists of term $n$ coming from the criteria for being big in the original MBTF algorithm, and term $2n$ allows a \state{Big} station to make two additional whole cycles in the beginning of 
becoming \state{Big}  and when stopping being big (i.e., in the cycle when it is in state \state{Last-Big}) without
idle rounds.

\begin{algorithm}[t]
\SetAlgoLined 
\SetKwFunction{transmitToTheChannel}{transmitToTheChannel}
\SetKwFunction{listenToTheChannel}{listenToTheChannel}
\SetKwProg{myproc}{Procedure}{}{}
\myproc{\transmitToTheChannel{}}{
    \Switch{s.state}{
        \Case{Transmitting}{             
            \eIf{s.queue $>$ 3n} {
                s.state := Big\;
                s.transmit()\;
            }{       
                \If{s.queue $>$ 0}{
                    s.transmit()\;
                }
                s.state := Idle\;
            }
        }        
        \Case{Big}{ 
            \If{$round = 0 \mod n$ AND s.queue $\leq 3n$}{
                s.state := Last-Big\;
                s.moveBigToFront(s.id)\;
            }
            s.transmit()\;
        }       
        \Case{Last-Big}{   
            s.transmit()\;          
            \If{$round = n - 1 \mod n$}{  
                s.state := Transmitting\;              
            }            
        }
    }    
}
\caption{12 O'clock adaptive algorithm -- transmission phase}
\label{alg:1a}
\end{algorithm}

\begin{algorithm}[t]
\SetAlgoLined 
\SetKwFunction{transmitToTheChannel}{transmitToTheChannel}
\SetKwFunction{listenToTheChannel}{listenToTheChannel}
\SetKwProg{myproc}{Procedure}{}{}
\myproc{\listenToTheChannel{}}{    
    \Switch{s.state}{          
        \Case{Idle}{  
            \If{shouldWakeUp()} {
                s.state = Listening\;
            }            
        }     
        \Case{Listening}{      
            \Switch{channel.state}{       
                \Case{Silence}{  
                    s.state = Transmitting\;
                }         
                \Case{Normal}{  
                    s.state = Transmitting\;
                }       
                \Case{Big}{  
                    s.state = Idle\;
                }         
                \Case{Last-Big}{  
                    s.state = Idle\;
                    s.moveBigToFront(channel.id)\;
                }          
            }             
        }
    }
}
\caption{12 O'clock adaptive algorithm -- listening phase}
\label{alg:1b}
\end{algorithm}

% in order to possible taking the token for the next round (only if the transmitting station is in state \state{Transmitting})

\textbf{Methods.} The algorithm uses the following methods: \\
\textit{moveBigToFront(station ID)} --- moves station of the input ID to the front of the (local) station list; \\
    \textit{transmit()} --- transmits a packet from the station queue, attaches ID and state information to it;\\
    \textit{shouldWakeUp()} --- checks the idle timeout of the station, that is,
    the number of rounds left until its predecessor could be in the \state{Transmitting} state.
    It starts from $n-1$ when the station drops \state{Listening} state, and decreases by $1$ each round.
    Upon becoming $0$, the procedure outputs ``true'' and the station switches to \state{Listening} state.
}

\textbf{Initialization.}
%\textbf{Start conditions.} 
In the beginning all but the first two stations are in the \state{Idle} state, while the one with the smallest ID is in \state{Transmitting} state and its successor is in the \state{Listening} state.

\textbf{Idle state.}
In this state the station does not access the channel, it only keeps updating its idling time
until the next wake-up --- each round decreases by $1$. 
The starting number of idling rounds is either $n$ or $n-1$ or $n-2$, depending on the 
state from which the station switches to \state{Idle} and the message on the channel,
see the description of \state{Listening} and \state{Transmitting} states below.
%After awaking, i.e., w
When the idling time decreases to zero, the state
switches to \state{Listening}.

\textbf{Listening state.} Station in the \state{Listening} state updates the local station list when the \state{Last-Big} transmission occurs on a channel. It changes its state to \state{Transmitting} upon receiving a message from a
station in the \state{Transmitting} state or upon no message received. Otherwise,
it becomes idle for the next $n-1$ or $n$ rounds until wake-up. The latter idling time
is caused by move of \state{Last-Big} station from behind of the \state{Listening} station location on the list of stations, to the front, therefore increasing the \state{Listening} station location by $1$.
%is when the station hears a \state{Last-Big} station which is currently located after it on the list of station, \eh{as moving the \state{Last-Big} station to the beginning increases the location of the listening station on the list by $1$.}
%In the time domain it became idle for the full cycle for any \state{Big} transmission and wake up next round in response to a normal transmission.

\textbf{Transmitting state.}  
The \state{Transmitting} state is taken (from \state{Listening} state) by a station once per cycle in the round corresponding
to its current position on the list of stations, unless there is a \state{Big} or \state{Last-Big} station
in this round. 
Station in the \state{Transmitting} state changes its state to \state{Big} and transmits 
%accordingly 
if its queue size is bigger than $3n$.
Otherwise, it transmits being in the \state{Transmitting} state, provided it has a packet in its queue, and changes its state to \state{Idle} (in order to awake in its listening turn during the next cycle, after $n-2$ rounds). 
%Listening (in order to awake on its turn in the next cycle).

\textbf{Big state.} 
At the end of each cycle, each \state{Big} station checks whether its queue size is still
bigger than $3n$; if not, it changes its state to \state{Last-Big}.
%Station in the \state{Big} state checks itself in the end of the cycle on sustaining the \state{Big} station condition: does the queue size is greater then 3n. If it is not, the station changes its state to the Last \state{Big}. 
In any prior round, the \state{Big} station transmits a packet and remains in the same state.
The following property can be easily deducted: once a station changes its state to \state{Big} (which happens when being
in its regular \state{Transmitting} state), it stays there till the end of the cycle;
it may then continue throughout whole next cycles, until it changes to \state{Last-Big} state
at the end of one of them.

\textbf{Last-Big state.} Station in the \state{Last-Big} state transmits until the end of the cycle. It changes its state to the \state{Transmitting} in the end of the cycle after the last transmission happens. 
Note that, due to the condition of switching from \state{Big} to \state{Last-Big} state,
a station remains in the \state{Last-Big} state during the whole one cycle, from the beginning when it 
switched from the state \state{Big} to the end when it switches to the \state{Transmitting}.

\subsection{Analysis and bounds}

Consider the total size of the queues in the beginning of the cycle. If it is greater then $\ell = n(3n-1)+1$ we say that 
%it 
belongs to a {\em dense} interval, otherwise it belongs to a {\em sparse} interval (here we consider intervals of time). This way the execution of algorithm consists of interleaved dense and sparse intervals, each  
%interval
containing a number of whole cycles.

In relation to a fixed interval, we consider the following terminology: station is {\it pre-big} in a given round
of the interval if it has not been in the \state{Big} state during this interval before that round, and it is {\it post-big} if it has been at least once in the \state{Last-Big} state during the interval by that round. Station is {\it potentially-big} if its queue size is bigger than $3n$ (i.e., the size allows the station to become \state{Big} eventually) or it is in a \state{Big} or \state{Last-Big} state.
%
%Note that a station may never be potentially-big in one interval but could be in another one.
Observe that each station is pre-big in some prefix of the interval and post-big in some (disjoint) suffix of the interval; each of these periods could be empty or the whole interval.
In-between of being pre-big and post-big, a station is continuously in a \state{Big} state.

We define types of cycles 
%by 
depending on availability of 
\state{Big} and \state{Last-Big} stations:
% availability:}

\textbf{Type-1 cycle:} without any \state{Big} or \state{Last-Big} station. Token is being passed in the \algoname{Round-Robin} way,
    by adopting \state{Listening} and \state{Transmitting} states. This means that at any single round there is one station in the \state{Transmitting} state and one in the \state{Listening} state. 
    
\textbf{Type-2 cycle:} with a station $S$ starting to transmit as \state{Big} in some round of the cycle. Here, the token is being passed in the \algoname{Round-Robin} way by applying sequence of \state{Listening} and \state{Transmitting} states to each station on the list, until $S$ transmits. Since $S$ becomes \state{Big}, it keeps the token afterwards till the end of the cycle. Note that stations at \state{Big} and \state{Transmitting} states cannot occur simultaneously in the same round, because once there is a \state{Big} station all \state{Listening} stations immediately switch to \state{Idle} state instead of switching to \state{Transmitting} state. 
% Observe also that 
%Additionally, no \state{Last-Big} station could occur in such cycle, due to the type-4 cycle described below. 
 
\textbf{Type-3 cycle:} with a \state{Big} station $S$ keeping the ``token to transmit'' for the whole cycle. All stations after waking-up in the \state{Listening} state will learn about the state of $S$ and become idle until their scheduled wake-up round in the next cycle. 

\textbf{Type-4 cycle:} with a \state{Last-Big} station $S$ keeping the token for the whole cycle. Station can be in the \state{Last-Big} state only for a one cycle and after being in the \state{Big} state (at the end of the previous cycle). All stations after switching from \state{Idle} to the \state{Listening} state will learn about the \state{Last-Big} state of $S$ and become idle until their scheduled wake-up round in the next cycle.

The local lists of stations stay synchronized in the beginning of cycles; in fact, only the type-4 cycle
changes the order of stations, and the whole cycle is needed to do it consistently in all stations (when they act as listeners) so that they all apply the move of the \state{Last-Big} station to the beginning of their local lists
by the end of the cycle.

\begin{lemma}
Each cycle is of one of the above four types.
%Each cycle can be only of one of the four types described above.
\end{lemma}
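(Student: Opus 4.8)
The plan is to argue by induction on cycles, using the end-of-cycle synchronization of the local lists as the induction hypothesis, and to show that the state a distinguished ``token-holder'' station is in at the start of a cycle, together with the deterministic transitions of the finite state machine in Figure~\ref{12Oa:fsm}, forces the cycle to fall into exactly one of the four described patterns. First I would record the invariant that will be maintained: at the beginning of every cycle (i) all stations hold identical local lists, (ii) exactly one station holds the token, i.e.\ is in state \state{Transmitting} or \state{Big} or \state{Last-Big}, while all others are \state{Idle} with their wake-up timers set so that each assumes the \state{Listening} role exactly once during the cycle in the round matching its current list position, and (iii) no station is in \state{Big} or \state{Last-Big} ``left over'' inconsistently. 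The base case is the initialization, where the smallest-ID station is \state{Transmitting}, its successor is \state{Listening} about to take the token, and everyone else is \state{Idle}; this matches a prefix of a Type-1 cycle.

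For the inductive step I would do a short case analysis on the state of the token-holder $S$ at the start of the cycle and, in the \state{Transmitting} case, on whether any listening station during the cycle ever discovers a queue exceeding $3n$. Case A: $S$ starts as \state{Big}. Then by the \state{Big}-state description, on every non-end-of-cycle round $S$ transmits and stays \state{Big}, so every other station, upon waking into \state{Listening}, reads the ``big'' bit and returns to \state{Idle}; hence $S$ keeps the token the whole cycle — this is a Type-3 cycle (unless the queue check at the 12-O'clock round triggers the switch to \state{Last-Big}, but that switch happens only at the very end, so the cycle body is still Type-3; alternatively, if one prefers to attribute the whole cycle, the transition to \state{Last-Big} at the boundary belongs to the \emph{next} cycle's start state). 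Case B: $S$ starts as \state{Last-Big}. By the \state{Last-Big} description $S$ transmits for the whole cycle and the others go idle on seeing the last-big bit, so this is a Type-4 cycle, and at the end $S$ reverts to \state{Transmitting} while all listeners perform \texttt{moveBigToFront}, re-establishing invariant~(i) for the next cycle. Case C: $S$ starts as \state{Transmitting}. The token is passed Round-Robin by successive \state{Listening}$\to$\state{Transmitting} transitions; if no transmitting station ever finds its queue above $3n$, the whole cycle is Round-Robin with no big station — Type-1; if some transmitting station does find its queue above $3n$, then at that round it switches to \state{Big} and keeps the token to the end, which is exactly Type-2. Since states \state{Big}/\state{Transmitting} cannot coexist in one round (listeners go idle the moment a big bit appears), these sub-cases are exhaustive and mutually exclusive.

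The main obstacle I anticipate is not the enumeration itself but verifying that the four types are genuinely \emph{exhaustive}, i.e.\ that no cycle can exhibit, say, two distinct stations each holding the token during disjoint parts of a single cycle, or a \state{Big} station appearing and then a \emph{second} \state{Big} station later in the same cycle. This is ruled out by the timer bookkeeping: once a station becomes \state{Big} it holds the token until at least the end of the cycle, so no other station ever reaches the \state{Transmitting} state in that cycle, hence none can become \state{Big}; and a station can only \emph{enter} \state{Big} from \state{Transmitting}, which requires having taken the token via the Round-Robin hand-off, which the big station's messages prevent. I would spell out this ``at most one token-holder per cycle, and at most one \state{Big} onset per cycle'' claim carefully, as it is the crux; the rest is a mechanical reading of Figure~\ref{12Oa:fsm} and the state descriptions. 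Finally I would note that invariant~(i) — list synchronization at cycle boundaries — is preserved in every case, since only Type-4 cycles reorder lists and they do so uniformly via \texttt{moveBigToFront}, which closes the induction.
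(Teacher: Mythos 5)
Your proposal is correct and follows essentially the same route as the paper: an induction over cycles in which the state of the token-holder at the start of a cycle (equivalently, the type of the preceding cycle) determines that the cycle must be one of the four types, with the base case given by the initialization. Your version merely makes explicit the boundary invariants (list synchronization, unique token-holder, at most one \state{Big} onset per cycle) that the paper's terser transition-graph argument leaves implicit.
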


\begin{proof}
Algorithm's initialization conditions enforce that the first cycle is of type-1 or type-2,
as there is no \state{Big} or \state{Last-Big} station in the beginning.
%the system to start in the type-1 cycle. 
Type-1 can be followed only by the type-1 --- if there is no potentially-big station during the cycle, or by the type-2 cycle otherwise. 
In the type-2 cycle the \state{Big} station is chosen during the cycle, and thus the cycle can be followed by the type-3 cycle --- if the \state{Big} station queue size is above $3n$ at the end of the cycle, or by type-4 otherwise.
The case of type-3 cycles is the same as the ones of type-2 described above, as in both types
there is a \state{Big} station at the end (which determines conditiones for the next cycle); they can be followed
only by a cycle of type-3 or type-4.
%The type-3 cycle with \state{Big} station keeping the ``token to transmit'' for the whole cycle can be followed either by the same type if an adversary keeps injecting packets into the \state{Big} station, or by type-4 cycle when if \state{Big} station queues peter out to Last \state{Big} state levels. 
The type-4 cycle can be followed by type-1 --- if there is no potentially-big station, 
or by type-2 cycle otherwise.
Using an iductive argument over cycles, it can be concluded that each
cycle is of one of the four defined types.
\end{proof}

\begin{lemma}
In any dense interval, a station can cause a silent round (i.e., is in state \state{Transmitting} but has an empty queue) at most $n-1$ times while being pre-big.
\label{lemma:pre-big}
\end{lemma}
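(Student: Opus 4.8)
The plan is to follow, throughout the dense interval, the position of the station $S$ on the list of stations (which is synchronized at cycle boundaries), and to argue that every silent round $S$ causes while pre-big forces this position to strictly increase, while the position never decreases and is always at most $n-1$ at such moments; the bound $n-1$ then follows by counting a strictly increasing integer sequence contained in $\{1,\dots,n-1\}$.

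First I would record a local fact: if $S$ causes a silent round in some cycle $c$ while pre-big, then $S$'s queue is already empty at the start of $c$. Indeed $S$ transmits only during its own transmitting turn, which within a cycle occurs after the start, and the adversary only adds packets, so an empty queue at the turn forces an empty queue at the start of $c$. Now density enters: the total queue at the start of $c$ exceeds $\ell=n(3n-1)+1$, and since $S$ contributes $0$, the remaining $n-1$ stations hold more than $n(3n-1)+1>(n-1)\cdot 3n$ packets in total, so by pigeonhole some station $T\neq S$ has more than $3n$ packets and is potentially-big. Such a $T$ cannot lie ahead of $S$ on the list: were it ahead, then in the round-robin of $c$ it would reach its (first) transmitting turn before $S$ while still holding more than $3n$ packets, switch to \state{Big}, and keep the token through the end of $c$, so $S$ would never enter \state{Transmitting} in $c$ --- contradicting that $S$ caused a silent round. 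Hence $T$ is strictly behind $S$, so $S$'s position in $c$ is at most $n-1$; and, by the same reasoning, the first station after $S$ in round-robin order whose queue exceeds $3n$ at its turn (which exists, $T$ being a candidate) does turn \state{Big} during $c$, so by the classification of cycles $c$ is of type-2.

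Next I would compare two consecutive cycles $c<c'$ in which $S$ causes a silent round while pre-big. By the previous paragraph some station $B$ behind $S$ turns \state{Big} in $c$; by the four-type classification the cycles immediately after $c$ are of type-3 with $B$, then one of type-4 once $B$'s queue first drops to at most $3n$ at an end-of-cycle (so $B$ becomes \state{Last-Big}), and in every one of these cycles all stations except $B$ stay idle --- in particular $S$ never enters \state{Transmitting}. Thus, for $S$ to act again (and cause the silent round of $c'$) this run must finish, which means $B$ becomes \state{Last-Big} and is moved to the front of the list; as the list is untouched between $c$ and that move and $B$ is behind $S$ there, this places $B$ ahead of $S$ and raises $S$'s position by exactly $1$. (Had $B$ stayed \state{Big} forever, $S$ would never act again and $c'$ would not exist.) Moreover $S$'s position cannot decrease while $S$ is pre-big: $S$ is never moved to the front (only \state{Last-Big} stations are, and $S$ is never one while pre-big), and moving any other station to the front leaves $S$'s position unchanged or larger. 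Hence, writing $c_1<\dots<c_K$ for the cycles of the dense interval in which $S$ causes a silent round while pre-big and $p_j$ for $S$'s position at the start of $c_j$, we obtain $1\le p_1<p_2<\dots<p_K\le n-1$, so $K\le n-1$, which is the claim.

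The main obstacle, and the part needing the most care, is exactly this bookkeeping of list positions across the four cycle types: one must verify that whenever $S$ causes a silent round while pre-big the station responsible for the ensuing \state{Big}/\state{Last-Big} run sits behind $S$, that this run necessarily ends with that station being moved in front of $S$ before $S$ can transmit again, and that no list update occurring in the meantime can pull $S$ forward. The density hypothesis is used only to guarantee the existence of a potentially-big station strictly behind $S$ at each such cycle, which is precisely what pins down $p_j\le n-1$.
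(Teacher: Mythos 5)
Your proof is correct and follows essentially the same route as the paper's: it uses the density threshold $\ell=n(3n-1)+1$ to pigeonhole a potentially-big station strictly behind $S$ whenever $S$ causes a silent round, and then bounds the number of such rounds by tracking $S$'s strictly monotone position on the list, which stays in $\{1,\dots,n-1\}$. Your write-up is more careful than the paper's terse version (notably the observation that $S$'s queue must already be empty at the start of the cycle, and the explicit accounting of how the \state{Last-Big} move shifts $S$), but the underlying argument is the same.
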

\begin{proof}
Silent rounds occur when some station has a ``token to transmit'' but has no packets in its queue. 
% Eliasz: definiujemy ze jest pre-big, po co wspominac o \state{Big} i \state{Last-Big}?
Note that it is only possible for stations in \state{Transmitting} state, as stations in any of \state{Big} states have
more than $n$ packets in their queues.

Assume that station $S$ has no packets in its queue.
Within a dense interval, in each 
round
%cycle 
there is a potentially-big station. For any cycle, if potentially-big station is before $S$ in the list, then $S$ would receive no ``token'' or receive it and decrease its position in the list. The position of $S$ cannot decrease more then $n-1$ times, as there can be no potentially-big station after $S$ if it is last in the list. When $S$ is the last on the list it either never has a possibility to transmit or becomes potentially-big. Pre-big station life-cycle terminates once station is in the \state{Big} state by definition.
\end{proof}

\begin{lemma}
In any dense interval, post-big or in a \state{Big} state station causes no silent round. %while being .
\label{lemma:post-big}
\end{lemma}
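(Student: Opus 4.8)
The plan is to treat two roles separately: the station is in the \state{Big} or \state{Last-Big} state, or it is post-big while in the \state{Transmitting} state. The first needs no density hypothesis and is a two-line computation; the second carries the content.

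First I would handle \state{Big}/\state{Last-Big}. A station enters \state{Big} from \state{Transmitting} with a queue of at least $3n$, and if it is still \state{Big} at the start of a later cycle its queue exceeded $3n$ at the preceding 12-O'clock round; since it transmits at most one packet per round and a cycle has $n$ rounds, its queue stays at least $2n$ through any cycle spent in \state{Big}. Similarly a station turns \state{Last-Big} at the end of a \state{Big} cycle with at least $2n$ packets (having dropped from at least $3n$ by at most $n$) and transmits at most $n$ more packets during its single \state{Last-Big} cycle, so its queue stays at least $n$. Hence such a station always holds the token with a non-empty queue; this is exactly the remark already used in the proof of Lemma~\ref{lemma:pre-big}.

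Next let $S$ be post-big and in \state{Transmitting} at some round of a dense interval; I must show $S$ has a packet. By the previous paragraph, the last time $S$ left the \state{Big}/\state{Last-Big} phase it re-entered \state{Transmitting} at the head of the list holding at least $n$ packets (this is the role of the split $3n=n+n+n$ in the threshold). As in the proof of Lemma~\ref{lemma:pre-big}, in a dense interval the total queue at each cycle start exceeds $n(3n-1)$, so some station is potentially-big and no cycle of the interval is of Type~1. Therefore in every such cycle $S$ is either made idle the moment it hears the reigning \state{Big}/\state{Last-Big} station while listening (Types~3 and~4), or the cycle is of Type~2; and in a Type-2 cycle $S$, sitting near the head and itself not potentially-big, can receive the token at most once, and only if the unique \state{Big} station spawned in that cycle lies strictly behind $S$ on the list — otherwise the token is absorbed before reaching $S$. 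But that \state{Big} station reigns, turns \state{Last-Big}, and is moved to the head before any later Type-2 cycle can begin, which shifts $S$ one position toward the tail. Hence the list positions at which $S$ successively receives the token in this post-big stretch lie in $\{1,\dots,n-1\}$ (a tail station gets no token while a potentially-big station precedes it) and strictly increase, so there are at most $n-1$ of them; as $S$ is in \state{Transmitting} only in Type-2 cycles, $S$ transmits at most $n-1$ times in the stretch, and its queue, which drops only on its own transmissions, never falls below $n-(n-1)=1$. So $S$ always has a packet when holding the token, and the argument restarts with a fresh cushion each time $S$ re-enters \state{Big}, which is covered by the first case.

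The main obstacle is the list bookkeeping in this last case: one must verify that moving a \state{Big}/\state{Last-Big} station to the head shifts $S$ back by exactly one position precisely when that station lay behind $S$, handle the degenerate cases (the moved station already at the head, or $S$ already at the tail), and check — against the wake-up/idle-timeout rules and what $S$ hears while listening — that $S$ is in the \state{Transmitting} state in exactly the rounds the counting above predicts, so that the balance of "one token reception per Type-2 cycle against one one-step displacement" is airtight, including at the boundaries of the dense interval where the last \state{Big} reign may spill outside it. By contrast the \state{Big}/\state{Last-Big} computations and the absence of Type-1 cycles in a dense interval are routine; the delicate step is turning "$S$ can lose a packet only by being shoved one step further from the head" into "$S$ never exhausts its $n$-packet cushion".
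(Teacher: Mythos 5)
Your proposal is correct and follows essentially the same route as the paper's proof: the same case split (\state{Big}, \state{Last-Big}, non-transmitting states, and post-big \state{Transmitting}), the same $3n \to 2n \to n$ queue-cushion computation, and the same argument that a post-big station in the \state{Transmitting} state can receive the token at most $n-1$ times because each reception forces a one-step demotion on the list before the next Type-2 cycle. Your version is somewhat more explicit about why each token reception is paired with a strictly later list position, which the paper only sketches, but the underlying idea is identical.
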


\begin{proof}
By definition of \state{Big} state, a station must have had more than $3n$ packets in its queue
in the beginning of the current cycle or in the round of the cycle when it turned into the \state{Big} state. Therefore, in each round of the cycle it has packets and causes no silent round.

A post-big station $S$ could be in a \state{Last-Big} state, \state{Big} state, \state{Transmitting} state or in one of the
other two states. In the latter case, it does not attempt to transmit, thus it cannot cause a silent round.
The case of \state{Big} state was already analyzed. If the station enters \state{Last-Big} state, it switches to this state
from the \state{Big} state 
%in the beginning of the cycle, 
having more than $2n$ packets in its queue,
thus in each round of the cycle it has packets and causes no silent round.
%thus in any round of the cycle the number of packets cannot drop below $n$, and hence no silent round~occurs.

It remains to analyze the case when $S$ is in the \state{Transmitting} state.
Upon leaving the \state{Last-Big} state for the last time, it had at least $n$ packets in its queues and was
placed in the beginning of the list of stations, by the algorithm construction. 
Then, observe that $S$ has had an opportunity to transmit only at some type-2 cycle, when there is no potentially-big station before it on the list or when $S$ is potentially-big at the time it
switches from \state{Listening} to \state{Transmitting} state. In the latter case, 
instead of staying in \state{Transmitting} state it immediately switches to \state{Big} state,
which case we already analyzed in the beginning of the proof.
%returns us to the beginning of our reasoning.
%
Otherwise (i.e., in the former case), either potentially-big station after $S$ becomes \state{Big}, 
which implies that in some
of the next cycles, it switches to \state{Last-Big} state and the position of $S$ on the list decreases without causing any more
silent rounds, or $S$ receives no ``token to transmit'' and so it cannot cause a silent round by default. 
The position cannot decrease more than $n-1$ times, because there can be no potentially-big station after $S$ if $S$ is the last on the list (the argument is similar to the one from the proof of Lemma~\ref{lemma:pre-big}). 
Since $S$ had at least $n$ packets when switching from its \state{Last-Big} state,
%when its in the beginning of the list, hence $S$ 
it can transmit and decrease its position at most $n-1$ times or become \state{Big}, whatever comes first;
in any case, it has at least one packet when transmitting.
\end{proof}

\begin{theorem}
\label{theorem:adaptive}
The 12 O'clock adaptive protocol 
%is stable against leaky-bucket adversary 
%with injection rate $\rho = 1$ 
 achieves throughput $1$ on a channel with restrain $2$ and the maximum number of packets stored in a round
is at most $O(n^2+b)$.
%$\ell+(n-1)^2 +n+b = O(n^2+b)$.
% \eh{against leaky-bucket adversary}.
\end{theorem}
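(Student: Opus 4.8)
The plan is to show that against a $(\rho,b)$-leaky-bucket adversary with $\rho=1$ the total number of stored packets never exceeds $O(n^2+b)$; stability for $\rho=1$ yields throughput at least $1$, and since at most one packet can be delivered per round the throughput is at most $1$, hence exactly $1$, with the stated queue bound as a by-product.

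First I would record the two structural facts that drive the argument. Because the channel restrain is $2$ and exactly one of the two switched-on stations holds the transmitting token, no collision ever occurs: in every round either the token holder delivers one packet successfully or, if its queue is empty, the round is \emph{silent}. By the lemma that every cycle is of one of the four types, in each type there is a well-defined token holder in every round, and a token holder in a \state{Big} or \state{Last-Big} state always has a nonempty queue; hence a silent round can only be caused by a token holder in the \state{Transmitting} state.

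Next I would partition the execution at cycle boundaries into maximal dense and sparse intervals as defined, and bound silent rounds inside a dense interval. A \state{Transmitting}-state token holder that is post-big causes no silent round (Lemma~\ref{lemma:post-big}), and a pre-big one causes at most $n-1$ silent rounds over the whole interval (Lemma~\ref{lemma:pre-big}). Since for each station the rounds in which it is pre-big form a prefix of the interval (it is pre-big until it first enters \state{Big}), summing over the $n$ stations shows a dense interval contains at most $n(n-1)$ silent rounds, \emph{independently of its length} --- this length-independence is the crux of the stability proof. Consequently, over any $m$ consecutive cycles of a dense interval the channel delivers at least $mn-n(n-1)$ packets, while the adversary injects at most $\rho\,(mn)+b=mn+b$.

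Finally I would convert this into an absolute bound. At the first round of a dense interval the total queue size is at most $\ell+n+b$: either it is the first cycle of the execution (at most $b$ packets injected so far), or the preceding cycle was sparse, so had total queue at most $\ell$ at its start and received at most $\rho n+b=n+b$ further packets. Combining with the delivery/injection estimate, the total queue at the start of any cycle of the dense interval is at most $\ell+n+b+b+n(n-1)=O(n^2+b)$, and within a cycle it can rise by at most another $n+b$, so it stays $O(n^2+b)$ throughout; sparse cycles start with total queue at most $\ell=n(3n-1)+1=O(n^2)$ and likewise grow by at most $n+b$ during a cycle. Hence the total number of stored packets is $O(n^2+b)$ at all times, which gives stability for $\rho=1$ and the claimed bound. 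I expect the only delicate points to be the bookkeeping at the sparse/dense boundary and the careful verification that ``pre-big is a prefix of the interval'' indeed yields the length-independent silent-round count; the rest is routine counting with the leaky-bucket constraint.
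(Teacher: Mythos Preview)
Your proposal is correct and follows essentially the same approach as the paper: partition into sparse and dense intervals, bound the starting queue of a dense interval by $\ell+n+b$ using the last sparse cycle, and then bound the number of silent rounds in a dense interval via Lemmas~\ref{lemma:pre-big} and~\ref{lemma:post-big}. The only cosmetic differences are that the paper explicitly rules out type-1 cycles in a dense interval and counts $(n-1)^2$ silent rounds (arguing at most $n-1$ pre-big stations), whereas you use the slightly looser $n(n-1)$ and pick up an extra $b$ in the final constant; both yield the same $O(n^2+b)$ bound.
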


\begin{proof} 
Consider an adversarial injection pattern with rate $\rho=1$ and a burstiness $b$.
Within a sparse interval, there can be no more then $\ell + n + b$ packets in the stations
at the end of any cycle for dense interval threshold $\ell$.
Indeed, the biggest possible number of packets that the system can start a cycle with is equal to $\ell$, 
and the adversary can inject no more then $n+b$ packets in $n$ consequent rounds of the cycle. 
Once the queue size becomes greater than $\ell$ in the beginning of a cycle, 
the sparse interval terminates and the dense interval begins.

In the remainder, we focus on dense intervals.
Note that in the beginning of a dense interval, the number of packets in the system
is at most $\ell+n$ plus the burstiness above the injection rate (upper bounded by $b$);
indeed, as in the beginning of the preceding cycle the interval was sparse, the number of packets
was not bigger than $\ell$, and during that cycle the adversary could inject at most
$n$ packets accounted to the injection rate plus the burstiness.

Within any dense interval, a station in the \state{Big} or \state{Last-Big} state is guaranteed to be in each cycle, by the pigeon-hole principle. It makes type-1 cycle impossible to occur. 
Consider type-3 and type-4 cycles: during those cycles packet is transmitted in every round, and 
thus a silent round cannot occur; hence the number of packets does not grow (except
of burstiness above the injection rate, but this is upper bounded by $b$ at any round of the interval,
by the specification of the adversary).
In type-2 cycles, post-big stations cannot cause silent round, by Lemma \ref{lemma:post-big},
and stations in \state{Big} state cannot cause silent rounds as they always have more than $2n$ packets
pending.
Hence, type-2 cycles may have silent rounds caused only by pre-big stations. However, there can be no more than $n-1$ pre-big stations in the system in the beginning of the dense interval (because there is at least one potentially-big station). Each pre-big station can cause no more than $n-1$ silent rounds, by Lemma \ref{lemma:pre-big}. Observe that in each cycle with a silent round some potentially-big station will change its state to \state{Big} --- silent round would not occur if there was potentially-big station with higher position in the list than any empty station. 
Hence, there can be no more than $n-1$ cycles with silent rounds caused by same (pre-big)
station. To summarize, there are at most $n-1$ cycles with silent rounds for each of at most $n-1$ pre-big stations, resulting in the upper bound of $\ell + (n-1)^2 + n+b$ on system queues.
Since only one of those stations has the right to transmit, collision never occurs and 
channel restrain is~$2$.
\end{proof}

%It should be noted that the algorithm stores a large (linear to the system size) amount of state on each station. We don't see it as a limitation for most of the cases however, since station queue size is square to the system size in the worst case, as we prove it below. Practically speaking, packets in the station queue should be stored in some sort of memory. To fairly compare different protocols - queue size needs to be taken into an account together with the size of the state.

\section{12 O'clock full-sensing protocol with collision detection}\label{Sec:Full2}
12-O'clock full-sensing protocol works similarly to its 
control-bits counterpart from Section~\ref{Sec:Adaptive}, however a decision to change state is based on information who transmitted packet, since adding bits is not allowed for full-sensing protocols. 
To overcome the lack of additional information bits, we implement  precise control mechanism to the ``out-of-order transmissions'' and the collisions enforced by them so that \state{Big} stations could
	be identified. Combined with recognition of ID attached to successful transmissions, any station can learn about \state{Big} station and adjust to it, with some small waste of transmission
	and increased delay.

In our algorithm, typically stations in the \state{Listening} state discover Big stations by reading the ID of station transmitting on the channel and comparing it to the predecessor ID from the list --- if they do not match then the listening station(s) deduct that the transmitter
is in \state{Big} state.
%it is a Big station who is \state{Transmitting}. 
By distinguishing silence from collision, the algorithm is able to manage borderline cases. 
However, due to collisions, the channel restrain is $3$ and the protocol 
%is not universally stable, albeit we have proven its stability against injection rates $\rho \le \frac{n-1}{n}$.
achieves slightly smaller throughput $1-\frac{1}{n}$ (recall however that no full-sensing protocol could reach throughput $1$
even on non-restrained channels).

\textbf{Key differences} from the adaptive version of the algorithm are: state of the channel (i.e. silence, transmission or conflict) is used instead of the control bits attached to a packet, and there is no \state{Last-Big} state. As this class of algorithms restricts control-bits usage, we use station knowledge of the order of stations combined with the channel state to identify the transmitting station. Indeed, the round-robin cycle of algorithm execution does not require any additional synchronization. For \state{Big}-state the stations note that all but one stations in the system can learn this state from unique transmitter ID attached to a packet and heard on the channel, as it does not match the expected (by the order of the list) transmitter ID. Similarly, we use a conflict on the channel to indicate that there is a station in \state{Big} state present on the channel.
Because of those changes, the synchronization mechanism of the order of stations' list has been modified and \state{Last-Big} state becomes unnecessary. 

A high level diagram of the
finite state machine for it can be seen on {\it Figure \ref{12Of:fsm2}}.
Complete description and analysis of 12-O'clock full-sensing algorithm can be found in Appendix \ref{Sec:Full}.
%We consider three channel states: $Silence$ when there is no transmission, $Transmission$ when there is single transmission on the channel, $Conflict$ when there is more than one transmission.
%
%Stations can be at one of four states:  \state{Idle},  \state{Listening},  \state{Transmitting} or  \state{Big}. 
%The last two states are given the right to transmit; they are distinguished by the order in the list -- only \state{Big} station can transmit out of the order of the list; in the only one possible case when \state{Big} station transmits within the order, collision occurs and later transmissions clarify the system state. The \state{Listening} state is dedicated to listening, while in the Idle state the station neither transmits or listens. We describe these states \dk{later in this section???}.
%As previously we assume that transmission happens before the listening phase. 
%Simplified 
\begin{SCfigure}[1]
\centering
\begin{tikzpicture}[->,>=stealth',shorten >=1pt,auto,node distance=2.4cm,
                    semithick]

  \node[initial,state,minimum size=25pt] (A)                    {$I$};
  \node[state,minimum size=25pt]         (B) [below right of=A] {$L$};
  \node[state,minimum size=25pt]         (D) [below left of=A]  {$T$};
  \node[state,minimum size=25pt]         (C) [below right of=D] {$B$};

  \path (A) edge [bend left=45]   node {on} (B)
        (A) edge [loop above] node {off} (A)
        (B) edge [loop right] node {$\lightning$} (B)
        (B) edge [bend right] node {$\neg$ pred} (A)
        (B) edge  node {silence $\lor$ pred} (D)
        (D) edge [bend right=15] node {$Q\leq3n$ $\lor$ $\lightning$} (A)
        (D) edge [bend right] node {$Q>3n$ $\land$ $\neg \lightning$} (C)
        (C) edge [bend left=45]  node {$Q\leq2n$ $\land$ 12-O} (D)
        (C) edge [loop below] node {$Q>2n \lor \neg$ 12-O} (C);
\end{tikzpicture}
\caption{%Simplified 
A high level diagram of the
finite state machine for a station in 12-O'clock full-sensing algorithm. 
 {\it I} stays for Idle state; $L$ for Listening state; $T$ for Transmitting state; {\it B} for Big state. Note that from the system perspective there are two stations with starting state being Transmitting and Listening respectively, otherwise the starting state is Idle as shown. $Q$ stays for station queue size; $n$ for system size; {\it 12-O} for the end-of-cycle round. Channel states are described as: {\it silence} for no transmission, $\lightning$ for collision, {\it pred} stays for a message transmitted by the predecessor of the station. Arrows {\it on} and {\it off} represent a global clock-based decision for choosing on-mode and off-mode respectively. All of the arrows happen with a tick of clock.} \label{12Of:fsm2}
\end{SCfigure}

\section{Acknowledgment-based protocols}\label{Sec:ACK}
 
 In this section we consider acknowledgment-based protocols in $k$-restrained model. First we prove two limitations for this class of protocols.
 Then we present an algorithm that is throughput-optimal up to a multiplicative polylogarithmic factor.
 % with respect to the injection rate that can be handled.  
 Our algorithm is based on a new type of selectors, $k$-light selectors, for which we derive upper and lower bounds, as well as polynomial construction.
 
\subsection{Limitations of acknowledgment-based protocols with restrain $k$} 
 
\begin{lemma}
There is no correct,   acknowledgment-based algorithm in the $k$-Restrain model with channel restrain $k < n$ 
without global-clock mechanism for any $\rho > 0$.

\end{lemma}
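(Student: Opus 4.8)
The plan is to exploit the two defining features in the statement jointly: the protocol is \emph{acknowledgment-based}, so a station's action in a round is a function only of its ID, its local count of rounds since it was activated, and the history of its \emph{own} successful transmissions (it makes no use of anything else heard on the channel); and there is \emph{no global clock}, so a station that has so far always had an empty queue cannot tell which global round it is in, and a freshly activated, previously-empty station runs a program that depends on local time alone. Under these assumptions the behaviour of a station ``in isolation'' --- when the adversary injects a single packet into it and never injects anything else into the whole system --- is completely determined by its ID. Since the algorithm is assumed correct, that single packet must eventually be transmitted successfully, which forces the station to be switched on in at least one local round. For each station $i$ let $e_i\ge 0$ be the \emph{first} local round in which station $i$ is switched on in such an isolated execution; this is well defined for every $i$, and $W=\max_i e_i$ is a constant depending only on the algorithm and $n$.

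Next I would run the following adversarial strategy. Since $k<n$, fix $k+1$ of the stations, fix a large target round $t^*$, and inject exactly one packet into each chosen station $i$ at round $\tau_i = t^*-e_i\le t^*$, injecting nothing else ever. All these injections fall inside a window of $W+1$ rounds, so for burstiness $b\ge n$ this is a legal schedule for the $(\rho,b)$-leaky-bucket adversary for every $\rho>0$ (for a fixed smaller $b$ one replicates the gadget with large enough temporal spacing). By the choice $\tau_i=t^*-e_i$, the global round $t^*$ corresponds, simultaneously for all $k+1$ chosen stations, to local round $e_i$, i.e.\ to the first round in which each of them would be switched on.

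The crux is an indistinguishability argument. For each chosen station $i$, and for every local round $0,1,\dots,e_i$, its state (in particular whether it is switched on) is identical to that in the isolated execution: before local round $e_i$ the station is switched off by minimality of $e_i$, hence it neither transmits nor receives an acknowledgment, so its whole input --- ID, local time, own-acknowledgment history --- agrees with the isolated run, and being acknowledgment-based it can use nothing else; hence also the decision made at local round $e_i$ agrees, namely ``switched on''. (The queue of $i$ also contains exactly the one injected packet in both runs, so any dependence on ``do I have a packet'' changes nothing.) Consequently at global round $t^*$ all $k+1$ chosen stations are switched on, exceeding the restrain $k$. Since the restrain may never be exceeded, no correct acknowledgment-based algorithm of this kind can exist, for any $\rho>0$.

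The only delicate points I expect are (i) stating precisely what ``no global clock'' gives the adversary --- that isolated executions of a station differing only by a global time shift are indistinguishable to that station, so the program is a function of local time alone --- and (ii) verifying the synchronised injection pattern is admissible for the leaky-bucket adversary, which is immediate with burstiness $\ge n$ and otherwise handled by spacing. Everything else is a direct bookkeeping check once the $e_i$ and $t^*$ are fixed, and note that the argument does not even use that a packet gets stuck: a single guaranteed violation of the hard restrain constraint already contradicts correctness.
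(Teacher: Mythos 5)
Your proposal is correct and is essentially the paper's own argument: exploit the absence of a global clock to let the adversary shift each station's local time origin so that the first round in which each station becomes active lands on the same global round, forcing more than $k$ simultaneously on-mode stations and contradicting the restrain. Your version merely refines the same idea (using the first switched-on round of $k+1$ stations rather than the first transmission round of all $n$, and adding the indistinguishability and leaky-bucket admissibility bookkeeping that the paper leaves implicit).
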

\begin{proof}
 We say that protocol with channel restrain $k$ and injection rates $\rho$ is correct, when queues of all 
 stations stay bounded at all times independently from adversary strategy and for any round 
 the number of online stations is at most $k$.
 
Assume that $P$ is correct deterministic acknowledgment-based protocol, within $k$-Restrain model without global clock in the system of $n$ stations. 
Then for each station $S_i$, there is a default starting sequence $p_i$, where $i$ is the index of the station. Because $P$ is correct, each $p_i$ contains a first occurrence of transmission bit $1$. Let $t_i$ be the position of the first transmitting bit  in the sequence $p_i$. 
 Because system is not equipped in the global clock mechanism, stations' starting rounds are set by adversary. Let us  say that $s_i$ is a global timeline start moment of station $S_i$. It follows that first transmission of station $i$ occurs at round $s_i+t_i$. 
 In order to overload the system adversary follows the strategy: choose round $e$ as $e = max\{t_1,\ldots ,t_n\}$; start station $S_i$ at round $s_i = e - t_i$. Then all $n > k$ stations transmit at round $e$ and thus $P$ has to overflow channel restrain $k$.
 \end{proof}

\begin{theorem}
Any acknowledgment-based algorithm with global clock in the $k$-Restrain model with channel restrain  $k < n$ 
cannot achieve throughput higher than $\min \{ \frac{k}{n}, \frac{1}{3\log n }\}$.
%cannot be stable against adversary of injection rates $\rho > \min \{ \frac{k}{n}, \frac{1}{3\log n }\}$.
\end{theorem}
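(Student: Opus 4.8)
The plan is to treat the two terms of the minimum separately: throughput at most $k/n$, where the restrain is used essentially, and throughput at most $1/(3\log n)$, which holds already on the ordinary non-restrained channel. For the second term there is nothing genuinely new to do: a $k$-restrained acknowledgment-based protocol is in particular an $n$-restrained one, that is, an ordinary acknowledgment-based protocol, so it cannot exceed the throughput upper bound $\tfrac{1}{3\log n}$ known for deterministic acknowledgment-based protocols in the non-restrained model \cite{kow} (that bound being itself obtained by a phased adversary that keeps a set of mutually colliding, permanently backlogged stations and halves it over $\Theta(\log n)$ stages). The rest of the plan concerns the bound $k/n$.

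Write $g_i : \mathbb{N}\to\{0,1\}$ for the schedule of station $i$ --- a function of the station ID and the round number, as allowed for acknowledgment-based protocols with a global clock --- and set $A_t=\{\,i : g_i(t)=1\,\}$, the set of stations scheduled to transmit in round $t$. The structural fact I would prove first is that any correct $k$-restrained acknowledgment-based protocol satisfies $|A_t|\le k$ for every round $t$. Suppose not, so $|A_t|\ge k+1$ for some $t$; pick $k+1$ stations of $A_t$ and let the adversary (a leaky-bucket adversary with burstiness $b=k+1$, which is at most $n$ and hence admissible) inject one packet into each of them just before round $t$ and nothing else in the whole execution. Then at round $t$ all $k+1$ of these stations are backlogged and scheduled, hence switched on, so the restrain $k$ is exceeded, contradicting correctness. (A protocol that some adversary can force to exceed $k$ is not a legal $k$-restrained protocol, so there is nothing to prove for it.)

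Assume therefore $|A_t|\le k$ for all $t$. Double-counting the pairs $(i,t)$ with $g_i(t)=1$ and $1\le t\le W$ gives $\sum_{i=1}^{n}c_i(W)=\sum_{t=1}^{W}|A_t|\le kW$ for every window length $W$, where $c_i(W)=|\{\,t\le W : g_i(t)=1\,\}|$. Hence for each $W$ some station is scheduled at most $kW/n$ times in $[1,W]$, and since there are only $n$ stations, one fixed station $i^\star$ satisfies $c_{i^\star}(W)\le kW/n$ for infinitely many --- hence arbitrarily large --- values of $W$. Now fix any $\rho>k/n$ and consider the $(\rho,1)$-leaky-bucket adversary that injects all its packets into $i^\star$, so that $\lfloor\rho W\rfloor$ packets have been injected by the end of round $W$. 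Station $i^\star$ transmits only in rounds with $g_{i^\star}(t)=1$, so by the end of round $W$ at most $c_{i^\star}(W)$ of its packets have been removed and its queue has size at least $\lfloor\rho W\rfloor-c_{i^\star}(W)$; along the chosen windows this is at least $(\rho-k/n)W-1$, which is unbounded. Thus the protocol is unstable at every rate above $k/n$, so its throughput is at most $k/n$, and together with the first paragraph this yields the claimed bound $\min\{k/n,\,1/(3\log n)\}$.

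The step I expect to require the most care, and the only genuinely model-dependent one, is the inequality $|A_t|\le k$: it rests on the adversary being allowed to present $k+1$ simultaneously backlogged, simultaneously scheduled stations (immediate once burstiness up to $n$ is permitted, which is standard), and on the fact that acknowledgment-based schedules do not react to channel feedback, so a scheduled station holding a packet really does switch on in that round. The other point worth stating carefully is the pigeonhole that promotes ``in every window some station is under-scheduled'' to ``one fixed station is under-scheduled in infinitely many windows'', which is exactly what lets a single adversarial strategy overload one station throughout the infinite execution.
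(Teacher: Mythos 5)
Your proof is correct and follows essentially the same route as the paper's: a counting/pigeonhole argument showing some station is scheduled at most $kW/n$ times in any window of length $W$, which the adversary then overloads, with the $1/(3\log n)$ term inherited from the unrestrained acknowledgment-based bound of \cite{kow}. Your write-up is in fact tighter in two places where the paper is terse: you justify why a correct protocol must satisfy $|A_t|\le k$ for the \emph{scheduled} sets (not merely the actually switched-on ones), and you replace the paper's ``iterate the procedure'' step with a second pigeonhole fixing a single station that is under-scheduled in infinitely many windows, so that one adversarial injection strategy suffices for the whole infinite execution.
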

\begin{proof}
To prove the theorem, assume first that $\frac{k}{n}\le \frac{1}{3\log n }$.
Consider a period of $\tau$ consequent rounds. 
Suppose, to a contradiction, that during  $\tau$ rounds  the adversary can inject $\tau\cdot k/n + 1$ packets. The channel restrain of $k$ implies that at most $k$ stations can be active and, therefore, during $\tau$ rounds there could be at most $\tau\cdot k$ activities in total.
%energy can be used in total. 
There are $n$ stations in the system, hence, by pigeon-hole principle, there is a station allowed to transmit at most $\tau\cdot k/n$ packets during $\tau$ rounds.
Acknowledgement-based protocols with global clock provide adversary with a power to know stations schedules in advance, as the adversary can calculate values of the protocol function for any round and for each station; hence,
%It follows that an adversary 
it can pick a station $S$, such that the number of scheduled switch-on rounds is minimal within the system. Once $S$ is chosen the adversary can inject $\tau\cdot k/n + 1$ packets into the queue of $S$. Queues of arbitrary length would be generated by iteration of the procedure, thus the system cannot be stable, which results in contradiction.
This proves that $\rho$ cannot exceed~$\frac{k}{n}$. 
The second case when the minimum formula equals to $\frac{1}{3\log n }$ follows directly from Thm.~5.1~in~\cite{kow}.
%The proof of the above theorem is by contradiction. It follows that there is  $T$, such 
%Suppose, to a contradiction, that during  $T$ rounds  the adversary could inject more than $T\cdot k/n$ packets. 
%During those $T$ rounds there is a station that is allowed to transmit at most $T\cdot k/n$, independently on the state of queues. That is, in each round at most $k$ stations can be active, thus during $T$ rounds at most $T\cdot k$ can be active in total. Since there is $n$ stations in total, there must be a station that is allowed to transmit at most  $T\cdot k/n$ messages during $T$ rounds. Thus it is enough if the adversary adds more than $T\cdot k/n$ messages to this station and after $T$ rounds there is a least one packet retained in the queue of this station. 
%Iterating the procedure the adversary can generate queues of arbitrary length, thus the system cannot be stable, which results in contradiction.
%This proves that $\rho$ cannot exceed~$\frac{k}{n}$. 
%The second case when the minimum formula equals to $\frac{1}{3\log n }$ follows directly from Thm.~5.1~in~\cite{kow}.
\end{proof}

\subsection{Algorithm}
In this section we present an algorithm  working in $k$-restrain channel and achieving throughput 
$\Theta(\frac{k}{n \log^2 (n)})$. 
%that is stable for leaky bucket adversary with any $b$ and some  
%$\rho = \Theta(\frac{k}{n \log^2 (n)})$. 
We start with 
%definitions, 
introduction and thorough study of $k$-light selectors.

\subsubsection{$k$-light Selectors}
Let us consider a set $N=\{1,\ldots ,n\}$ and its subsets  $S,X,Y \subset N$. We say that 
$S$ \textit{hits} $X$ if $|S \cap X| = 1$. We say that $S$ \textit{avoids} $Y$ if 
$|S \cap Y| = 0$.

\begin{definition}
We say that a family $\mathcal{S} \subset 2^{N}$ is a $(n,\omega)$-selector if 
for any subset  $X\subset N$ such that $\omega/2  \leq |X|\leq \omega$ there are $\omega/4$ elements 
 hit by subsets from $\mathcal{S}$. 
\end{definition}

Note that this definition is a special case of a \textit{selective family} introduced in \cite{Chlebus2005}. 
The intuition behind $\mathcal{S}$ is as follows: we can ``separate'' at least a fraction of elements 
of any subset $X$ (of appropriate size) using sets that belong to $\mathcal{S}$.

\begin{definition}
 We say that $\mathcal{S}=(n,\omega)$-selector is $k$-light if any $S\in \mathcal{S}$ satisfies $|S|\le k$.
 %has at most $k$ elements. 
\end{definition}

\begin{theorem}\label{thmSelectors}
There exists a $k$-light $(n,\omega)$-selector  of size 
$    m = O\left ((\omega + n/k) \log n\right ) $.
%%\begin{gather}
%$    m = O\left (\omega \log n\right ) \mbox{   for  }   k \geq \frac{n}{\omega} \mbox{, and } 
%%    \nonumber\\ 
%    m = O\left (\frac{n}{k} \log n\right ) \mbox{  for   } k < \frac{n}{\omega}$.
%% \end{gather}
\end{theorem}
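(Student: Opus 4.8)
The plan is to use the probabilistic method, constructing the family $\mathcal{S}$ by choosing each set $S\in\mathcal{S}$ to be a random subset of $N$ in which every element is included independently with a carefully tuned probability $p$. Since we need each $S$ to satisfy $|S|\le k$, we cannot take $p$ as large as $1/\omega$ (the value one would use for an ordinary selector when $\omega\le k$); instead I would set $p=\min\{1/\omega,\ c\,k/n\}$ for a suitable constant $c<1$, which guarantees $\mathbb{E}[|S|]\le k/2$ (roughly), and then use a Chernoff bound to argue $|S|\le k$ with probability at least, say, $1-1/(2m)$, where $m$ is the target family size. This handles the ``$k$-light'' requirement up to a union bound over the $m$ sets.

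Next I would establish the selective property. Fix a subset $X\subseteq N$ with $\omega/2\le|X|\le\omega$. For a single random set $S$ and a fixed element $x\in X$, the probability that $S$ hits $X$ at exactly $x$ is $p(1-p)^{|X|-1}$; summing over $x\in X$, the expected number of elements of $X$ isolated by $S$ is $|X|\,p(1-p)^{|X|-1}$. With $p=\min\{1/\omega,ck/n\}$ one checks $(1-p)^{|X|-1}=\Omega(1)$ (since $p|X|\le p\omega\le 1$), so a single $S$ isolates an element of $X$ with probability $\Omega(\min\{1,\omega k/n\})=\Omega(\omega/(\omega+n/k))$. Over $m$ independent sets, the probability that a \emph{particular} element $x\in X$ is never isolated is at most $(1-p(1-p)^{\omega-1})^m$, which is at most $1/n^2$ once $m=\Theta((\omega+n/k)\log n)$. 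Here one must be a little careful: ``isolated by some set'' is not quite the same as what the definition asks, so I would instead bound, for each $x\in X$, $\Pr[x\text{ is never hit-isolated}]$ and then take a union bound to conclude that with positive probability \emph{every} $x\in X$ is isolated by some $S\in\mathcal{S}$ --- which is even stronger than needing $\omega/4$ of them. Finally, a union bound over all $X$: there are at most $\binom{n}{\omega}\le n^{\omega}$ candidate sets $X$, so I need the failure probability per $X$ to beat $n^{-\omega}$, which forces an extra factor of $\omega$ in $m$; hence $m=O((\omega+n/k)\log n\cdot\omega)$, which is too large. The fix is the standard one: rather than demanding that \emph{every} element of every $X$ be isolated, bound the expected number of non-isolated elements of a fixed $X$ and apply a concentration / averaging argument so that failure for a fixed $X$ (meaning fewer than $\omega/4$ isolated elements) has probability at most $2^{-\Omega(\omega)}$ when $m=\Theta((\omega+n/k)\log n)$; then the union bound over $n^{\omega}$ sets $X$ goes through because $2^{-\Omega(\omega)}$ absorbs $n^{-\omega}=2^{-\omega\log n}$ once the constant in $m$ is large enough. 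Combining with the $k$-lightness union bound over the $m$ sets, a random $\mathcal{S}$ of size $m=O((\omega+n/k)\log n)$ satisfies all requirements with positive probability, so such a selector exists.

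The main obstacle is getting the two union bounds to balance simultaneously: the ``$k$-light'' failure event costs a $\log m$-type factor (cheap), but the selective-property failure must be driven below $n^{-\omega}$, which only works if the per-$X$ failure probability decays like $2^{-c\omega}$ rather than merely polynomially in $n$. That is why one cannot simply ask for a single isolated element per $X$ and must instead argue concentration of the count of isolated elements around its mean $\Theta(\omega)$; establishing this concentration (e.g.\ via a Chernoff-type bound on a sum of indicator variables, noting negative correlation or using bounded differences across the $m$ independent sets) is the technical heart of the proof. The polynomial-time construction claimed in the theorem statement would then follow by derandomization --- either the method of conditional expectations applied to the above analysis, or by invoking known polynomial constructions of selective families from \cite{Chlebus2005} and truncating/splitting sets to enforce the size bound $k$ --- but the existence bound is what the probabilistic argument above delivers directly.
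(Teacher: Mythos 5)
Your route is genuinely different from the paper's, and its skeleton is sound, but it has a real gap at the step you yourself flag as the technical heart. The paper never proves concentration of the number of isolated elements. Instead it draws $m$ random sets of \emph{fixed} size $l=n/\omega$ and proves a stronger per-set event: for every \emph{pair} $(X,Y)$ with $\omega/4\le|X|\le\omega$ and $|Y|\le\omega/4$, a single random set simultaneously hits $X$ and avoids $Y$ with constant probability $c$. The failure probability per pair is then $(1-c)^m=e^{-\Omega(m)}$, which with $m=O(\omega\log n)$ beats the $n^{O(\omega)}$ union bound over pairs outright; the $\omega/4$ distinct hit elements are afterwards extracted \emph{deterministically} by a greedy peeling argument (hit an element, move it from $X$ to $Y$, repeat). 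Lightness is handled separately: when $\omega\ge n/k$ the sets of size $n/\omega$ are already light, and otherwise each set is split into $\lceil n/(k\omega)\rceil$ pieces of size at most $k$ (splitting preserves hits because a hitting set meets $X$ in exactly one element), which is exactly where the $ (n/k)\log n$ term comes from. This sidesteps both of your union-bound tensions at once.

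The gap in your version is the claimed tail bound for a fixed $X$. First, the bookkeeping is off: to survive a union bound over $n^{\omega}$ choices of $X$ you need per-$X$ failure probability $n^{-\Omega(\omega)}=2^{-\Omega(\omega\log n)}$, not $2^{-\Omega(\omega)}$; the latter does not ``absorb'' $n^{\omega}$. Second, of the two tools you offer, bounded differences provably cannot deliver this: each of the $m$ sets changes the count of isolated elements by at most $1$, so McDiarmid gives only $\exp(-\Theta(\omega^2/m))$, and since $m=\Omega(\omega\log n)$ this is $\exp(-O(\omega/\log n))$, far weaker than needed. The negative-correlation route can be made to work --- for a single set the events $\{S\cap X=\{x\}\}$, $x\in X$, are pairwise disjoint, so $\Pr[\text{all of a subset } T \text{ fail}]\le\prod_{x\in T}\Pr[x\text{ fails}]$, and the generalized union bound over subsets $T$ of size $\omega/4$ then yields $\binom{\omega}{\omega/4}n^{-C\omega/4}=n^{-\Omega(\omega)}$ for large enough $C$ --- but this is asserted rather than proved in your sketch, and it is the entire content of the argument. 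As written, the proposal does not yet constitute a proof; either supply that correlation/subset-union-bound step explicitly, or switch to the paper's hit-and-avoid formulation, which avoids concentration altogether.
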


\begin{proof}
The first part of the formula, $O\left (\omega \log n\right ) $ for $\omega \ge \frac{n}{k}$, comes from generalizing the reasoning in \cite{chrobak}.
Let us assume that $\omega>1$ and $\omega | n$.  Let $m$ be the size of a selector to be fixed later. 
Let us choose  independently  $m$ random  subsets of $\{1,\ldots , n\}$ of size $l=\frac{n}{\omega}$. 
That is, $\mathcal{S} = (S_1, \ldots , S_m)$ is a random family.  Let us consider any \textbf{fixed} sets $X,Y \subset \{1,\ldots ,n\}$,
such that $\omega/4 \leq |X| \leq \omega$; $|Y| \leq \omega/4$ and a \textbf{random} $S_i$.
\begin{gather*}
\Pr[S_i \mbox{ avoids } Y \mbox{ and hits } X]= 
\frac{{ |X| \choose 1 } { n - |X| - |Y| \choose l-1 }}{ {n \choose l}}  = 
|X| \cdot l \cdot \frac{(n - |X| - |Y|)^{\underline{l-1}}}{n^{\underline{l}}} = \\
\frac{|X| \cdot l}{n-l +1}\prod\limits_{i=0}^{l-2}\frac{n-|X|-|Y|-i}{n-i} > 
\frac{\frac{\omega}{4} \cdot l}{n}\prod\limits_{i=0}^{l-2}\frac{n-\frac{5}{4}\omega-i}{n-i} \geq 
\frac{\frac{\omega}{4} \cdot l}{n} \left(1 - \frac{\frac{5}{4}\omega}{n-l+2} \right)^{l-1} \geq \\
\frac{\frac{\omega}{4} \cdot \frac{n}{\omega}}{n} \left(1 - \frac{\frac{5}{4}\omega}{n/4} \right)^{l-1} \geq 
\frac{1}{4} \left(1 - \frac{5}{n/\omega} \right)^{\frac{n}{\omega}-1} \geq \frac{1}{4} \exp(-5)= c > 0
\ .
\end{gather*}

Let us bound the  probability that  for \textbf{any} sets $X,Y$ such that $\omega/4 \leq |X| \leq \omega$ and $|Y| \leq \omega$ 
there exists an $i$ such that $S_i$ hits $X$ and avoids $Y$. The probability of complementary event can be roughly bounded  as follows:
\begin{gather*}
\sum_{|X| = \frac{\omega}{2}}^{\omega} \binom{n}{|X|} \sum_{y = 0}^{\omega} \binom{n}{|Y|} (1 - c)^m \leq \nonumber 
    \omega^2n^{2w}(1 - c)^m \leq \nonumber 
    n^{4\omega}(1 - c )^m \leq \nonumber 
    e^{4 \omega \ln n - m \ln (1-c)} < 1
    \ . 
\end{gather*}

Note that the last inequality holds for some  $m = O(\omega \log n )$. That is, for such $m$ the random structure 
$\mathcal{S}= (S_1,S_2,\ldots , S_n)$ with probability greater then zero hits \textbf{any} $X$ and avoids \textbf{any} $Y$ of an appropriate sizes.  
Thus such structure must exist and in consequence we can take $\mathcal{S}$ and use it for the  reminder of~the~proof. 

 Now we show that $\mathcal{S}$ is a $(n,\omega)$-selector. Let us take any $X$  
such that $ \omega/2 \leq |X| \leq \omega$ and $Y=\emptyset$. By the property of $\mathcal{S}$ there exists 
$S_{i_1}$ such that it hits $X$. Let $\{r_1\} =  |S_{i_1} \cap X|$. Now let us construct 
$X=X\setminus \{r_1\}$ and $Y = Y \cup \{r_1\}$. Since still $\omega/4 \leq |X|<\omega$ and $|Y| \leq \omega/4$ we can find 
$S_{i_2}\in \mathcal{S}$, such that it hits the truncated $X$ and avoids $Y=\{r_1\}$ thus there exists 
$r_2= |S_{i_2} \cap X|$. Then we set $X=X\setminus \{r_2\}$ and $Y = Y \cup \{r_2\}$.  We iterate such separation  $\omega/4$ times to get $\omega/4$ distinct elements 
that are chosen from the initial $X$. 
Thus we get the first case of the theorem. 

\medskip

To prove the second part of the formula, $O\left ((n/k) \log n\right ) $ for $\frac{n}{k}>\omega$, first we need to construct an 
$\frac{n}{\omega}$-light selector $\mathcal{S}'$
of size $m= O(\omega \log n)$. Clearly, this is possible using the above construction.
Then we need to partition  each $S_i \in \mathcal{S}'$ into  $\lceil\frac{n}{k \omega}\rceil$ sets of size 
at most $k$ to obtain a ``diluted'' selector. This results in $m = O(\frac{n}{k \omega} \omega \log n) = O(\frac{n}{k }  \log n)$ 
sets of size at most $k$. 
%This observation  concludes the proof. 
\end{proof}

\subsubsection{Construction of selector in polynomial time}

We present a polynomial time construction of $k$-light selectors, which is only slightly worse than existential result.
It uses two major components: dispersers and superimposed codes.

\noindent
{\bf Dispersers.}
A bipartite graph~$H=(V,W,E)$, with set $V$ of inputs
and set $W$ of outputs and set~$E$ of edges, 
is a \textit{$(n,\ell,d,\delta,\ep)$-disperser} 
if it has the following 
%three 
properties:
%\begin{quote}
%\begin{quote}
%\begin{itemize}
%\item%[\sf Sizes:]
$|V|=n$ and $|W|=\ell d/\delta$;
%
%\item%[\sf Regularity:] 
each $v\in V$ has $d$ neighbors;
%$H$ is $d$-left-regular.
%
%\item%[\sf Dispersion:]
for each $A\subseteq V$ such that $|A|\ge \ell$,
the set of neighbors of $A$ is of size at least $(1-\ep)|W|$.
%\end{itemize}
%\end{quote}
%\end{quote}
%Let graph $G=(V,W,E)$, where $|V|=n$, $|W|=\Theta((k-r+1)d/\delta)$,
%be a $(k-r+1,d,\ep)$-disperser, for some numbers $d$ and $\delta$.
%An explicit construction of such graphs, that is, 
%in time polynomial in~$n$, was given by
Ta-Shma, Umans and Zuckerman~\cite{TUZ} showed how to
construct, in time polynomial in~$n$, 
an $(n,\ell,d,\delta,\ep)$-disperser for any $\ell\le n$, some $\delta=O(\log^3 n)$
and $d=O(\text{polylog }n)$.
% is a bound on the left-degrees. 
%The value $\log\delta$ is called the {\em entropy loss} of the disperser.

\noindent
{\bf Superimposed codes.}
A set of $b$ binary codewords of length $a$, 
represented as columns of an $a\times b$ binary array,
is a \textit{$d$-disjunct} {\em superimposed code}, if it
satisfies the following property:
no boolean sum of columns in any set~$D$ of $d$ columns can cover
a column not in~$D$.
Alternatively, if codewords are representing subsets of $[a]$, then $d$-disjunctness means
that no union of up to~$d$ sets in any family of sets~$D$ could cover a set
outside~$D$.
Kautz and Singleton~\cite{KS} proposed a $d$-disjunct superimposed codes for $a=O(d^2 \log^2 b)$,
which could be constructed in polynomial time. 
%A book by Du and Hwang~\cite{DH} provides a contemporary
%exposition of superimposed coding and its relevance to
%nonadaptive group testing.

\noindent
{\bf Polynomial construction of light selectors.}
We show how to construct $k$-light $(n,\omega)$-selectors of length
%$\cO(\min \big[ \, n,\frac{k^2}{k-r+1} \text{ polylog }n\, \big] )$, for
%\textit{any} configuration of parameters $r\le k\le n$,
$m = O\left (\omega \text{ polylog } n\right )$ for $k \geq \frac{n}{\omega}$  and  
$m = O\left (\frac{n}{k} \text{ polylog } n\right )$  for  $k < \frac{n}{\omega}$,
in time polynomial in~$n$.
This is equivalent to constructing $k$-light $(n,\omega)$-selectors of length
%$\cO(\min \big[ \, n,\frac{k^2}{k-r+1} \text{ polylog }n\, \big] )$, for
%\textit{any} configuration of parameters $r\le k\le n$,
$m = O\left ((\omega+n/k) \text{ polylog } n\right )$ 
in time polynomial in~$n$.
The construction combines specific dispersers with superimposed codes.
%
%If $r\le 3k/4$ then we can use the construction of an
%$(n,k,3k/4)$-selector given by Indyk~\cite{Ind}.
%Assume that $r>3k/4$.
Let $0<\ep<1/2$ be a constant.
Let $G=(V,W,E)$ be an $(n,\omega/4,d,\delta,\ep)$-disperser for some $\delta=O(\log^3 n)$
and $d=O(\text{polylog }n)$, constructed in time polynomial in $n$, c.f., 
%Ta-Shma, Umans and Zuckerman~
\cite{TUZ}.
%
%where $|V|=n$, $|W|=\Theta((k-r+1)d/\delta)$,
%be a $(k-r+1,d,\ep)$-disperser, for some numbers $d$ and $\delta$.
%
Let $\cM=\{M_1,\ldots,M_a\}$ be the rows of the $c\delta$-disjunct 
superimposed code array of $n$ columns, for $a=O((c\delta)^2 \log^2 n)$,
constructed in time polynomial in $n$, c.f., Kautz and Singleton~\cite{KS};
here $\delta$ is the parameter from the disperser~$G$ and $c>0$ is a sufficiently large constant.
W.l.o.g. we could uniquely identify an $i$th of the $n$ columns of the superimposed code with
a corresponding $i$th node in~$V$.
%an explicit 
%$(n,c\delta\frac{k}{k-r+1})$-strongly-selective family, 
%for a sufficiently large constant $c>0$ that will be fixed later, 
%of size 
%$m=\cO(\min \big[ \, n,\delta^2(\frac{k}{k-r+1})^2\log^2
%n\,\big])$,
%as constructed by Kautz and Singleton~\cite{KS}.

For a constant integer $c$ we define a 
$k$-light $(n,\omega)$-selector $\cS(n,\omega,k,c)$ of length
$m\le \min\{n,a|W|\alpha\}$, for some $\alpha$ to be defined later, which consists of sets $S_i$, 
for $1\le i\le m$.
Consider two cases. In the case of $n\le m|W|\alpha$, we define $S_i=\{i\}$.
%take the singleton 
%containing only the $i$-th element of~$V$ as~$F(i)$.
In the case of $n> a|W|\alpha$, we first define sets $F_j$ as follows:
for $j=xa+y\le a|W|$, where $x,y$ are non-negative integers 
satisfying $x+y>0$, $F_j$ contains all the nodes $v\in V$ 
such that $v$ is a neighbor of the $x$-th node in~$W$ and $v\in M_y$;
i.e., $F_{x\cdot a+y} = M_y\cap N_G(x)$.
Next, we split every $F_j$ into $\lceil |F_j|/k \rceil$ subsets $S$ of size at most $k$ each,
and add them as elements of the selector $\cS(n,\omega,k,c)$.
Note that each set $S_i$ from $\cS(n,\omega,k,c)$ corresponds to some
set $F_j$ from which it resulted by the splitting operation; we say that $F_j$ is a parent of $S_i$ and
$S_i$ is a child of $F_j$.
In this view, parameter $\alpha$ in the upper bound $m\le a|W|\alpha$ could be interpreted as
an amortized number of children of a set $F_j$. We will show in the proof of the following theorem that
$\alpha \le \frac{nd \cdot (c\delta)^2 \log^2 n}{k} \cdot \frac{1}{a|W|} + 1$.
The proof of the following theorem is deferred to Appendix~\ref{sec:construction-proof}.

\begin{theorem}
\label{thm:SelectorsConstructive}
$\cS(n,\omega,k,c)$ is a $k$-light $(n,\omega)$-selector  of length 
$m = O\left (\min\{n,(\omega+n/k) \text{ polylog } n\}\right )$ for a sufficiently large constant $c$, and is constructed in time polynomial in $n$.
%\begin{gather}
%$    m = O\left (\omega \log n\right ) \mbox{   for  }   k \geq \frac{n}{\omega} \mbox{, and } 
%%    \nonumber\\ 
%    m = O\left (\frac{n}{k} \log n\right ) \mbox{  for   } k < \frac{n}{\omega}$.
%% \end{gather}
\end{theorem}

\subsubsection{k-light Interleaved Selectors protocol}

%We assume existence of a global clock. 

Let us assume that $n$ is a power of $2$ and thus $\log n $ is an integer. We consider a sequence of $\mathcal{S}_1,\ldots ,\mathcal{S}_{\log (n)} $, where $\mathcal{S}_i$ is  $k$-light $(n,2^{i})$-selector of size $m_i$. Moreover, let $S^{j}_{i}$ be the $j$-th set of the $i$-th selector. That is, $\mathcal{S}_i = \{S^{1}_{i},\ldots , S^{m_i}_{i} \}$.
Let us consider the round number $t$ that can be uniquely represented 
as $t= j\log n + i$ for $1\leq i \leq \log n $ and $j \geq 0$.
Station $x$  transmits in the $t$ round if and only if 
 $x$ has a packet to be transmitted  and 
 $x \in S^{j \mod m_i  + 1}_i$.
The order  sets of selectors ``activating'' stations is crucial for performance of the algorithm and motivate its name. 
This order is depicted on the Figure~\ref{orderACK}.

\begin{figure}
\centering
  \begin{minipage}[c]{0.3\textwidth}
\centering
   \begin{tikzpicture}[node distance=0.5cm]
 \tikzset{every node}=[font=\small]
		% nodes
		\node (001) at (0, 1.8) {$S_1^1$};
		\node (002) at (0.5, 1.8) {$S_1^2$};
		\node (003) at (1, 1.8) {$S_1^1$};
		\node (004) at (1.5, 1.8) {$S_1^2$};
		\node (005) at (2, 1.8) {$S_1^1$};
		\node (006) at (2.5, 1.8) {$S_1^2$};
		\node (007) at (3, 1.8) {$S_1^1$};
		\node (008) at (3.5, 1.8) {$S_1^2$};
		\node (009) at (4, 1.8) {$S_1^1$};
		\node (010) at (4.5, 1.8) {$S_1^2$};
		
		\node (101) at (0, 1.2) {$S_2^1$};
		\node (102) at (0.5, 1.2) {$S_2^2$};
		\node (103) at (1, 1.2) {$S_2^3$};
		\node (104) at (1.5, 1.2) {$S_2^4$};
		\node (105) at (2, 1.2) {$S_2^5$};
		\node (106) at (2.5, 1.2) {$S_2^1$};
		\node (107) at (3, 1.2) {$S_2^2$};
		\node (108) at (3.5, 1.2) {$S_2^3$};
		\node (109) at (4, 1.2) {$S_2^4$};
		\node (110) at (4.5, 1.2) {$S_2^5$};
		
		\node (201) at (0, 0.6) {$S_3^1$};
		\node (202) at (0.5, 0.6) {$S_3^2$};
		\node (203) at (1, 0.6) {$S_3^3$};
		\node (204) at (1.5, 0.6) {$S_3^4$};
		\node (205) at (2, 0.6) {$S_3^5$};
		\node (206) at (2.5, 0.6) {$S_3^6$};
		\node (207) at (3, 0.6) {$S_3^1$};
		\node (208) at (3.5, 0.6) {$S_3^2$};
		\node (209) at (4, 0.6) {$S_3^3$};
		\node (210) at (4.5, 0.6) {$S_3^4$};
		
		% arrows
		\draw[->]
		(001) edge (101)
		(101) edge (201)
		(201) edge (002)
		
		(002) edge (102)
		(102) edge (202)
		(202) edge (003)
		
		(003) edge (103)
		(103) edge (203)
		(203) edge (004)
		
		(004) edge (104)
		(104) edge (204)
		(204) edge (005)
		
		(005) edge (105)
		(105) edge (205)
		(205) edge (006)
		
		(006) edge (106)
		(106) edge (206)
		(206) edge (007)
		
		(007) edge (107)
		(107) edge (207)
		(207) edge (008)
		
		(008) edge (108)
		(108) edge (208)
		(208) edge (009)
		
		(009) edge (109)
		(109) edge (209)
		(209) edge (010)
		
		(010) edge (110)
		(110) edge (210);
\end{tikzpicture}
  \end{minipage}\hfill
  \begin{minipage}[c]{0.6\textwidth}
\centering
\caption{Interleaved Selectors: $\mathcal{A} = \{\mathcal{S}_1,\mathcal{S}_2,\mathcal{S}_{3} \}$, where $\mathcal{S}_1 = \{S^{1}_{1},S^{2}_{1} \}$, $\mathcal{S}_2 = \{S^{1}_{2},\ldots , S^{5}_{2} \}$ and $\mathcal{S}_3 = \{S^{1}_{3},\ldots , S^{6}_{3} \}$.} 
\label{orderACK}
  \end{minipage}
\end{figure}
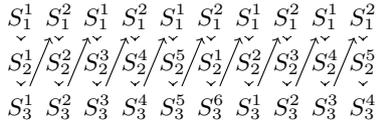

\subsection{Protocol analysis}

Obviously in a single round at most $k$ stations can transmit, since the sets $S_{i}^{j}$ consist of at most $k$ elements.
%thus the protocol never crosses the limit of $k$ stations transmitting in the same~round. 
%
Let us now investigate the performance of the protocol. 

\begin{theorem} 
\label{t:ack-based}
 Let us assume that in round $t$ there are $r$ stations with nonempty queues, such that $2^i \leq r < 2^{i+1}$.
 The system will transmit at least $2^{j} /16$ packets before the round $t'=t + 8\sum_{l=i}^{j} m_{l} \log n $ for some $j\geq i$. 
\end{theorem}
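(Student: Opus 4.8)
The plan is a doubling argument on the size class of the set of backlogged stations, climbing through the selectors $\mathcal{S}_i,\mathcal{S}_{i+1},\dots,\mathcal{S}_{\log n}$ until one of them is forced to make progress. Write $R_s$ for the number of stations with a non-empty queue in round $s$, so $R_t=r\in[2^i,2^{i+1})$. Since $R_s$ can drop only when a successful transmission empties a queue, if fewer than $q$ packets are transmitted over an interval then at most $q$ stations empty during it, so $R_s$ never falls more than $q$ below its value at the start of the interval. The plan is to process levels $l=i,i+1,\dots$ in turn: on entering level $l$ I will have a round $\sigma_l$ (with $\sigma_i=t$) at which $R_{\sigma_l}\ge 2^l$, and I will either stop or produce $\sigma_{l+1}\le\sigma_l+8m_l\log n$ with $R_{\sigma_{l+1}}\ge 2^{l+1}$; concatenating these sub-windows yields precisely the budget $t+8\sum_{l=i}^{j}m_l\log n$.

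At level $l$ I consider the $8m_l\log n$ rounds starting at $\sigma_l$. The protocol scans $\mathcal{S}_l$ at one set per $\log n$ rounds, and $m_{l+1}=O(m_l)$, so this interval contains a full pass through $\mathcal{S}_l$ and through $\mathcal{S}_{l+1}$ with room to spare; that is what the constant $8$ buys. If $\ge 2^l/16$ packets are transmitted here, stop with this $j=l$. Otherwise fewer than $2^l/16$ stations empty, so $R_s>2^l-2^l/16>2^{l-1}$ on the whole interval. If $R_s\ge 2^{l+1}$ at some round of the interval, put $\sigma_{l+1}$ there and recurse at level $l+1$; so assume $2^{l-1}<R_s<2^{l+1}$ throughout, and derive a contradiction.

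Fix a full pass (through $\mathcal{S}_l$ or $\mathcal{S}_{l+1}$) inside the interval, let $P$ be the stations that stay backlogged for its whole duration, and let $Q$ be the stations backlogged at some round of the pass but not in $P$. Every station backlogged at a round of the pass lies in $P\cup Q$; here $|P|$ lies between $2^{l-1}$ and $2^{l+1}$ (fewer than $2^l/16$ stations left), and $|Q|<2^{l+1}$, because a station of $Q$ either left (fewer than $2^l/16$ such) or first became backlogged during the interval, which needed a fresh injection and is capped by $R_s<2^{l+1}$ — note this uses no bound on $\rho$ or $b$, a large burst being exactly the recursion case. Now pick $\omega\in\{2^l,2^{l+1}\}$ so that $[\omega/2,\omega]$ receives $P$ after padding $P$ with dummy stations (possible since $|P|\ge 2^{l-1}$ and the ranges $[2^{l'-1},2^{l'}]$ tile the integers), fold the pad into $Q$, and iterate the hit-and-avoid property of $\mathcal{S}_l$ or $\mathcal{S}_{l+1}$ supplied by the construction in Theorem~\ref{thmSelectors}: repeatedly take $S\in\mathcal{S}$ with $|S\cap P|=1$ and $S$ disjoint from the current forbidden set, then move the separated element of $P$ into the forbidden set. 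This keeps working while $|P|\ge\omega/2$ and the forbidden set has size at most $\omega$, so it produces $\Theta(2^l)$ distinct $x\in P$, each with a set $S_x\in\mathcal{S}$ meeting $P$ only in $x$ and missing $Q$. In the round of the pass in which $S_x$ is scheduled, $x$ is the only backlogged station of $S_x$ (all backlogged stations then lie in $P\cup Q$), and $x$ is backlogged since $x\in P$, so $x$ transmits successfully. Hence $\Theta(2^l)$ packets are transmitted in this single pass — more than $2^l/16$ for the right constants — contradicting the assumed bound.

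The recursion terminates because $R_s\le n=2^{\log n}$: the branch $R_s\ge 2^{l+1}$ is impossible at $l=\log n$, so there the contradiction branch is forced and at least $n/16$ packets are transmitted in the last interval. Thus the process halts at some $j\in\{i,\dots,\log n\}$ with at least $2^j/16$ packets transmitted by round $t+8\sum_{l=i}^{j}m_l\log n$, as claimed. The step I expect to be the real obstacle is the collision control in the contradiction branch: stations injected into mid-pass are not separated by the selector and can collide with a separated $x$, but their number is controlled because many of them would push $R_s$ up to the next level (the other case), and the avoid-half of the selector property lets the $\Theta(2^l)$ chosen sets $S_x$ steer clear of the few that remain; aligning the size classes across the thresholds $2^l$ and checking that $16$, $8$ and the selector's $\omega/4$ hitting guarantee leave enough slack is the remaining bookkeeping.
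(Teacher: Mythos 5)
Your overall architecture is the same as the paper's: a level-by-level doubling argument in which each window of roughly $(m_l+m_{l+1})\log n$ rounds either yields $2^l/16$ transmissions, or forces the set of backlogged stations to grow, with termination at level $\log n$ because the backlogged set cannot exceed $n$. The place where you diverge --- and where the proposal has a genuine gap --- is exactly the step you flag as ``the real obstacle'': controlling the stations that become backlogged mid-pass. You collect them into a set $Q$ and propose to iterate the hit-and-avoid property with $Q$ (plus the already-separated elements) as the forbidden set. But the avoid budget supplied by the construction in Theorem~\ref{thmSelectors} is only $|Y|\le \omega/4$, whereas your own accounting gives $|Q|$ up to about $2^{l+1}-2^{l-1}+2^l/16$ (joins equal the net change in the backlogged count plus the number of emptied stations), i.e.\ $|Q|$ can be comparable to or larger than $|P|$ and far exceeds $\omega/4$ for either choice $\omega\in\{2^l,2^{l+1}\}$. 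So ``the few that remain'' are not few, and the iteration producing $\Theta(2^l)$ sets $S_x$ that each hit $P$ once and miss $Q$ is not justified by the selector you have. Relatedly, your claim in the non-recursing branch is strictly stronger than what the paper establishes: you assert a flat contradiction whenever $2^{l-1}<R_s<2^{l+1}$ throughout and fewer than $2^l/16$ packets are sent, while the paper's Lemma~\ref{lemP} only concludes the weaker trichotomy in which the backlogged set may merely have grown by $2^i/8$ (an event it then counts as occurring at most $8(\log n-i)$ times per level).

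The paper's way around this is not to avoid the transient stations at all but to charge them. Letting $Y$ be the stations filled by the adversary during the window and $\mathcal{T}(X_0\cup Y)$ the stations that would transmit in the \emph{static} run started from $X_0\cup Y$, one has $|\mathcal{T}(X_0\cup Y)|\le |\mathcal{O}^*|+|Y|$: if $v$ is alone in its scheduled set within $X_0\cup Y$, then in the dynamic run $v$ either transmits or is absent from the current backlogged set, and absence means $v\in Y$ or $v$ already transmitted. Combined with the plain selector guarantee $|\mathcal{T}(X_0\cup Y)|\ge 2^i/4$ (no avoidance needed), fewer than $2^i/16$ transmissions forces $|Y|>3\cdot 2^i/16$ and hence growth by $2^i/8$. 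If you want to salvage your route instead, you would have to strengthen the light-selector definition/construction to tolerate forbidden sets of size $\Theta(\omega)$ rather than $\omega/4$; as written, the crux step fails.
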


\begin{proof}
Let us first consider a set $X_{0} \subset \{1,\ldots , n\}$ of stations such that $|X_{0}|=r$ and $2^{i-1}\leq r < 2^i$. 
Let   $\mathcal{S}_i = \{S^{1}_{i},\ldots , S^{m_i}_{i} \}$
be a $(n,2^i)$-selector and  $\mathcal{S}_{i+1} = \{S^{1}_{i},\ldots , S^{m_{i+1}}_{i} \}$ be a $(n,2^{i+1})$-selector for some $i<\log (n)$.
We assume that stations from $X_{0}$ have  nonempty queues of messages. We observe all stations  during  $T=m_i+m_{i+1}$ rounds.  
We assume that the adversary can add packets to queues (even to initially empty queues) during the execution of the algorithm. 
Let $X_{t}$ be the set of nonempty stations in the round $t$. In the $j$-th round  
stations from $X_{j}\cap S^{j}_{i}$ transmit for $j<m_i$ and $X_{j}\cap S^{j-m_i}_{i+1}$ for $j\geq m_i$. 
In other words, in consecutive rounds  transmit nonempty stations pointed by sets from  $\mathcal{S}_{i}$, 
then stations from~$\mathcal{S}_{i+1}$.
%consecutive sets from 
%$\mathcal{S}_{i+1}$.

\begin{lemma}\label{lemP}
 If less then  $2^{i}/16$ different stations has transmitted during $T$ rounds of the process  
 then $|X_{T}| \geq \min \{r+ 2^i/8, 2^{i+1} \}$.
\end{lemma}

\begin{proof}
 Let $Y=\bigcup_{i=1}^{T} X_{i}\setminus X_{0}$ be the set of all stations filled by the adversary during the process. Let $\mathcal{O}^{*}$ be the set of stations that transmitted during the process. 
Moreover, let $\mathcal{T}(X)$ denote the set of the stations that transmitted at least once in the \textit{static} case with the initial set $X$ of nonempty stations, i.e. when the adversary does not add any messages. 

Clearly, $|\mathcal{T}(X_{0}\cup Y)| \leq  |\mathcal{O}^{*}| + |Y|$. Indeed, adding  $Y$ to the set of nonempty stations can increase the number of transmitting stations only by $|Y|$. On the other hand if a transmission of a station is blocked in the original process 
it must be also blocked in the case if all $X_{0}\cup Y$ stations are nonempty at the beginning.

Let us consider two cases. In the first we assume $|X_{0}\cup Y| < 2^ {i+2}$. In follows that  $|\mathcal{T}(X_{0}\cup Y)|\geq 2^i / 4$ because of the properties of  selectors. 
Thus $2^i / 4 \leq  |\mathcal{O}^{*}| + |Y|$. We assumed however that   $|\mathcal{O}^{*}| < 2^{i}/16$ , thus $|Y| > 3/16 \cdot 2^i $. That is, the adversary added messages to at least $3/16 \cdot 2^i$ initially nonempty stations but less then  $2^{i}/16$ has transmitted. Finally in he round $T$ a least  $r  + 2^{i}/8$ are nonempty. 

In the remaining case, if $|X_{0}\cup Y| >  2^ {i+2}$ and  only at most stations $2^{i}/16$ transmitted, the lemma holds trivially. 
\end{proof}

Note that in \textbf{any} contiguous segment of $(m_i + m_{i+1}) \log n$ rounds all sets of stations  with nonempty queues 
from  $\mathcal{S}_{i},\mathcal{S}_{i+1}$ are allowed to transmit  (see Fig~\ref{orderACK}). 
Following Lemma~\ref{orderACK} after $(m_i + m_{i+1}) \log n$ executed  rounds  
%we observe that 
at least one of the three events 
occurred: (1)  $2^{i}/16$ transmitted; (2) the number of stations with nonempty queues 
increased by  $2^{i}/8$; (3) there is at least $2^{i+1}$ nonempty queues.

Note that event (3) may occur at most $\log n  - i$ times, similarly event (2) may occur at most  $8(\log n  - i)$ 
times till reaching the state of at least $2^{n-1}$ nonempty stations. Thus, after at most 
$\sum^{\log n-1}_{i=1} (m_i + m_{i+1}) \log n + m_{\log n} \log n  = O(\frac{n}{k}\log^2 n)$ rounds at least a fraction of nonempty stations will transmit
at least one packet. 
\end{proof}

Combining Theorem~\ref{t:ack-based} with Theorem~\ref{thmSelectors} we get:

\begin{corollary}
The protocol 
%is stable for the leaky bucket adversary with any $b$ and \eh{with throughput of}  $\rho = 
achieves throughput $\Theta( \frac{k}{n \log^2n})$ on $k$-restrained channels.
\end{corollary}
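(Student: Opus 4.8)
The plan is to combine the two theorems stated earlier to bound the time needed to make progress on the worst-case backlog, then feed this bound into a standard leaky-bucket potential argument. First I would unwind the iteration in Theorem~\ref{t:ack-based}: starting from $r$ nonempty stations with $2^i \le r < 2^{i+1}$, the theorem guarantees that within $t' - t = 8\sum_{l=i}^{j} m_l \log n$ rounds at least $2^j/16$ packets are transmitted for some index $j \ge i$. Using Theorem~\ref{thmSelectors}, each $m_l = O((2^l + n/k)\log n)$, so $\sum_{l=i}^{\log n} m_l = O((2^{\log n} + (n/k)\log n)\log n) = O((n + (n/k)\log n)\log n)$; since $k \le n$ this simplifies to $O((n/k)\,n\,)$-type terms, and after multiplying by the extra $\log n$ from the interleaving we obtain a window of length $O\!\big(\frac{n}{k}\log^2 n\big)$ rounds (absorbing the $n$ into the $n/k$ bound as in the corollary's target, since $2^j/16 \ge r/32$ and we only need to clear a constant fraction). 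In that window the system transmits $\Omega(r)$ packets while the $(\rho,b)$-adversary injects at most $\rho \cdot O(\frac{n}{k}\log^2 n) + b$ packets.

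Next I would set up the stability argument. Fix the injection rate $\rho = c_0 \cdot \frac{k}{n\log^2 n}$ for a sufficiently small constant $c_0$, and let $W = \Theta(\frac{n}{k}\log^2 n)$ be the progress-window length from the previous paragraph. The claim is that the total queue size $Q(t) = \sum_x |queue_x(t)|$ stays bounded by some function of $n, \rho, b$. I would argue by a drift/amortization argument over blocks of $W$ rounds: if at the start of a block $Q \ge 2^{i}$ with $2^i \le Q < 2^{i+1}$ (so $r \ge$ some constant fraction of $Q$ whenever queues are concentrated — more carefully, $r \ge 1$ always, and the selector machinery works on the count $r$ of \emph{nonempty stations}, so I would track both $Q$ and $r$), then within the block $\Omega(r)$ packets leave the system while at most $\rho W + b = c_0 \cdot \Theta(1) + b$ packets enter. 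Choosing $c_0$ small enough that $\rho W$ is dominated by the $\Omega(r)$ throughput whenever $r$ is at least some threshold $r_0 = \Theta(\log^2 n / \,)$-sized, the number of nonempty stations (and hence the backlog, once one also bounds per-station queue length) cannot grow without bound. Below the threshold $r_0$, only a bounded number of stations are ever nonempty and each can accumulate at most $O(\rho \cdot (\text{time to serve } r_0 \text{ stations}) + b)$ packets, giving an absolute bound.

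The main obstacle I anticipate is the mismatch between \emph{number of nonempty stations} $r$ (which is what Theorem~\ref{t:ack-based} controls) and \emph{total backlog} $Q$ (which is what stability is about): the theorem promises $\Omega(r)$ transmissions, not $\Omega(Q)$, so a single station with a huge queue is served at rate only $\Theta(1)$ per its selector-slot, not proportionally to its queue size. I would handle this by observing that a station stays nonempty only if it keeps being blocked or never scheduled, and the selector property forces a constant fraction of any nonempty set of the right cardinality to succeed; iterating, any individual station is served within $O(\frac{n}{k}\log^2 n)$ rounds once it is nonempty (this is essentially the ``$r=1$'' specialization, where a single-element set eventually appears as $S_1^j$ in $\mathcal{S}_1$), so per-station queues are bounded by $O(\rho \cdot \frac{n}{k}\log^2 n + b) = O(1 + b)$ and thus $Q = O(n(1+b))$. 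The matching lower bound direction — that throughput $O(\frac{k}{n\log^2 n})$ is also the \emph{best} possible up to logarithmic factors — follows from the already-proven Theorem on limitations (the $\min\{k/n, 1/(3\log n)\}$ upper bound), so the corollary's $\Theta$ is justified by pairing that with the $\Omega$ just sketched; I would only need to remark that $\frac{k}{n\log^2 n} = \Theta\big(\min\{k/n, 1/(3\log n)\} / \log n\big)$-type comparison is not needed because $\Theta$ here refers to the achievable throughput of \emph{this} protocol, with the upper bound on the class ensuring near-optimality rather than exact matching.
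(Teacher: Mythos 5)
Your proposal is correct in spirit and follows exactly the route the paper takes: the paper derives this corollary in one line, by ``combining Theorem~\ref{t:ack-based} with Theorem~\ref{thmSelectors}'' --- i.e., plugging the selector sizes $m_l = O((2^l+n/k)\log n)$ into the progress window of Theorem~\ref{t:ack-based} to get an $O(\frac{n}{k}\log^2 n)$-round window with $\Omega(r)$ transmissions, and pairing the resulting stability bound with the class-wide limitation theorem for the $\Theta$. Your write-up simply supplies the drift argument and the nonempty-stations-versus-total-backlog discussion that the paper leaves implicit, so it is the same approach, carried out in more detail than the paper itself provides.
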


\section{Algorithms simulations}\label{Sec:Sim2}
In order to evaluate efficiency of developed protocols, we performed simulations for both new and existing algorithms and compared the results. We analyzed the impact of the execution length, system size and injection rates on the queue sizes and throughput/queue-size efficient channel restrain.
Experiments were limited to a setting strictly based upon the model. Complete simulation assumptions, settings and results can be found in Appendix \ref{Sec:Sim}.

\begin{figure*}[t!]
    \begin{subfigure}[t]{0.47\textwidth}
	\includegraphics[height=1.6in]{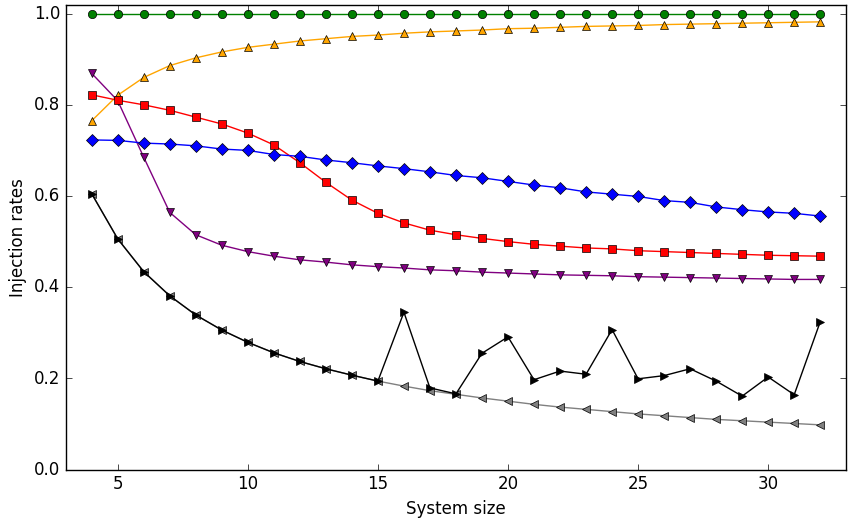}
	\caption{Stable injection rates over queue size for system size $n \in \{4,5,\ldots,32\}$.} 
	%after 1 mln rounds.}
	\label{stable2}
    \end{subfigure}
    ~ 
    \begin{subfigure}[t]{0.53\textwidth}
\includegraphics[height=1.63in]{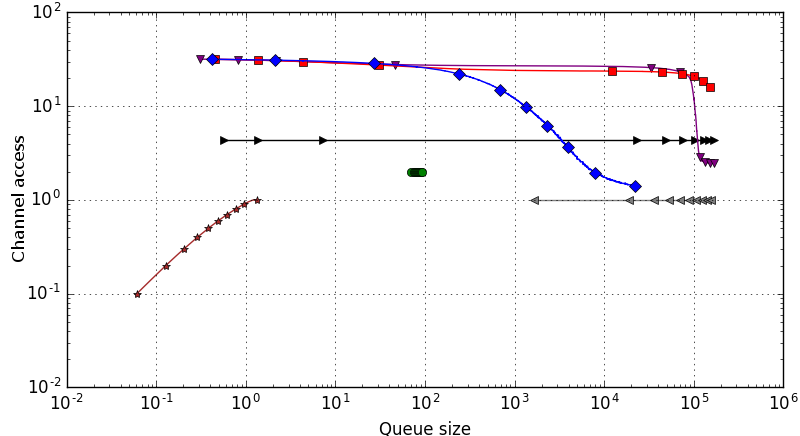}
\caption{Average round channel accesses against  
	queue sizes in logarithmic scale,  with markers set every $0.1$ for injection rates $\rho \in \left[0,1 \right]$ and system size $n=32$.}
  \label{EQ2}
    \end{subfigure}
    ~
    \begin{subfigure}[t]{1 \textwidth}

	\includegraphics[height=0.25in]{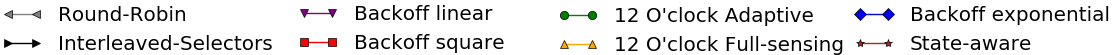}
	\caption{Relation between markers, colors and protocols.}
	\label{LP2}
    \end{subfigure}
    \caption{Average round channel access and stable injection rates by system size.}
\end{figure*}

We collate Adaptive and Full-sensing versions of the \algoname{12-O'clock} algorithm with Backoff exponential and polynomial algorithms. Additionally, we take into account acknowledgment-based algorithms: \algoname{Round-Robin} and $8$-light \algoname{Interleaved-Selectors}.

\paragraph*{Selected results.}

We studied throughput for the system sizes $n\in\{4,5,\ldots,32\}$, c.f., Figure \ref{stable2}, by estimating the minimal injection rate $\rho$ for which the average maximum queue size exceeds 1024.
%required to make the value of avg-max measurement to exceed the constant value $\delta=1024$ chosen as the second power of system size.
%
Our results are similar in shape with results observed by Hastad~\etal~in \cite{hastad}, with slight differences in values, due to 
some differences in simulation methodologies. 
%It could be explained by the following:
%we implemented adversarial behavior instead of Poisson distribution, used 1 million instead of 10 millions iterations for experiment length, average maximum queue size measurement instead of average size (to 
%better capture worst-case behavior), and finally we set-up a maximal window size limit to comply with real applications of Backoff. Specifically, the maximal window size limit improves the efficiency of exponential \algoname{Backoff} protocol in comparison to other versions of \algoname{Backoff} protocols.
Our protocols confirmed their high throughput even on channels without restraints.

In order not to discriminate randomized \algoname{Backoff} protocols,
which may obtain large channel access
%power consumption 
peaks from time to time
(unlike our deterministic protocols that ensure bounded channel access at any round), 
%we
%measured average energy consumption of studied protocols by 
in order to evaluate channel restrain we counted an {\em average} number of switched-on stations over rounds.
%how many stations were switched-on on average (over rounds) 
%
In order to better illustrate bi-criteria comparison of considered protocols,
we compared them with the \algoname{State-aware} protocol, which has full knowledge about all of the queues in the beginning of each round and transmits a packet from a station with the biggest queue. 
This protocol models close-to-optimal queues and channel access for
given injection patterns. 
Figure \ref{EQ2} presents our results in logarithmic scale: 
more efficient protocols in restrain-queue dimensions are closer to the \algoname{State-aware} protocol, what makes \algoname{12 O'clock} adaptive protocol our champion for all injection rates and 8-light \algoname{Interleaved-Selectors} to be the second for injection rates lower than $\rho = 0.3$. 
(Full-sensing version of the \algoname{12 O'clock} protocol has been omitted from the graphs as it behaved similarly
to its adaptive version in our simulations.)

%\newpage
\bibliographystyle{acm}
\bibliography{lib}

\newpage
\begin{appendices}

\section{Conclusions and discussions}\label{Sec:Conc}

We have proposed the $k$-restrained model for multiple access channels and studied throughput and queue stability of deterministic contention resolution protocols. 
%stability within it. 
%The question we addressed was what injection rates could be obtained 
%along with protocol stability under circumstances when 
%a limited number of stations could be online in a round.
%%%not all of the stations can be online. 
%
We have developed protocols with  proven constant upper bound on channel restrain and throughputs $1$ and $1-1/n$, respectively for adaptive and full-sensing protocols. We also show in the Appendix that the presented full-sensing algorithm can be modified to be stable for any throughput smaller than $1$ with the same channel restrain in a cost of larger queue sizes.
%\algoname{12 O'clock} protocol maintains stability for all injection rates $\rho\le 1$ in its adaptive version and for all $\rho<1$ in its full-sensing version. 
\algoname{k-light Selectors} algorithm, though 
achieving smaller throughput, provides a channel restrain solution for a low-level acknowledgment-based-only transmission density.
%stable for slightly smaller range of injection rates, 
%opens a door for %software-less 
%communication solutions within low-level transmission density.
It was achieved thanks to a newly developed combinatorial tool, $k$-light selector.

Simulations have shown preliminary evidence of performance in the noiseless environment with an adversary imposing asymmetrical queue load to the system (comparing to pure stochastic models).
\algoname{Backoff} protocols were studied as the most commonly used contention-resolution approach. Experiments have repeated \cite{hastad} results in regard of tendencies, with some differences in actual values of measurement, most likely caused by few differences in implementation (as discussed in the Appendix).
\algoname{Backoff} protocols have shown limited throughput, stability, and inability to have small queues and channel restrain at the same time. \algoname{12 O'clock} protocols, on the other hand, have shown stability combined with low channel restrain. 

\section{12 O'clock full-sensing protocol with collision detection}\label{Sec:Full}
12-O'clock full-sensing protocol works similarly to its 
control-bits counterpart, however a decision to change state is based on information who transmitted packet, since adding bits is not allowed for full-sensing protocols. 
To overcome the lack of additional information bits, we implement  precise control mechanism to the ``out-of-order transmissions'' and the collisions enforced by them so that \state{Big} stations could
	be identified. Combined with recognition of ID attached to successful transmissions, any station can learn about \state{Big} station and adjust to it, with some small waste of transmission
	and increased delay.

In our algorithm, typically stations in the \state{Listening} state discover Big stations by reading the ID of station transmitting on the channel and comparing it to the predecessor ID from the list --- if they do not match then the listening station(s) deduct that the transmitter
is in \state{Big} state.
%it is a Big station who is \state{Transmitting}. 
By distinguishing silence from collision, the algorithm is able to manage borderline cases, see the description below. 
However, due to collisions, the protocol is not universally stable, albeit we will prove its stability against injection rates $\rho \le \frac{n-1}{n}$.

\textbf{Technical description.} 
We consider three channel states: $Silence$ when there is no transmission, $Transmission$ when there is single transmission on the channel, $Conflict$ when there is more than one transmission.
Stations can be at one of four states:  \state{Idle},  \state{Listening},  \state{Transmitting} or  \state{Big}. 
The last two states are given the right to transmit; they are distinguished by the order in the list -- only \state{Big} station can transmit out of the order of the list; in the only one possible case when \state{Big} station transmits within the order, collision occurs and later transmissions clarify the system state. The \state{Listening} state is dedicated to listening, while in the Idle state the station neither transmits or listens. We describe these states later in this section.
As previously we assume that transmission happens before the listening phase. Simplified finite state machine for the relationship between those states can be seen on {\it Figure~\ref{12Of:fsm2}}.

\remove{
{\it(Algorithms \ref{alg:2} and~\ref{alg:3})}. 

\begin{algorithm}[!ht]
\SetAlgoLined 
\SetKwFunction{transmitToTheChannel}{transmitToTheChannel}
\SetKwFunction{listenToTheChannel}{listenToTheChannel}
\SetKwProg{myproc}{Procedure}{}{}
\myproc{\transmitToTheChannel{}}{
    \Switch{s.state}{
        \Case{Transmitting}{   
            \If{s.queue $>$ 0}{
                s.transmit()\;
                s.transmitted = true\;
            }
        }        
        \Case{Big}{ 
            s.transmit()\;
        }
    }    
}
\caption{12 O'clock full-sensing algorithm --- transmission phase.}
\label{alg:2}
\end{algorithm}

\begin{algorithm}[!ht]
\SetAlgoLined 
\SetKwFunction{transmitToTheChannel}{transmitToTheChannel}
\SetKwFunction{listenToTheChannel}{listenToTheChannel}
\SetKwProg{myproc}{Procedure}{}{}
\myproc{\listenToTheChannel{}}{    
    \Switch{s.state}{          
        \Case{Idle}{  
            \If{shouldWakeUp()} {
                s.state = Listening\;
            }            
        }     
        \Case{Listening}{      
            \Switch{channel.state}{       
                \Case{Conflict}{  
                    s.state = Listening\;
                }         
                \Case{Transmission}{   
                    \eIf{channel.id=predecessor.id} {
                        s.state = Transmitting\;
                    }{
                        s.state = Idle\;      
                        s.moveBigToFront(channel.id)\;            
                    }
                }       
                \Case{Silence}{  
                    s.state = Transmitting\;
                }         
            }             
        }     
        \Case{Transmitting}{      
            \Switch{channel.state}{       
                \Case{Conflict}{  
                    s.state = Idle\;      
                    s.moveBigToFront(predecessor.id)\; 
                }         
                \Case{Transmission}{   
                    \eIf{s.transmitted} {
                        \eIf{s.queue $>$ 3n} {
                            s.state := Big\;
                        }{
                            s.state = Idle\;  
                        }
                    }{
                        s.state = Idle\;      
                        s.moveBigToFront(predecessor.id)\;
                    }
                }       
                \Case{Silence}{  
                    s.state = Idle\;
                }         
            }             
        }    
        \Case{Big}{    
            \If{mod(round,n) = n-1 AND s.queue $\leq$ 2n}{
                s.state := Transmitting\;
                s.moveBigToFront(s.id)\;
            }   
        }
    }
}
\caption{12 O'clock full-sensing algorithm --- listening phase.}
\label{alg:3}
\end{algorithm}

\textbf{Methods.} The algorithm uses the following methods: \\
    \textit{moveBigToFront(station ID)} --- moves station of the input ID to the front of the (local) station list; \\
    \textit{transmit()} --- transmits a packet from the station queue, 
    with attached ID and state of the station;\\
    %information to it;
    \textit{shouldWakeUp()} --- checks the idle timeout of the station, that is,
    the number of rounds left until its predecessor can be in the \state{Transmitting} state.
It starts from either $n$ or $n-1$ or $n-2$ when the station drops \state{Listening} or \state{Transmitting} state, and decreases by $1$ each round.
    Upon becoming $0$, the procedure outputs ``true'' and the station switches to \state{Listening} state.
}

\textbf{Initialization.}
In the beginning all but the first two stations are in the \state{Idle} state, while the one with the smallest ID is in \state{Transmitting} state and its successor is in the \state{Listening} state.

\textbf{Idle state.}
In this state the station does not access the channel, it only keeps updating its idling time
until the next wake-up --- it decreases by $1$ each round.
The starting number of idling rounds is either $n$ or $n-1$ or $n-2$, depending on the 
state from which the station switches to \state{Idle} and the message on the channel,
see the description of \state{Listening} and \state{Transmitting} states below.
After awaking, i.e., when the idling time decreases to zero, the state
switches to \state{Listening}.

\textbf{Listening state.} 
A station in the \state{Listening} state considers all three channel state cases,
in the following way. 

Conflict on the channel 
%is possible 
occurs
only when a \state{Big} station $S$ 
%had 
interrupted its successor. No information is available on the channel, hence the \state{Listening} station keeps its state unchanged for one more round in order to hear an ID of the \state{Big} station. Note that there will be two stations in the \state{Listening} state and one in the \state{Big} state next round. Both \state{Listening} stations would recognize $S$ as \state{Big} and update their local station lists~accordingly.

%If there is a silence on the channel, then 
Upon hearing a silence,
the \state{Listening} station knows that it will not interrupt a \state{Big} station transmission next round and thus it changes its state to \state{Transmitting}. 

Finally in case of the transmission on the channel, the \state{Listening} station checks transmission ID on the channel and either it takes the token from its successor, or becomes idle and updates the local station list if it was not its predecessor's transmission.
It becomes idle for the next $n-2$, $n-1$ or $n$ rounds until subsequent wake-up; more specifically, the first idling time $n-2$ occurs when station waited additional round after collision on the channel, the second idling time $n-1$ occurs when the station hears a \state{Big} station which is currently located after it on the list of stations, and the last idling time $n$ occurs when the \state{Big} station was located before it on the list.

\textbf{Transmitting state.}  
The \state{Transmitting} state is taken by a station once per cycle in the round corresponding
to its current position on the list of stations, unless there is a \state{Big} station in 
the beginning of that round. 

A station in the \state{Transmitting} state changes its state to \state{Idle} when there is a silence on the channel --- it is possible only when it had no packets and there was no \state{Big} station 
in the beginning of this round. In case of a collision, it updates its local station list by moving its predecessor from the list to the front, as its is the only station which transmission on the channel would allow the \state{Transmitting} station to change its state from \state{Listening} to \state{Transmitting}. 

If a \state{Transmitting} station has successfully transmitted, then there is no \state{Big} station transmission in this round. Additionally, if the \state{Transmitting} station has queue size exceeding $3n$, it changes its state to \state{Big} and keeps transmitting accordingly starting from the next round. 
Otherwise, it changes its state to \state{Idle}, in order to awake in its listening turn during the next cycle, after $n-2$ rounds. 
If the station has not transmitted but a single transmission occurs on the channel, then this is a transmission from predecessor of \state{Big} station (any other \state{Big} station would cause the station not to switch to the \state{Transmitting} state in the first place, as it would switch directly from the \state{Listening} to \state{Idle}) which has not caused a collision only because the \state{Transmitting} station has had no packets to transmit. In this case the station behaves accordingly --- updates the local list of stations and changes its state to \state{Idle}.

\textbf{Big state.} 
At the end of each cycle, a \state{Big} station checks whether its queue size is still
bigger than $2n$; if not, it changes its state to \state{Transmitting}.
In any other round, the \state{Big} station transmits a packet and remains in the same state.
The following property can be easily deducted: once a station changes its state to \state{Big} (which happens when being
in its 
%regular 
\state{Transmitting} state), it stays there at least till the end of the next cycle;
it may then continue throughout the whole next cycles, until it changes to 
the \state{Transmitting} state at the end of some of them. 

\subsection{Analyses and bounds}

Similarly to the Adaptive protocol analysis, we consider the sum of the queues' sizes in the beginning of a cycle. If it is greater then $\ell = n(3n-1)+1$ we say that it belongs to the $dense$ interval, otherwise it belongs to the $sparse$ interval. This way any execution of the Full-sensing algorithm consists of $dense$ and $sparse$ interleaved intervals.

In relation to a fixed interval we consider the following terminology: station is {\it pre-big} if it had never been in the \state{Big} state and it is {\it post-big} if it was at least once in the Last \state{Big} state, during the considered cycle. Station is {\it potentially-big} if its queue size allows it to become \state{Big} (provided other necessary conditions would hold) or it is in the \state{Big} state.

Each cycle can be only of one of the three types:

\textbf{Type-1.} without any Big station. Token is being passed in the \algoname{Round-Robin} way,
    by adopting \state{Listening} and \state{Transmitting} states. This means that at any single round there is one station in the \state{Transmitting} state and one in the \state{Listening} state. 
    
\textbf{Type-2.} with a Big station $S$ starting to transmit as \state{Big} in some round of the cycle. Here, the token is being passed in the \algoname{Round-Robin} way by applying the sequence of Listening and \state{Transmitting} states to consecutive stations on the list, until $S$ transmits for the second time. The successor of station $S$ cannot recognize $S$ as \state{Big} since $S$ is supposed to transmit by the default \algoname{Round-Robin} way of passing the token within the list order. Conflict occurs if the successor of $S$ has a packet to transmit. Otherwise, in the case of successful transmission, stations in \state{Listening} and \state{Transmitting} states active at this round would read the \state{Big} station ID from the transmission, both changing their states to \state{Idle} afterwards. Otherwise the station in the \state{Transmitting} state learns from the collision about the state of $S$, and then it changes its state to \state{Idle} and updates the local station list. The station which was in the \state{Listening} state at that time learns about the state of $S$ a round after the collision, since 
it
%such a station cannot 
could not be a successor of any \state{Big} station. %successor. 
 
\textbf{Type-3.} with a Big station $S$ keeping the ``token to transmit'' for the whole cycle. All but one stations after waking-up in the \state{Listening} state will learn about the state of $S$ and become Idle until the next cycle. One station would not recognize $S$ as \state{Big}, but it will be interrupted by its transmission. Through the collision on the channel it would however learn about the state of $S$, an then it changes its state to \state{Idle} and updates the local station list.

The following two lemmas justify the usage of cycles defined above and provide the limit on the number of collisions.
%the collision occurrence number. 
They will be used implicitly in the analysis. 

\begin{lemma}
Each cycle is of one of the three above types.
\end{lemma}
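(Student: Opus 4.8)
The plan is to reproduce, for the full-sensing protocol, the inductive argument used for the analogous four-types lemma of the adaptive protocol: induction over cycles, classifying which type can follow which. I would carry a strengthened induction hypothesis: at the start of every cycle the local station lists are synchronized and, when no \state{Big} station is present, the round-robin token is well defined, so exactly one station is scheduled for the \state{Transmitting} role; and at the end of every cycle at most one station is in the \state{Big} state. The base case is immediate from the initialization --- no \state{Big} station, the smallest-ID station \state{Transmitting}, its successor \state{Listening}, the rest \state{Idle} --- so the first cycle is either entirely \algoname{Round-Robin} (Type-1) or contains the first round at which some station crosses the $3n$ threshold and becomes \state{Big} (Type-2); and in the latter case only one station can become \state{Big}, since the only candidate per round is the unique \state{Transmitting} station and once a \state{Big} station appears all other active stations switch to \state{Idle} (possibly after one extra round following the collision it causes) rather than to \state{Transmitting}.

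For the inductive step I would establish the transition graph among types. Because a \state{Big} station is created only out of the current \state{Transmitting} station and thereafter suppresses all other \state{Transmitting} roles, ``at most one \state{Big} station at a time'' is preserved and Type-1 can be followed only by Type-1 or Type-2. A \state{Big} station changes state only at a cycle boundary, so if one is present at the start of a cycle it stays \state{Big} and keeps the token for the whole cycle, i.e.\ that cycle is Type-3; hence Type-3 is followed by Type-3 (if the queue is still above $2n$ at the cycle's end) or, once the \state{Big} station reverts to \state{Transmitting} and is moved to the head of the synchronized lists, by a cycle with no \state{Big} station, which is Type-1 or Type-2 exactly as in the base case. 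Finally, a station that becomes \state{Big} inside a Type-2 cycle had more than $3n$ packets and transmits at most $n$ of them before the end of that cycle, so it still has more than $2n$ packets, survives the end-of-cycle check, and is therefore \state{Big} throughout the following cycle, which is consequently Type-3. Combining, the first cycle is Type-1 or Type-2 and every transition stays within the three types, so the induction closes.

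The step I expect to be the main obstacle is the book-keeping hidden in the strengthened hypothesis: that the local lists stay synchronized across cycle boundaries and that ``no \state{Big} station present'' really yields a clean \algoname{Round-Robin} cycle. Unlike the adaptive protocol, a listening station here must infer a \state{Big} station's identity from the channel state together with the transmitter ID (a mismatched predecessor ID, or a collision followed one round later by a single transmission), and a \state{Transmitting} station may discover a \state{Big} predecessor only through a collision; one has to verify that in each of these borderline cases every station performs the same list update (moving the newly discovered \state{Big} station to the front), so that by the next cycle boundary the lists agree and the idle timers line the stations up into a single token-passing sequence. I would isolate these channel-state case analyses --- silence versus single transmission versus collision, for \state{Listening} and for \state{Transmitting} stations --- as the technical core, then feed their conclusions into the otherwise routine induction above.
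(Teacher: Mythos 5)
Your proposal is correct and follows essentially the same route as the paper: an induction over cycles establishing the transition graph Type-1 $\to$ \{1,2\}, Type-2 $\to$ 3 (via the "$>3n$ minus at most $n$ transmitted leaves $>2n$, so the end-of-cycle check fails" argument), and Type-3 $\to$ \{1,2,3\}. Your version is somewhat more careful than the paper's --- in particular, allowing the first cycle to already be Type-2 (the paper asserts it is Type-1, which ignores large initial bursts) and making explicit the list-synchronization invariant that the paper leaves implicit.
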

\begin{proof}
The starting conditions of the algorithm enforce the system to start in the type-1 cycle. 

Type-1 cycle can be followed 
%either 
by another type-1 cycle,~if there is no potentially-big station, or by a type-2 cycle~otherwise. 

In a type-2 cycle the Big station is chosen, and therefore it can only be followed by a type-3 cycle --- this is because the \state{Big} station needs to transmit more than $n$ packets in order to start to consider changing its state, which may happen only at the end of some cycle.

A type-3 cycle, with a \state{Big} station keeping the token (to transmit) for the whole cycle, can be followed either by the same type of a cycle if an adversary keeps injecting packets into the \state{Big} station, or by a type-1 cycle if there is no potentially-big station, or by a type-2 cycle otherwise.
\end{proof}

\begin{lemma}
No more than one collision per cycle can occur.
\end{lemma}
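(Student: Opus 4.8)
The plan is to argue by the same case analysis on cycle types that proves the previous lemma, and to observe that a collision is a very specific event in this protocol: it can only arise when a \state{Big} station transmits out of turn in a round where the station in the \state{Transmitting} state also has a packet to send. By the dynamics of the finite state machine (Figure~\ref{12Of:fsm2}), once such a collision occurs, every station currently in the \state{Listening} or \state{Transmitting} state learns (immediately or one round later, through the transmitted ID) that $S$ is \state{Big}, switches to \state{Idle}, and reorders its local list to put $S$ at the front. So after the collision there is no longer any station that can ``mistake'' $S$ for the scheduled token holder, and hence no second collision until a new \state{Big} station could potentially appear.

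First I would fix a cycle and invoke the previous lemma to split into the three types. In a type-1 cycle there is no \state{Big} station at all, so every transmission is a scheduled \algoname{Round-Robin} transmission and no two stations are ever switched on in the transmitting role simultaneously; thus zero collisions. In a type-3 cycle the \state{Big} station $S$ holds the token for the whole cycle; the only station that does not yet recognize $S$ as \state{Big} is the one scheduled (by its local list) to transmit right after $S$'s predecessor, and it collides with $S$ exactly once — from that collision it learns $S$'s status, goes \state{Idle}, and thereafter all wake-ups in the \state{Listening} state read $S$'s ID and avoid the channel, so no further collision occurs in the cycle.

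The main case is the type-2 cycle, where a station $S$ becomes \state{Big} partway through. Here I would note that before $S$ becomes \state{Big}, the token is passed in the ordinary \algoname{Round-Robin} way with exactly one transmitter per round, so no collision can happen in the prefix of the cycle up to $S$'s first transmission. The only candidate for a collision is the round in which $S$, having become \state{Big}, transmits ``a second time'' (i.e., out of the list order), clashing with its list-successor if that successor has a packet. After this at most one collision, the successor (in the \state{Transmitting} state) learns of $S$'s status from the conflict, goes \state{Idle} and reorders; the concurrent \state{Listening} station learns one round later; and for the remainder of the cycle $S$ is recognized by every station upon wake-up, so no second collision occurs. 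A type-2 cycle is, by the previous lemma, always followed by a type-3 cycle, so there is no further subtlety within the cycle itself.

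The step I expect to be the main obstacle is verifying carefully that in the type-2 cycle the ``second transmission'' of $S$ is genuinely the unique opportunity for a collision — i.e., that no earlier out-of-order transmission is possible and that the post-collision reordering is done consistently by every station that could otherwise still be fooled. This requires tracing the \state{Listening}/\state{Transmitting} transition rules for the two stations active in the collision round and the one additional \state{Listening} station that wakes up the round after, confirming that each of them correctly identifies $S$ via the transmitted ID (or via the conflict) and calls \textit{moveBigToFront} before any of them could next be scheduled to transmit. Once that bookkeeping is pinned down, the bound of one collision per cycle follows, and combined with the type transitions it also bounds the total wasted transmissions, as used implicitly later in the stability analysis.
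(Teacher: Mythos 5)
Your decomposition by cycle type matches the paper's, and your type-1 and type-2 arguments are essentially the paper's own (no collision in type-1 because exactly one station holds the token per round; in type-2 the only candidate collision is the round of the \state{Big} station's first out-of-order transmission, after which every later waking \state{Listening} station hears $S$'s ID, recognizes it is not its predecessor, and goes \state{Idle}). However, you have located the ``main obstacle'' in the wrong case and, as a result, the genuinely hard part of the lemma is asserted rather than proved. The difficulty is not the type-2 cycle but the type-3 cycle that follows it: during the type-2 cycle only \emph{some} stations (those active after $S$'s out-of-order transmission) execute \textit{moveBigToFront}, so at the start of the type-3 cycle the local lists are mutually inconsistent --- some stations have $S$ at the head of their list, others still have $S$ in its old position. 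A collision occurs precisely when a waking \state{Listening} station has $S$ as its \emph{local-list} predecessor (it then mistakes $S$'s transmission for its predecessor's, switches to \state{Transmitting}, and collides in the next round if nonempty). Your claim that ``the only station that does not yet recognize $S$ as \state{Big} is the one scheduled to transmit right after $S$'s predecessor'' presumes a single consistent list; with inconsistent lists one must rule out that two different stations, reading two different list versions, each believe $S$ is their predecessor and each cause a collision in distinct rounds of the same cycle.

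The paper closes this gap by partitioning the stations of a type-3 cycle (preceded by a type-2 cycle) into group-A (after $S$ on the old list, hence already updated, with $S$ now at the head of their lists) and group-B (before $S$, hence not yet updated), and checking all cases: if both groups are nonempty, \emph{no} station has $S$ as predecessor and there is no collision at all (so your ``exactly once'' is also an overclaim); if one group is empty, or if the preceding cycle was already type-3 so the cyclic order is unchanged, exactly one station has $S$ as predecessor and causes the single collision. Some argument of this kind --- bounding, over all the divergent local list versions simultaneously present in the system, the number of stations that can hold $S$ as predecessor --- is the essential content of the lemma and needs to be supplied for your proof to go through.
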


\begin{proof}
Note that in a type-1 cycle collision may not occur, as at any single round there is station in the \state{Transmitting} state and another one in the \state{Listening} state.

In a type-2 cycle no collision occurs until the second transmission of a station in the \state{Big} state, by the same reasoning as for type-1 cycles. If the Big station successor has packets in its queues there is a collision on the channel. The station in the \state{Transmitting} state becomes \state{Idle} at the end of this round until the next cycle. Stations further down on the list cannot have the \state{Big} station as predecessor and would wake up in the \state{Listening} state, learn about the \state{Big} station from its transmission and change their state directly to \state{Idle}, hence there can be no more collisions.

A type-3 cycle with a \state{Big} station $S$ keeping the ``token to transmit''. Consider the case, when type-2 cycle precedes. We divide stations of the system into two groups: group-A consists of stations after the \state{Big} station $S$ on the list, which have already learned about the state of $S$ and updated their local lists of stations. Group-B are stations before $S$ on the list, which had no occasion to do so.
If group-A is empty, then there is a single succeeding to $S$ station in the group-B. It causes one collision due to assumption of default \algoname{Round-Robin} predecessor, which is $S$. The rest of the stations in this cycle will switch directly from the \state{Listening} to \state{Idle} state, thus no more collision occur.
If both group-A and group-B are not empty, then no station in the group-B can have $S$ as predecessor, because $S$ is down in the list for any station in group-B by definition, and its not last on their outdated list version since group-A is not empty. Due to group-A stations having their lists updated, $S$ is the first station in their lists, what together with nonempty group-B assumption makes it impossible to any station from the group-A to have $S$ as predecessor. It follows that all of the group-A and group-B stations would change state directly from \state{Listening} to \state{Idle}, thus no collision occur.
If group-B is empty or type-3 cycle precedes the current cycle, than cyclic order of the list does not change (i.e. each station has the same successor and predecessor in the beginning and the end of the cycle), so there is a single succeeding to $S$ station which causes one collision due to assumption of default \algoname{Round-Robin} predecessor, which is $S$. No more stations can have $S$ as predecessor, thus the rest of the stations would change state directly from \state{Listening} to \state{Idle} and no more collision occurs.
\end{proof}

We call a round with collision caused by station in the \state{Big} state an {\em assertion} round. In relation to cycles we assume that there is an assertion round in every cycle, since this is the worst possible case -- no more than one collision in a cycle can occur by Lemma above. By a {\em silent} round we understand any non-assertion round with no successful transmission. 
We say that a station {\em causes} a silent round if during this 
%silent 
round
it is in state \state{Transmitting}; note that it may occur only if the station has 
empty queue in this round.
%number of successfully transmitted packets equal to 0.
Observe also that there cannot be a Big station in a silent round, 
as stations in \state{Big} state have more than $n$ packets in their queues.

\begin{lemma}
In any dense interval, a station can cause a silent round 
%(i.e., \dk{when being} in state \state{Transmitting} but having an empty queue) 
at most $n-1$ times while being pre-big.
\label{lemma:pre-big2}
\end{lemma}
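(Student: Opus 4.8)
The plan is to mirror the proof of Lemma~\ref{lemma:pre-big} from the adaptive protocol, adapting it to the full-sensing setting in which the common list is kept synchronised through ID mismatches and collisions instead of control bits, and in which there is no \state{Last-Big} state. I will use three facts established above: (i) by definition a silent round is a non-assertion round with no successful transmission, so it can only be caused by a station sitting in the \state{Transmitting} state with an empty queue in that round (a station in the \state{Big} state has more than $n$ packets and therefore always transmits); (ii) throughout a dense interval every cycle contains a potentially-big station, by the same pigeon-hole count that forbids type-1 cycles in a dense interval, so every cycle of the interval is of type-2 or type-3; and (iii) by the lemma bounding collisions to at most one per cycle together with the cycle-type lemma, the only reordering a cycle can perform is moving the (unique) \state{Big} station to the head of every station's list, and it is performed consistently by all stations.

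Next I would fix a pre-big station $S$ and track its \emph{rank from the back} of the globally synchronised list, i.e.\ the number of stations lying from $S$ to the tail inclusive; this quantity lives in $\{1,\dots,n\}$ and, by fact (iii), can only ever decrease, and it decreases precisely when a station that currently sits \emph{after} $S$ is moved to the head by a \textsf{moveBigToFront} operation. I would then show that in any cycle of the dense interval in which $S$ causes a silent round, $S$ must actually receive the ``token to transmit'', which forces every potentially-big station of that cycle to lie after $S$: if a potentially-big station preceded $S$, it would reach its turn first, switch to \state{Big}, and then $S$ --- on waking in the \state{Listening} state and hearing a transmission whose ID is not its predecessor's, or later a collision --- would switch directly to \state{Idle} (or, in the sole borderline case where that station is $S$'s immediate predecessor, $S$ would move to \state{Transmitting} only to be collided by the \state{Big} station, producing an assertion round, not a silent one), contradicting the assumption. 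Since a dense cycle always has at least one potentially-big station, whenever $S$ causes a silent round there is one lying after $S$; this station eventually takes its turn, becomes \state{Big}, and in the following type-3 cycles is moved to the head by every listening station, so $S$'s rank from the back strictly drops.

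To conclude, I would argue that between any two consecutive cycles in which $S$ causes a silent round the rank of $S$ decreases by at least one: the potentially-big station witnessed after $S$ at the first of the two cannot still be \state{Big} at the second (else $S$ would again be driven to \state{Idle}/assertion rather than cause a silent round), nor can it still be potentially-big and positioned before $S$ (it would become \state{Big} at its turn, blocking $S$); hence it has already become \state{Big} and been relocated to the head, lowering $S$'s rank. Since the rank is confined to $\{1,\dots,n\}$ and equals $1$ exactly when $S$ is last --- in which case no potentially-big station lies after $S$, so $S$ never again obtains the token in a dense cycle --- $S$ can cause at most $n-1$ silent rounds while pre-big, after which (if ever) its own queue grows past $3n$, it enters \state{Big} and its pre-big phase ends. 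I expect the main obstacle to be exactly the bookkeeping of the list order through the collision/assertion rounds: one must invoke the ``at most one collision per cycle'' lemma and the cycle-type lemmas to be sure that the relocation of the formerly-\state{Big} station to the head is the only reordering touching $S$ and that it is applied by all stations within the same cycle, so that the rank-from-the-back monovariant is well defined and genuinely strictly decreasing between consecutive $S$-caused silent rounds; the edge cases where $S$ is the predecessor of the \state{Big} station, or where $S$'s queue itself crosses the $3n$ threshold, need a short separate check.
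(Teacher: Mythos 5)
Your proof is correct and follows essentially the same route as the paper's: a silent round forces $S$ to hold the token with an empty queue, the dense interval guarantees a potentially-big station which must then lie after $S$, and its eventual promotion to \state{Big} and relocation to the head of the list pushes $S$ one step closer to the tail, which can happen at most $n-1$ times before $S$ is last and can no longer obtain the token without becoming \state{Big} itself. Your rank-from-the-back monovariant and the explicit treatment of the collision/assertion edge cases are a more careful rendering of the same argument the paper gives tersely.
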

\begin{proof}
Silent rounds occur when some station has a ``token to transmit'' but has no packets in its queue. 
%Note that it is only possible for stations in \state{Transmitting} state, as stations in \state{Big} state have more than $n$ packets in their queues.
%
Assume that a station $S$ has no packets in its queue.
Within dense interval, in each 
round
%cycle 
there is a potentially-big station. For any cycle, if potentially-big station is before $S$ on the list, then $S$ would receive no ``token'' or receive it and decrease its position on the list. The position of $S$ cannot decrease more then $n-1$ times, because there can be no potentially-big station after $S$ if it is the last on the list. 
Since in the dense interval there is always a \state{Big} station, $S$ as the last station in the list either has no possibility to cause silent round (when some other station $S'$ before it in the list changes state to \state{Big}), or becomes \state{Big} itself. Pre-big station life-cycle terminates once station is in the \state{Big} state by our definition, hence through the whole pre-big life-cycle station $S$ may cause no more than $n-1$ silent round.
\end{proof}

\begin{lemma}
In any dense interval, a station causes no silent round while being post-big or in a \state{Big} state.
\label{lemma:post-big2}
\end{lemma}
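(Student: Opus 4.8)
The plan is to adapt the proof of Lemma~\ref{lemma:post-big} to the full-sensing protocol, where three features differ: there is no \state{Last-Big} state; a station leaving the \state{Big} state itself calls \algoname{moveBigToFront}, so it re-enters the \state{Transmitting} state at the head of the (synchronized) list; and the \state{Big}-status test in the \state{Transmitting} state is taken \emph{after} a successful transmission, so a potentially-big station that takes the token first transmits one real packet and only then may switch to \state{Big}. First I would clear the trivial cases exactly as there: a \state{Big} station transmits every round and has more than $n$ packets pending (as already observed, and a silent round contains no \state{Big} station by definition), so it either transmits successfully or collides at an assertion round --- never a silent round; and a post-big station currently \state{Idle} or \state{Listening} does not transmit at all. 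This leaves a post-big station $S$ in the \state{Transmitting} state.

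Next I would fix the invariant at the moment $S$ most recently left the \state{Big} state. Since a \state{Big} station stays \state{Big} through at least the whole cycle following the one in which it became \state{Big}, the last \state{Big} cycle of $S$ begins with queue size exceeding $2n$ (otherwise $S$ would have left \state{Big} one cycle earlier); over the $n$ rounds of that cycle $S$ loses at most $n$ packets and the adversary only adds packets, so when $S$ switches to \state{Transmitting} at its end it still holds more than $n$ packets and, by \algoname{moveBigToFront}, it is the head of the list.

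The core step is a position-monotonicity argument. While $S$ stays post-big it never moves itself towards the head again, and any relocation of another station to the head only pushes $S$ towards the tail, so $S$'s position can recede at most $n-1$ times after it became the head (there is no potentially-big station behind the last list position, cf.\ the proof of Lemma~\ref{lemma:pre-big2}). I would then show that whenever $S$, while post-big and in \state{Transmitting}, takes the token in a cycle but is \emph{not} itself potentially-big (the only situation in which it could cause a silent round), its position strictly recedes afterwards: in a dense interval a potentially-big station is present (pigeon-hole, as in the proof of Theorem~\ref{theorem:adaptive}); it cannot lie ahead of $S$ in the round-robin order (then it would grab the token and become \state{Big} before $S$'s turn, so $S$ would pass from \state{Listening} directly to \state{Idle}, or collide with it at an assertion round --- in neither case causing a silent round); and it is not $S$ (we are in the case $S$ is not potentially-big); hence it lies strictly behind $S$, the token reaches it later in the same (type-2) cycle, it becomes \state{Big}, and by the cycle-type lemma and the one-collision-per-cycle lemma it keeps the token through this cycle and the whole next one, during which every station --- $S$ included --- wakes once in \state{Listening}, recognizes it as \state{Big} from its out-of-order ID (or from the single collision) and moves it to the head, so $S$ has receded. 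Consequently $S$ takes the token while post-big, in \state{Transmitting} and not potentially-big at most $n-1$ times in a single post-big episode, losing at most one packet in each of those cycles, while in the remaining token-taking cycles $S$ is potentially-big and so keeps its queue above $3n$; starting from more than $n$ packets, $S$ therefore always has a packet when it is in its transmitting turn, so it never causes a silent round. If $S$ re-enters the \state{Big} state during the episode, the \state{Big} case applies, and on leaving it a fresh episode restores the invariant.

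The main obstacle is the last part of the core step: verifying, against the cyclic list and the at-most-one-collision-per-cycle behaviour, that the potentially-big station behind $S$ is genuinely identified as \state{Big} and moved to the head of $S$'s list during its \state{Big} period --- this is exactly the content of the cycle-type lemma and the one-collision-per-cycle lemma, which I would invoke to keep the bookkeeping honest; the packet count then closes the argument as in Lemma~\ref{lemma:post-big}.
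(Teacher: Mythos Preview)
Your proposal is correct and takes essentially the same approach as the paper: dispose of the \state{Idle}/\state{Listening}/\state{Big} cases, then for the \state{Transmitting} case fix the invariant ($>n$ packets, head of list) at the moment $S$ last left \state{Big} and run the position-monotonicity argument to bound the number of non-potentially-big token-takings by $n-1$, closing with the same packet count. Your write-up is a bit more careful --- you flag the transmit-then-check timing of the \state{Big} test, organize the argument into episodes between \state{Big} periods, and explicitly lean on the cycle-type and one-collision lemmas for list synchronization --- but the skeleton and the decisive step are identical to the paper's.
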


\begin{proof}
A post-big station $S$ could be in a \state{Big} state, \state{Transmitting} state or in one of the
other two states. In the latter case, it does not attempt to transmit, hence it cannot cause a silent round.
If the station enters \state{Big} state, it switches 
from the \state{Transmitting} state at some 
round
%moment 
of the cycle, having~more than $3n$ packets in its queue; it'll switch back~to~the~\state{Transmitting} state when having less than $2n$ packets in the end of the cycle.
Thus, in any round of the cycle the number of packets cannot drop below $n$, and
hence no silent round~occurs.

It remains to analyze the case when $S$ is in the \state{Transmitting} state.
Upon leaving the \state{Big} state for the last time, it had at least $n$ packets in its queues and was
placed in the beginning of the list of stations, by the algorithm construction. 
Then, observe that $S$ has had an opportunity to transmit only at some type-2 cycle
% $C$ 
when there is no potentially-big station before it on the list or when $S$ is potentially-big at the time it
switches from \state{Listening} to \state{Transmitting} state. In the latter case, 
instead of staying in \state{Transmitting} state it immediately switches to \state{Big} state,
which case we already analyzed in the beginning of the proof.
%returns us to the beginning of our reasoning.
%
Otherwise (i.e., in the former case), either some potentially-big station after $S$ becomes \state{Big}, 
which implies that in some
of the next cycles, it will switch back to \state{Transmitting} state and the position of $S$ on the list decreases without causing any more
silent rounds, or $S$ receives no ``token to transmit'' and so it cannot cause a silent round by default. 
The position cannot decrease more than $n-1$ times, because there can be no potentially-big station after $S$ if $S$ is the last on the list (the argument is similar to the one from the proof of Lemma~\ref{lemma:pre-big2}). 
Since $S$ had at least $n$ packets when switching from its \state{Big} state,
%when its in the beginning of the list, hence $S$ 
it can transmit and decrease its position at most $n-1$ times or become \state{Big}, whatever comes first;
in any case, it has at least one packet when being in \state{Transmitting} state.
\end{proof}

\begin{theorem}
\label{theorem:1}
The 12 O'clock full-sensing protocol 
 achieves throughput $1-\frac{1}{n}$ on a channel with restrain of $3$
%is stable against leaky-bucket adversary with injection rate $\rho = 1-\frac{1}{n}$, 
and the maximum number of packets stored in a round is at most $\ell+(n-1)^2 +n+b = O(n^2+b)$ against leaky-bucket adversary.
\end{theorem}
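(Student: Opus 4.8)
The plan is to follow the template of Theorem~\ref{theorem:adaptive}, tracking the total queue size at cycle boundaries and splitting the execution into \emph{sparse} and \emph{dense} intervals with respect to the threshold $\ell=n(3n-1)+1$. First I would dispose of sparse intervals: since $\rho\le 1-\frac1n$, during the $n$ rounds of one cycle the adversary injects at most $\rho n+b\le (n-1)+b$ packets, so a cycle that begins with at most $\ell$ packets ends with at most $\ell+(n-1)+b$; the first time a cycle begins with more than $\ell$ packets a dense interval starts, and we carry at most $\ell+(n-1)+b$ packets into it. It therefore suffices to bound queue growth inside a dense interval.

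The heart of the proof concerns dense intervals. By the pigeonhole principle some station has more than $3n$ pending packets (or is already \state{Big}) at the start of every cycle of the interval, so by the lemma classifying cycles into three types every such cycle is of type-2 or type-3 --- type-1 never occurs --- and a \state{Big} station is present in every round. I would then partition the $n$ rounds of each cycle into: (i) at most one \emph{assertion} round (a collision forced by the \state{Big} station interrupting its successor), which is guaranteed by the lemma bounding the number of collisions to at most one per cycle; (ii) \emph{silent} rounds, where a station in the \state{Transmitting} state has an empty queue; and (iii) successful rounds, each removing one packet. Lemma~\ref{lemma:post-big2} shows no post-big or \state{Big} station ever causes a silent round; Lemma~\ref{lemma:pre-big2} shows each pre-big station causes at most $n-1$ silent rounds over the whole interval; since at least one station is potentially-big there are at most $n-1$ pre-big stations, and in addition every cycle with a silent round must turn some potentially-big station into a \state{Big} one (otherwise that station, being ahead on the list, would have transmitted), which independently caps the count. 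Altogether a dense interval contains at most $(n-1)^2$ silent rounds in total.

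Combining the counts gives the bound. Over any prefix of a dense interval of length $L$ rounds the adversary injects at most $\rho L+b\le(1-\frac1n)L+b$ packets, while the system makes at least $(1-\frac1n)L-O(1)-(n-1)^2$ successful transmissions, the assertion rounds (one per cycle) and the at most $(n-1)^2$ silent rounds being the only non-transmitting rounds. Hence the queue never exceeds its value at the start of the dense interval plus $b+(n-1)^2+O(1)$, so it stays at most $\ell+(n-1)^2+n+b=O(n^2+b)$ throughout; this yields stability and therefore throughput at least $1-\frac1n$ (which is optimal for full-sensing protocols, \cite{kow}). The channel-restrain claim follows from the cycle structure: outside collisions exactly one station transmits and one listens, while during an assertion round and the round right after it there is one \state{Big} transmitter together with at most two stations that must be in the \state{Listening} state to read the \state{Big} station's identity --- three active stations, and never more.

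I expect the main difficulty to be the bookkeeping across the sparse/dense transitions together with justifying that the $(n-1)^2$ cap on silent rounds holds \emph{per dense interval} rather than per cycle; the subtle point is that a cycle producing a silent round must also promote a fresh potentially-big station to \state{Big}, which ties the number of wasteful cycles to the at most $n-1$ pre-big stations. The assertion rounds, by contrast, are harmless: they constitute exactly the $\frac1n$ fraction of rounds surrendered by lowering the rate from $1$ to $1-\frac1n$, so they never contribute to unbounded queue growth.
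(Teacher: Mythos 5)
Your proposal is correct and follows essentially the same route as the paper's proof: the same sparse/dense decomposition with threshold $\ell$, the same classification of cycles and appeal to Lemmas~\ref{lemma:pre-big2} and~\ref{lemma:post-big2} to cap silent rounds at $(n-1)^2$ per dense interval, and the same final bound $\ell+(n-1)^2+n+b$. If anything, your explicit ledger matching the one assertion (collision) round per cycle against the $\frac{1}{n}$ reduction in injection rate is more careful than the paper's text, which states that type-3 cycles transmit in every round without flagging the collision round.
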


\begin{proof} 
Injection rate stability limit  of $1-\frac{1}{n}$ follows from inability to identify a Big station $B$ by $B$ station successor in the list. This results in potential collisions every cycle.

The analysis of bounds on the queue size bases upon sparse and dense intervals defined above.
Within a sparse interval, there can be no more then $\ell + n + b$ packets in the stations
at the end of any cycle. Indeed, 
the biggest possible number of packets that the system can start a cycle with is equal to $\ell$, 
and the adversary can inject no more then $n+b$ packets in $n$ consequent rounds of the cycle. 
Once the queue size becomes greater than $\ell$ in the beginning of a cycle, 
the sparse interval terminates and the dense interval begins.

In the remainder, we focus on dense intervals.
Note that in the beginning of a dense interval, the number of packets in the system
is at most $\ell+n$ plus the burstiness above the injection rate (upper bounded by $b$);
indeed, as in the beginning of the preceding cycle the interval was sparse, the number of packets
was not bigger than $\ell$, 
and during that cycle the adversary could inject at most
$n$ packets accounted to the injection rate plus the burstiness.

Within any dense interval, a station in the \state{Big} state is guaranteed to be in each cycle, by the pigeon-hole principle. It makes type-1 cycle impossible to occur. 
Consider type-3 cycles: during those cycles a packet is transmitted in every round, and 
thus a silent round cannot occur; hence the number of packets does not grow (except
of burstiness above the injection rate, but this is upper bounded by $b$ at any round of the interval,
by the definition of the adversary).

In type-2 cycles, post-big stations cannot cause silent rounds, by Lemma \ref{lemma:post-big2},
and stations in \state{Big} state cannot cause silent rounds as they always have more than $2n$ packets
pending.
Hence, type-2 cycles may have silent rounds caused only by pre-big stations. However, there can be no more than $n-1$ pre-big stations in the system in the beginning of the dense interval (because there is at least one potentially-big station). Each pre-big station can cause no more than $n-1$ silent rounds, by Lemma \ref{lemma:pre-big2}. Observe that in each cycle with a silent round some potentially-big station will change its state to Big --- a silent round would not occur if there was a potentially-big station with higher position on the list.
Hence, there can be no more than $n-1$ cycles with silent rounds caused by same (pre-big)
station. 
To summarize, there are at most $n-1$ cycles with silent rounds for each of at most $n-1$ pre-big stations, resulting in the upper bound of $\ell + (n-1)^2 + n+b$ on the sum of the queue sizes in a round.
\end{proof}

\subsubsection{Stability bound improvement}

It was proved in \cite{kow} that it is not possible to construct a full-sensing stable protocol against leaky-bucket adversary $\rho = 1$ for a system with a number of stations bigger than~$3$.
Below we show how the 12 O'clock full-sensing protocol can be modified to withstand higher than $1-\frac{1}{n}$ injection rates:

\begin{lemma}
For any given $\rho < 1$, the 12 O'clock full-sensing protocol can be modified to be stable against the leaky-bucket adversary with injection rate $\rho$ .
\label{theorem:2}
\end{lemma}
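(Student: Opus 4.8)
The plan is to slow down the rate at which collisions can occur by enlarging the time unit over which a \state{Big} station is allowed to change its status. Fix $\rho<1$ and set $L=n\lceil 2/(n(1-\rho))\rceil$, so that $L$ is a multiple of $n$ and $L\ge\max\{n,\,2/(1-\rho)\}$. Partition the rounds into \emph{super-cycles} of $L$ consecutive rounds, each super-cycle being a concatenation of $L/n$ ordinary $n$-round round-robin passes of the list. The round-robin token passing inside a super-cycle is kept exactly as in Section~\ref{Sec:Full}; the only changes are: (i) a \state{Transmitting} station becomes \state{Big} when its queue exceeds $3L$ (instead of $3n$), and a \state{Big} station may revert to \state{Transmitting} only at a super-cycle boundary and only if its queue has dropped to at most $2L$; (ii) the list is re-ordered (moving the most recent \state{Big} station to the front) consistently only at super-cycle boundaries; (iii) a station that learns — via a collision or by hearing an unexpected transmitter ID — that some station is \state{Big} updates its local list and then sets its timer to wake up at the next super-cycle boundary. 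Finally, the dense/sparse threshold is rescaled to $\ell'=\Theta(nL)$, so that whenever the total backlog exceeds $\ell'$ the pigeon-hole principle forces some station to be \state{Big} throughout every super-cycle of the induced dense interval.

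First I would re-establish the structural invariants with \emph{super-cycle} in place of \emph{cycle}: the analogue of the cycle-type lemma (its proof is unchanged) and, crucially, the analogue of the ``at most one collision per cycle'' lemma. Here rule~(iii) is the key point: in a super-cycle that contains a \state{Big} station $S$, only the unique successor of $S$ on the (synchronized) list can fail to recognize $S$, and it does so exactly once — on the first round-robin pass in which it would transmit right after $S$ has seized the token — after which it sleeps until the super-cycle ends; every other station learns $S$'s status from a clean single transmission of an out-of-order ID and immediately goes idle. One also checks that no round has more than three awake stations (a \state{Big} transmitter, at most one listener, and at most one colliding transmitter), so the channel restrain stays $3$, and that at every super-cycle boundary all local lists coincide and all timers are aligned, exactly as in the unmodified protocol.

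Next I would redo the silent-round accounting. Lemmas~\ref{lemma:pre-big2} and~\ref{lemma:post-big2} go through verbatim with the new thresholds: their proofs use only that the list has $n$ stations and that a station's position on it can decrease at most $n-1$ times, and the thresholds $3L$ / $2L$ ensure that a \state{Big} station never empties its queue inside the $L$ rounds of a super-cycle, hence never causes a silent round. Therefore any dense interval contains at most $(n-1)^2$ silent rounds. For the main estimate, look at a dense interval and any of its super-cycles: the channel delivers at least $L-1-(\text{silent rounds in that super-cycle})$ packets, at most one round being lost to the single possible collision, whereas the adversary adds at most $\rho L$ packets on account of the rate plus burstiness that is globally bounded by $b$. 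Since $L\ge 2/(1-\rho)$ gives $\rho L\le L-2<L-1$, the collision loss is absorbed, so over the whole dense interval the backlog grows by at most $b$ (burstiness) plus $(n-1)^2$ (silent rounds). Together with the sparse-interval bound $\ell'+O(L)+b$, this yields a uniform bound of $O(nL+n^2+b)=O\!\big(\tfrac{n}{1-\rho}+n^2+b\big)$ on the number of stored packets, hence stability.

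The step I expect to be the main obstacle is re-proving the ``one collision per super-cycle'' invariant together with list/timer synchronization under the new sleep rule~(iii): in particular, revisiting the boundary cases of the original collision lemma — a type-3 super-cycle following another type-3 super-cycle, and the split between stations that have and have not yet updated their lists — and verifying that a station which sleeps through the tail of a super-cycle after learning a \state{Big} status re-awakens with exactly the same list and timer as everybody else. Once this is in place, the rate arithmetic and the silent-round bookkeeping are straightforward adaptations of the proof of Theorem~\ref{theorem:1}.
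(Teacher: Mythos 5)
Your strategy is sound and reaches the same conclusion, but it is a genuinely different modification from the one in the paper. The paper keeps the $n$-round cycle structure entirely intact and makes only two surgical changes: the \state{Big} threshold is raised to $2n+kn$ for an integer $k\ge\frac{1}{n(1-\rho)}$, and the one station that can ever cause a collision --- the successor that was just interrupted by its \state{Big} predecessor --- remembers this and sets its wake-up timer to $kn$ rounds instead of roughly $n$. Since the existing one-collision-per-cycle lemma already isolates that unique successor as the only possible source of a collision, this single timer change immediately caps collisions at one per $kn$ rounds, which handles $\rho=1-\frac{1}{kn}$; everything else is reused with the rescaled dense threshold $\ell'=n((2+k)n-1)+1$, yielding queues $O(kn^2+b)$. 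You instead rebuild the protocol around super-cycles of length $L=\Theta(1/(1-\rho))$ (rounded to a multiple of $n$), rescale the \state{Big} thresholds to $3L$ and $2L$, and defer list re-ordering and state transitions to super-cycle boundaries. This buys a clean per-super-cycle drift inequality ($\rho L\le L-2<L-1$ absorbs the one lost collision round), but at the cost you yourself flag: the one-collision-per-super-cycle invariant and the list/timer resynchronization must be re-proved for the new cycle structure, whereas the paper's version inherits them essentially for free. Two small cautions: rule~(iii) should send a station to its scheduled listening slot in the first pass of the \emph{next} super-cycle rather than literally to the boundary round (otherwise all $n$ stations could be simultaneously awake there, breaking the restrain of $3$); and quantitatively the two routes are equivalent, since $nL$ and $kn^2$ are both $\Theta\bigl(n/(1-\rho)+n^2\bigr)$.
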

\begin{proof}
Algorithm may handle any injection rate  $\rho$ smaller than 1 by following the strategy:

    $\bullet$ \state{Transmitting} station considers itself \state{Big} when it has more then $2n+kn$ packets, where $k\ge \frac{1}{n(1-\rho)}$ is a positive integer;
    %\dk{to be specified later;}
    $\bullet$ \state{Transmitting} station remembers of being interrupted by its predecessor, and instead of waking up after the subsequent nearly $n$ rounds, as in the original 12 O'clock full-sensing protocol, it  wakes up after $kn$ rounds.
    
This way interruption may happen only once in $kn$ rounds and the adversary with injection rate of $\rho = 1-\frac{1}{kn}$ can be handled. We adjust the sparse/dense border value to $\ell'= n((2+k)n-1)+1$, since the \state{Big} station definition has changed. Following the logic of the Theorem \ref{theorem:1} proof, in the dense interval there are at most $k(n-1)$ cycles for each of at most $(n-1)$ pre-big stations, resulting in the upper bound of total queue sizes of $\ell'+k(n-1)^2 +n+b = O(kn^2+b)$.
\end{proof}

\section{Construction of selector in polynomial time -- \\ proof of Theorem~5.4 }
\label{sec:construction-proof}

\remove{
We first describe two major components: dispersers and superimposed codes.

\noindent
{\bf Dispersers.}
A bipartite graph~$H=(V,W,E)$, with set $V$ of inputs
and set $W$ of outputs and set~$E$ of edges, 
is a \textit{$(n,\ell,d,\delta,\ep)$-disperser} 
if it has the following three properties:
%\begin{quote}
%\begin{quote}
\begin{itemize}
\item%[\sf Sizes:]
$|V|=n$ and $|W|=\ell d/\delta$;

\item%[\sf Regularity:] 
each $v\in V$ has $d$ neighbors;
%$H$ is $d$-left-regular.

\item%[\sf Dispersion:]
for each $A\subseteq V$ such that $|A|\ge \ell$,
the set of neighbors of $A$ is of size at least $(1-\ep)|W|$.
\end{itemize}
%\end{quote}
%\end{quote}
%Let graph $G=(V,W,E)$, where $|V|=n$, $|W|=\Theta((k-r+1)d/\delta)$,
%be a $(k-r+1,d,\ep)$-disperser, for some numbers $d$ and $\delta$.
%An explicit construction of such graphs, that is, 
%in time polynomial in~$n$, was given by
Ta-Shma, Umans and Zuckerman~\cite{TUZ} showed how to
construct, in time polynomial in~$n$, 
an $(n,\ell,d,\delta,\ep)$-disperser for any $\ell\le n$, some $\delta=O(\log^3 n)$
and $d=O(\text{polylog }n)$.
% is a bound on the left-degrees. 
%The value $\log\delta$ is called the {\em entropy loss} of the disperser.

\noindent
{\bf Superimposed codes.}
A set of $b$ binary codewords of length $a$, 
represented as columns of an $a\times b$ binary array,
is a \textit{$d$-disjunct} {\em superimposed code}, if it
satisfies the following property:
no boolean sum of columns in any set~$D$ of $d$ columns can cover
a column not in~$D$.
Alternatively, if codewords are representing subsets of $[a]$, then $d$-disjunctness means
that no union of up to~$d$ sets in any family of sets~$D$ could cover a set
outside~$D$.
Kautz and Singleton~\cite{KS} proposed a $d$-disjunct superimposed codes for $a=O(d^2 \log^2 b)$,
which could be constructed in polynomial time. 
%A book by Du and Hwang~\cite{DH} provides a contemporary
%exposition of superimposed coding and its relevance to
%nonadaptive group testing.

\noindent
{\bf Polynomial construction of light selectors.}
We show how to construct $k$-light $(n,\omega)$-selectors of length
%$\cO(\min \big[ \, n,\frac{k^2}{k-r+1} \text{ polylog }n\, \big] )$, for
%\textit{any} configuration of parameters $r\le k\le n$,
$m = O\left (\omega \text{ polylog } n\right )$ for $k \geq \frac{n}{\omega}$  and  
$m = O\left (\frac{n}{k} \text{ polylog } n\right )$  for  $k < \frac{n}{\omega}$,
in time polynomial in~$n$.
This is equivalent to constructing $k$-light $(n,\omega)$-selectors of length
%$\cO(\min \big[ \, n,\frac{k^2}{k-r+1} \text{ polylog }n\, \big] )$, for
%\textit{any} configuration of parameters $r\le k\le n$,
$m = O\left ((\omega+n/k) \text{ polylog } n\right )$ 
in time polynomial in~$n$.
The construction combines specific dispersers with superimposed codes.
%
%If $r\le 3k/4$ then we can use the construction of an
%$(n,k,3k/4)$-selector given by Indyk~\cite{Ind}.
%Assume that $r>3k/4$.
Let $0<\ep<1/2$ be a constant.
Let $G=(V,W,E)$ be an $(n,\omega/4,d,\delta,\ep)$-disperser for some $\delta=O(\log^3 n)$
and $d=O(\text{polylog }n)$, constructed in time polynomial in $n$, c.f., Ta-Shma, Umans and Zuckerman~\cite{TUZ}.
%
%where $|V|=n$, $|W|=\Theta((k-r+1)d/\delta)$,
%be a $(k-r+1,d,\ep)$-disperser, for some numbers $d$ and $\delta$.
%
Let $\cM=\{M_1,\ldots,M_a\}$ be the rows of the $c\delta$-disjunct 
superimposed code array of $n$ columns, for $a=O((c\delta)^2 \log^2 n)$,
constructed in time polynomial in $n$, c.f., Kautz and Singleton~\cite{KS};
here $\delta$ is the parameter from the disperser~$G$ and $c>0$ is a sufficiently large constant.
W.l.o.g. we could uniquely identify an $i$th of the $n$ columns of the superimposed code with
a corresponding $i$th node in~$V$.
%an explicit 
%$(n,c\delta\frac{k}{k-r+1})$-strongly-selective family, 
%for a sufficiently large constant $c>0$ that will be fixed later, 
%of size 
%$m=\cO(\min \big[ \, n,\delta^2(\frac{k}{k-r+1})^2\log^2
%n\,\big])$,
%as constructed by Kautz and Singleton~\cite{KS}.

For a constant integer $c$ we define a 
$k$-light $(n,\omega)$-selector $\cS(n,\omega,k,c)$ of length
$m\le \min\{n,a|W|\alpha\}$, for some $\alpha$ to be defined later, which consists of sets $S_i$, 
for $1\le i\le m$.
Consider two cases. In the case of $n\le m|W|\alpha$, we define $S_i=\{i\}$.
%take the singleton 
%containing only the $i$-th element of~$V$ as~$F(i)$.
In the case of $n> a|W|\alpha$, we first define sets $F_j$ as follows:
for $j=xa+y\le a|W|$, where $x,y$ are non-negative integers 
satisfying $x+y>0$, $F_j$ contains all the nodes $v\in V$ 
such that $v$ is a neighbor of the $x$-th node in~$W$ and $v\in M_y$;
i.e., $F_{x\cdot a+y} = M_y\cap N_G(x)$.
Next, we split every $F_j$ into $\lceil |F_j|/k \rceil$ subsets $S$ of size at most $k$ each,
and add them as elements of the selector $\cS(n,\omega,k,c)$.
Note that each set $S_i$ from $\cS(n,\omega,k,c)$ corresponds to some
set $F_j$ from which it resulted by the splitting operation; we say that $F_j$ is a parent of $S_i$ and
$S_i$ is a child of $F_j$.
In this view, parameter $\alpha$ in the upper bound $m\le a|W|\alpha$ could be interpreted as
an amortized number of children of a set $F_j$. We will show in the proof of the following theorem that
$\alpha \le \frac{nd \cdot (c\delta)^2 \log^2 n}{k} \cdot \frac{1}{a|W|} + 1$.

\begin{theorem}
\label{thm:SelectorsConstructive}
$\cS(n,\omega,k,c)$ is a $k$-light $(n,\omega)$-selector  of length 
$m = O\left (\min\{n,(\omega+n/k) \text{ polylog } n\}\right )$ for a sufficiently large constant $c$, and is constructed in time polynomial in $n$.
%\begin{gather}
%$    m = O\left (\omega \log n\right ) \mbox{   for  }   k \geq \frac{n}{\omega} \mbox{, and } 
%%    \nonumber\\ 
%    m = O\left (\frac{n}{k} \log n\right ) \mbox{  for   } k < \frac{n}{\omega}$.
%% \end{gather}
\end{theorem}

}

%\begin{proof}

First we show that $\cS(n,\omega,k,c)$ is a $k$-light $(n,\omega)$-selector, for sufficiently large constant $c>0$.
First we consider more complex case of $n>a|W|\alpha$.
Let set $A\subseteq V$ be of size~between $\omega/2$ and $\omega$.
Suppose, to the contrary, that there are less than $\omega/4$ elements in $A$ hit by some sets in $\cS(n,\omega,k,c)$.
It implies that there is a subset $B\subseteq A$ of 
size $\omega/4+1$ such that none of the elements in~$B$
is hit by sets from $\cS(n,\omega,k,c)$.
%that is, $F(i)\cap A\ne \{v\}$,
%for each $v\in C$ and $1\le i\le m|W|$.

\noindent
\textsf{Claim:}
Every $w\in N_G(B)$ has more than 
$c\delta$ neighbors in~$A$, 
where $c\delta$ is a disjunctness parameter of $\cM$.
%for a sufficiently large constant $c$. 

The proof is by contradiction.
Assume, for simplicity of notation, that $w\in W$ 
is the $w$-th element of set~$W$.
Suppose, to the contrary, that
there is $w\in N_G(B)$ which has
at most $c\delta$ neighbors in~$A$,
that is, $|N_G(w)\cap A|\le c\delta$.
By the fact that $\cM$ is a $c\delta$-disjunct  
superimposed code, for $a=O((c\delta)^2 \log^2 n)$,
%$(n,c\delta\frac{k}{k-r+1})$-strongly-selective family 
we have that, for every $v\in N_G(w)\cap A$, the equalities
\[
F_{w\cdot a+y}\cap A = 
(M_y\cap N_G(w)) \cap A =
M_y\cap(N_G(w)\cap A) =
\{v\} 
\]
hold, for some $1\le y\le a$.
This holds in particular for every $v\in B\cap N_G(w)\cap A$.
There is at least one such~$v\in B\cap N_G(w)\cap A$, 
because set $B\cap N_G(w)\cap A$ is nonempty 
due to $w\in N_G(B)$ and $B\subseteq A$.
The existence of such~$v$ is in contradiction with the choice of set~$B$.
More precisely, $B$ contains only elements which are not hit
by sets from $\cS(n,\omega,k,c)$, but $v\in B\cap N_G(w)\cap A$ 
is hit by some set $F_{w\cdot a+y}$, thus is also hit by some children $S_j\in \cS(n,\omega,k,c)$ of $F_{w\cdot a+y}$.
This makes the proof of the Claim complete.

Recall that $|B|=\omega/4+1$.
By dispersion, the set $N_G(B)$ is of size larger than $(1-\ep)|W|$, 
hence, by the Claim above, the total number of edges 
between the nodes in $A$ and 
$N_G(B)$ in graph $G$ is larger than 
\[
(1-\ep)|W|\cdot c\delta = 
(1-\ep)\Theta((\omega/4+1)d/\delta) \cdot c\delta >
\omega d \ ,
\]
for a sufficiently large constant~$c$. 
This is a contradiction, since the total number of edges
incident to nodes in~$A$ is at most $|A|\cdot d=\omega d$. 
It follows that $\cS(n,\omega,k,c)$ is a $k$-light $(n,\omega,k)$-selector, for a sufficiently large constant $c$.

Before estimate the length $m$ of this selector, note that the union of all sets $F_j$ in the case $n> a|W|\alpha$
is at most $a\cdot (|V|\cdot d)$, because an element in some $F_j$ corresponds to some edge in the disperser
and repeats at most as many times as the number of rows $a$ in the superimposed code $\cM$.
Hence, the amortized number of children $S\in \cS(n,\omega,k,c)$ of a set $F_j$, parameter $\alpha$, is 
at most 
\[
\frac{a\cdot (|V|\cdot d)}{k} \cdot \frac{1}{a|W|} + 1 
%= 
%\frac{nd \cdot (c\delta)^2 \log^2 n}{k} + 1
\ .
\]

The length $m$ of this selector is thus at most
\begin{eqnarray*}
\min\{n,a|W|\alpha\} 
&=&
O\left(\min\left\{n,\delta^2\log^2 n\cdot
\omega d/\delta  + \frac{nd \cdot (c\delta)^2 \log^2 n}{k} \right\}\right)\\
&=&
O\left(\min\left\{n,(\omega + n/k)\polylog n\right\}\right)
\ ,
\end{eqnarray*}
since $d=O(\polylog n)$ and $\delta=O(\log^3 n)$.

The case $n\le a|W|\alpha$ is clear, since each element $i$ in a set~$A$ of
size~between $\omega/2$ and $\omega$ occurs as a singleton in some selector's set, mainly in~$S_i$.
%\end{proof}

\section{Algorithms simulations}\label{Sec:Sim}
In order to evaluate efficiency of developed protocols, we performed simulations for both new and existing algorithms and compared the results. We analyzed the impact of the execution length, system size and injection rates on the queue sizes and channel restrain.
Experiments were limited to a setting strictly based upon the model.

We collate Adaptive and Full-sensing versions of the \algoname{12-O'clock} algorithm with Backoff exponential and polynomial algorithms. Additionally, we take into account acknowledgment-based algorithms: \algoname{Round-Robin} and $8$-light \algoname{Interleaved-Selectors}.
Our main simulation goals are to analyze and compare across the
considered protocols: 

\textbf{(i) General workflow} for stable injection rates;

\textbf{(ii) Critical injection rates}, that is - the lowest injection rates where queue size or latency show dependency on the number of rounds passed (because practically time-dependent behavior indicates instability);

\textbf{(iii) Channel restrain} below critical injection rates, so that channel restrain in stable executions could be evaluated.

A digest of the obtained results is presented in Figures~\ref{QbyRa}-\ref{EQ}.
Experiment results are presented without error bars to improve clarity, as several graphs are present on each figure. 
Each recorded result is an average of 120 experiments of one million rounds each.

\subsection{Implementation}

We have implemented \algoname{12 O'clock} adaptive and full-sensing versions, 8-light \algoname{Interleaved-Selectors}, \algoname{Round-Robin}, \algoname{Backoff} exponential, linear and square polynomial versions in Java and Julia languages. Their relationship to colors used on plots in Figures~\ref{QNbyRo}-\ref{EQ} is described in 
Figure~\ref{LP}.

\paragraph*{\bf Backoff protocols}
In general we follow the model from \cite{hastad}. That is, 
we use synchronous model
%, we have 
with a message length limited to the length of the transmission phase in a single round, and 
we do not terminate undelivered messages. However, we 
employ 
%have 
an upper bound on Backoff counter, as in real-world applications.
A station attempts to transmit and learns about transmission success or failure within a single round. 
%The starting window size is equal to 2.
Analogously to the description in \cite{hastad}, 
the limits on the Backoff contention window size function (without window reset) are
$2^i$, $2i$, $2i^2$ accordingly for exponential, linear and square polynomial
versions of Backoff protocol, where $i$ denotes the unsuccessful transmission counter. The size of the window is limited by $2048$, as the biggest system size studied in this work is equal to $32$. That allows to protect protocols from
unnecessary increase of the window size and thus 
improves their worst-case stability.
\paragraph*{\bf Acknowledgment-based protocols}
\algoname{Round-Robin} protocol allows any station $i$ to transmit alone in rounds
$i$ modulo~$n$.
$8$-light \algoname{Interleaved-Selectors} are based on randomly generated binary matrices, tested to satisfy the definition
of $k$-light $(n,\omega)$-selector.
%properties of $k$-light $(n,\omega)$ selectors according to protocol definition.
%
%Note that \algoname{Interleaved-Selectors} randomized nature of creation makes it generation sensitive and results would differ for different selectors.

\paragraph*{\bf Adversary}
An adversary is defined by three parameters used at each round $r$: injection rate $\rho$ -- the probability that an adversary will have one more packet in its stock, burst-probability $p$ -- the probability of adversary making a decision to inject all of the stock packets
at once, and finally the stock size limit $b$ -- a constant forcing the adversary to inject all of its stock packets once the stock size is equal to $b$. For each packet decided to be injected to the system,
the adversary selects a station $S_i$ for injection with probability $P_i$, where $i \in \{1, 2, 3, \ldots, n\}$: $P_1 = P_2 = \frac{1}{3} + \frac{1}{3n}$; $P_{i > 2} = \frac{1}{3n}$.

Injection rate $\rho$ and burst-parameter $p$ have values in $(0,1)$.
%between 0 and 1. 
The burst-probability models the adversary injection behavior: between rare bursts of large numbers of packets (close to 0) and steady flow (close to 1). The stock-size $b$ is a constant equal to $256$, basing on operational buffer size limits. 
%In experimental way 
After performing some preliminary experiments for different values of $p$,
we have chosen $p=0.5$ for this presentation -- it occurred not to influence the system performance as much as we had expected.
%and the number of rounds - one million as default.

\subsection{Boundaries on stable injection rates}

\subsubsection{Boundaries measurement}

\begin{figure*}[t!]
    \begin{subfigure}[t]{0.5\textwidth}
	\includegraphics[height=1.16in]{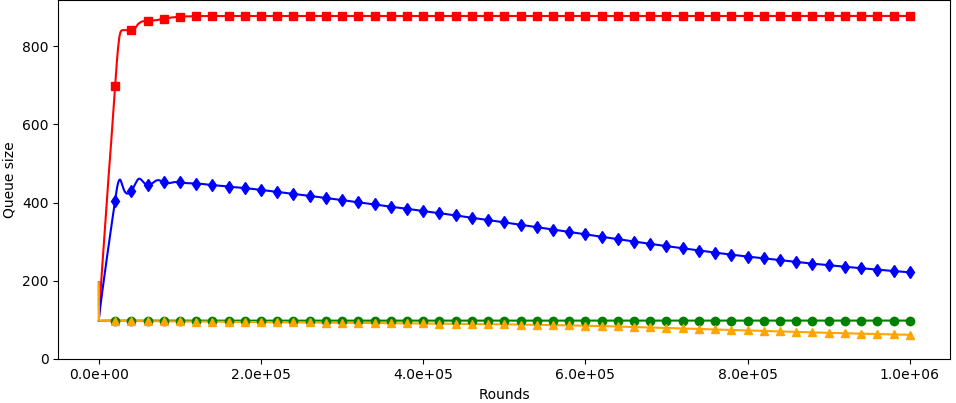}
	\caption{\algoname{12 O'clock} full-sensing protocol queues against injection rates $\rho = 0.968$, started 
		%in dense interval 
	with queues in each 
	%of the 
	station equal to $q=96$.}
	\label{QbyRa}
    \end{subfigure}%
    ~ 
    \begin{subfigure}[t]{0.5\textwidth}
    \includegraphics[height=1.16in]{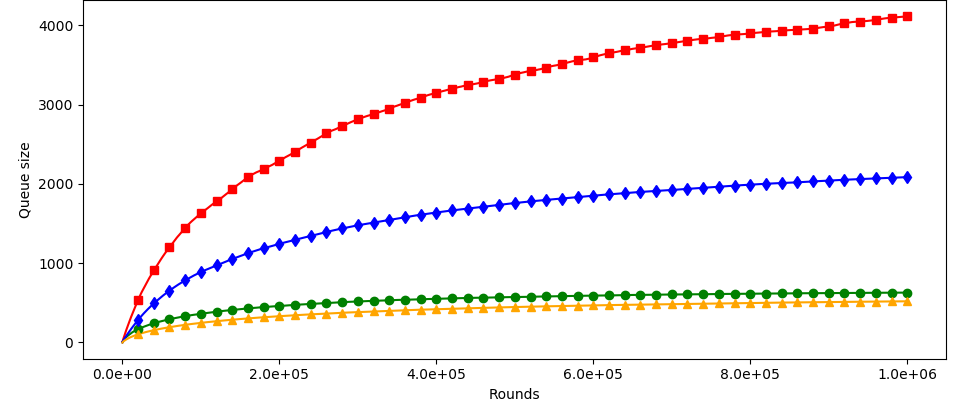}
	\caption{Backoff exponential protocol queues when run against injection rates $\rho = 0.9$, starting with empty starting queues.}
	\label{QbyRb}
    \end{subfigure}
    ~
    \begin{subfigure}[t]{1 \textwidth}
	\includegraphics[height=0.5in]{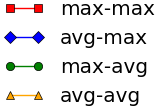}
	\caption{Relation between markers, colors and measurements of queues}
	\label{LM}
    \end{subfigure}
    \caption{Protocols
	 during 1 mln rounds for a system size $32$ against flat packet distribution.}
\end{figure*}
We took into consideration several measurements of queues of a protocol (at round $r$): 
%, that is: 
a maximal queue size of a single station occurring up to round $r$ (max-max); an average, taken over $r$ rounds, of a maximal queue size of stations at a round 
(avg-max); a maximal over $r$ rounds of an average queue size of all  stations at a round 
%occurring up to the round $r$ 
(max-avg); and finally, an average over $r$ rounds of an average queue size of all stations in a round (avg-avg). 
%{\it (Figure \ref{LM})}
%of all of the single rounds for each round (ave-ave). 
Note that the max-max and max-avg measurements can not decrease and are always divergent against an adversary without burstiness limit (with probability $1$).

Comparison of those measurements for \algoname{12 O'clock} full-sensing and \algoname{Backoff} exponential protocols is shown in Figures~\ref{QbyRa} and \ref{QbyRb} for system size $n=32$ against flat packet distribution. 
The \algoname{12 O'clock} full-sensing protocol started with $3n=96$ packets per station (i.e., the total system queue size equal to $3n^2$) stabilizes against injection rate $\rho = 0.968$, which is slightly smaller than the theoretical stability boundary $\rho = \frac{31}{32} = 0.96875$, for all four measurements and its both avg-max and avg-avg measurements decrease after handling the starting queues burst
%. This correlates with theoretical stability boundary $\rho = \frac{31}{32} = 0.96875$ implying stability for experiment injection rates $\rho=0.96$ 
(Figure \ref{QbyRa}). 
The adaptive version of the protocol behaves in a similar way when started with full queues against any injection rate $\rho < 1$ and is %convergent 
also bounded against injection rate $\rho = 1$. 

For exponential \algoname{Backoff} protocol,
c.f., Figure \ref{QbyRb}, 
the max-max and avg-max measurements seem to diverge, while
the other two measurements converge but to much higher values
than in case of \algoname{12 O'clock} full-sensing protocol.

Based on the above results, we have chosen the avg-max measurement for further comparison of protocols. This is bacause when considering
other three ways of measuring: max-max is highly volatile for the randomized protocols (and thus it would not be fair for comparison
randomized and deterministic protocols) while avg-avg and max-avg 
%amortize 
do not envision 
the worst case scenario we are focused on in this work.
%most interested in. 
Note that the avg-avg measurement, studied in \cite{hastad} 
and in many other previous papers considering stochastic injections, may yield stability while having single queues
many times above the studied average.
%may be stable with queues of a single station being many times larger than average. 

\subsubsection{Boundaries for system sizes $n \in \{4,5,\ldots,32\}$}

In order to see how system queues behave for different system sizes, we have combined simulation results for system sizes $n \in \{4,5,\ldots,32\}$ on a single plot (Figure \ref{QNbyRo}). %Detailed review of protocols behavior for a single system size equal to $32$ is available in Appendix \ref{appendix:n32}. 

We have excluded the full-sensing version of \algoname{12 O'clock} since its results are similar to the adaptive version in most of the considered scenarios. In this section we discuss the combined boundaries in Figure \ref{QNbyRo}
and the stable injection rates depicted in Figure \ref{stable} defined as minimal injection rates $\rho$ for system size $n$ required to make the value of avg-max measurement to exceed the constant value $\delta=1024$.

Our results are similar in shape with results from \cite{hastad}, with difference in values. It could be explained by the following:
we implemented more adversarial behavior instead of Poisson distribution, used 1 million instead of 10 millions iterations for experiment length, avg-max measurement instead of avg-avg (to 
better capture worst-case behavior), and finally we set-up a maximal window size limit to comply with real applications of Backoff. Specifically, the maximal window size limit improves the efficiency of exponential \algoname{Backoff} protocol in comparison to other versions of \algoname{Backoff} protocols in our context.

\begin{figure*}[t!]
    \begin{subfigure}[t]{0.5\textwidth}
\includegraphics[height=1.63in]{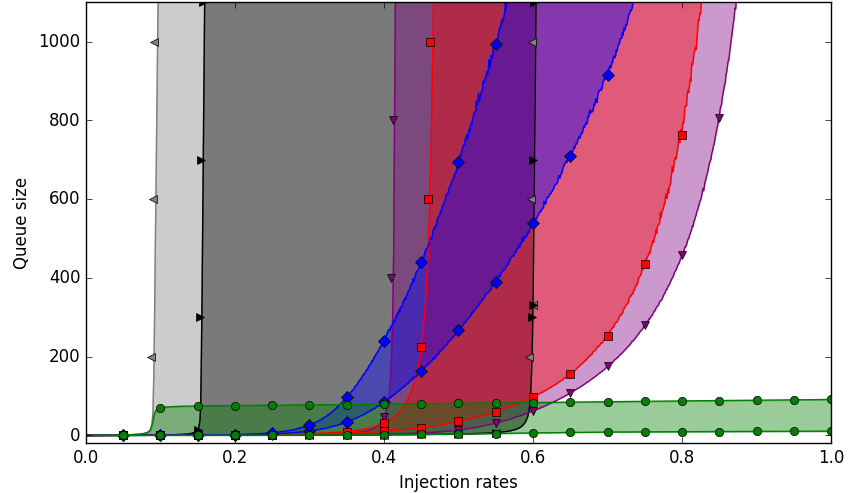}
  \caption{%
  	%Ave-max 
  	Queue size by injection rates 
  	%for system size $n \in \{4,5,\ldots,32\}$ against injection rates 
  	$\rho \in \left[0,1\right]$.}
  % after 1 mln rounds.}
  \label{QNbyRo}
    \end{subfigure}%
    ~ 
    \begin{subfigure}[t]{0.5\textwidth}
	\includegraphics[height=1.63in]{stability_bound.png}
	\caption{Stable injection rates by system size.
		%, defined as minimal injection rates required to make ave-max value to become greater than $\delta = 1024$ stable injection rates against system size 
		%$n \in \{4,5,\ldots,32\}$.
		} 
	%after 1 mln rounds.}
	\label{stable}
    \end{subfigure}
    ~
    \begin{subfigure}[t]{1 \textwidth}

	\includegraphics[height=0.5in]{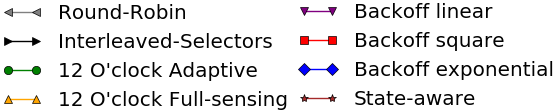}
	\caption{Relation between markers, colors and protocols.}
	\label{LP}
    \end{subfigure}
    \caption{System size influence over queue size for system size $n \in \{4,5,\ldots,32\}$.}
\end{figure*}

\algonamebf{Acknowledgment-based} protocols have the same \algoname{Round-Robin} implementation of selectors for system sizes $n \in \{4,5,\ldots,15,17,18\}$, because we were unable to generate better $(n,\omega)$-selectors for $\omega \leq n/2$ required for \algoname{Interleaved-Selectors} 
in
%for 
those cases. It follows that their plots overlap. The best achieved stability bound 
%is common and 
is around $\rho = 0.6$ for system size $n=4$,
and it gradually decreases with the increasing system size
	(in a pace resembling hyperbola). 
%It can be explained as flatter packet distribution of the chosen adversary for small systems and \algoname{Round-Robin} being particularly good in countering it. Partial selector overlap for those protocols implies the same combined with this observation implies the same best-case.
On the other hand, we can observe an improvement of \algoname{Interleaved-Selectors} over \algoname{Round-Robin} protocol for bigger systems: 
for some system sizes its stability range is even a few times bigger
than the stability range of \algoname{Round-Robin}.
%the worst result is around $\rho=0.15$ in comparison to $\rho=0.08$ by Round-Robin. 
The irregular shape of \algoname{Interleaved-Selectors} 
stable injection rates in Figure~\ref{stable} is 
%connected to 
caused by selectors 
%selective families 
being generated independently for each (larger) system size,
which leaves a clear scope for further optimization 
	of the quality of selectors.
	%and search for even better quality selectors.}

%Note that despite of worse stability bounds, acknowledgment-based algorithms are literally tapes with binary information whether to-transmit or not-to-transmit, which makes them better fault-tolerant algorithms.

%
\textbf{\algonamebf{Backoff}} protocols are system size dependent, and the following two interesting phenomena can be observed. First, the lower rank polynomial/function of  \algonamebf{Backoff} protocol the wider extremes in stable injection rates it achieves for different system sizes, e.g., 
$\rho \in [0.55, 0.7]$ for exponential version versus $\rho \in [0.45, 0.8]$ for square and $\rho \in [0.4, 0.85]$ for linear version, c.f., the values of $\rho$ at the top boundaries of corresponding regions in Figure~\ref{QNbyRo}. The second observation is that for smaller system sizes the protocols with lower rank function achieve higher stable
injection rates while for larger systems (starting from some size specific for the considered functions) the tendency is opposite
c.f., Figure~\ref{stable}.%

\remove{%
\textbf{\algonamebf{Backoff} polynomial} protocols show better results for smaller systems. Lower rank polynomial protocol has greater extremes in comparison to its square counterpart: $\rho \in [0.4, 0.85]$ against $\rho \in [0.45, 0.8]$ for square version. We 
%may 
hypothesize that this tendency would continue for higher ranks of polynomial protocols.
}

%\textbf{\algonamebf{Backoff} exponential} and
\textbf{\algonamebf{12 O'clock} adaptive} protocols show the most system-size independent behavior, with \algoname{12 O'clock} adaptive protocol being a champion in terms of queue size stability, c.f., Figure~\ref{stable}. Note also that the stable injection rates of \algoname{12 O'clock} full-sensing protocol improve with increasing system size.

\remove{

\subsubsection{Boundaries comparison -- conclusions}

In general reviewed protocols has better results for smaller system sizes.

8-light \algoname{Interleaved-Selectors} stable injection rates vary, being dependent on quality of generated selectors and their adaptation to the current adversary strategy. Which leaves a possibility for improvement by providing better selectors.

\algoname{Backoff} protocols  of higher polynomial rank improve stability boundary for big systems in cost of lowering it for small system size, hence they are system size dependent.

\algoname{12 O'clock} adaptive algorithm is the only stable protocol of all with queue size limited all the time for all system sizes with full-sensing version following it.

}

\subsection{Channel restrain and stability}

\begin{figure*}[t!]
    \begin{subfigure}[t]{0.5\textwidth}
\includegraphics[height=1.47in]{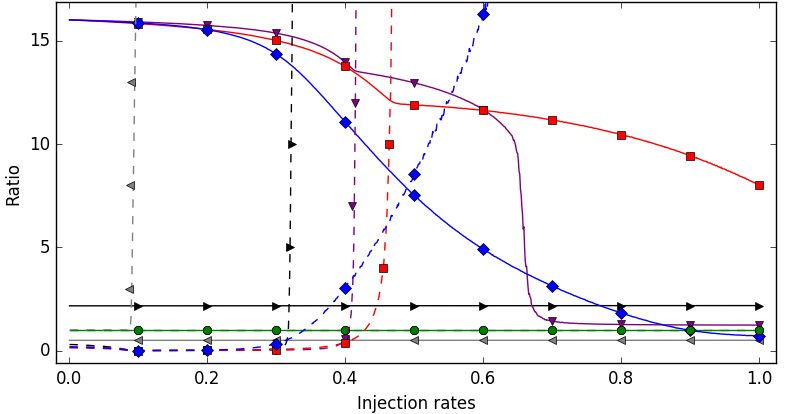}
\caption{Ratios of the average round channel accesses 
		%usage 
		(solid lines) and the 
		%ave-max 
		queue size (dotted lines) of a protocol to 
	the corresponding performance of \algoname{12 O'clock} adaptive protocol.} 
	%against injection rates $\rho \in \left[0,1 \right]$, for system size $n=32$.
	
  \label{R}
    \end{subfigure}%
    ~ 
    \begin{subfigure}[t]{0.5\textwidth}
\includegraphics[height=1.5in]{e_q_a2.png}
\caption{Average round channel accesses against  
	%average (ave-max) 
	queue sizes in logarithmic scale,  with markers set every $0.1$.}
  \label{EQ}
    \end{subfigure}
    \caption{Average round channel access against queue sizes for injection rates $\rho \in \left[0,1 \right]$ and system size $n=32$.}
\end{figure*}

In order not to discriminate randomized \algoname{Backoff} protocols,
which may obtain large channel access
%power consumption 
peaks from time to time
(unlike our deterministic protocols that ensure bounded channel access at any round), we 
%measured average energy consumption of studied protocols by 
count how many stations were switched-on on average (over rounds) to evaluate channel restrain. 
%As \algoname{12 O'clock} adaptive protocol power consumption is constant, 
In Figure~\ref{R} we show the ratios of channel accesses and queue size of the considered protocols to the corresponding performances of  
	\algoname{12 O'clock} adaptive protocol.

Observe that for \algoname{Backoff} protocols the channel access levels are relatively high and close to the system size when these protocols work within their stable boundaries. 
In contrary, the \algoname{12 O'clock} and \algoname{Acknowledgment-based}  protocols channel access is low and upper bounded by a constant (i.e., independently on all system parameters).

In order to better illustrate bi-criteria comparison of protocols,
we compare them with the \algoname{State-aware} protocol, which has full knowledge about all of the queues in the beginning of each round and transmits a packet from a station with the biggest queue. 
This protocol models close-to-optimal queues and channel access for
given injection patterns. 
Figure \ref{EQ} presents our results in logarithmic scale: 
more efficient protocols in restrain-queue dimensions are closer to the \algoname{State-aware} protocol, what makes \algoname{12 O'clock} adaptive protocol our champion for all injection rates and 8-light \algoname{Interleaved-Selectors} to be the second for injection rates lower than $\rho = 0.3$. 
(Full-sensing version of the \algoname{12 O'clock} protocol has been omitted from the graphs as it behaved similarly
to its adaptive version in our experiment.) 

\end{appendices}

\end{document}